\newcommand{\Set}{\mathbf{Set}}
\newcommand{\Alg}[1][\Sigma,E]{\mathbf{Alg}(#1)}
\newcommand{\cat}{\mathbf{C}}
\newcommand{\EM}{\mathcal{E\!M}}
\newcommand{\po}{\mathscr{P}}
\newcommand{\pof}{\mathscr{P}_f}
\newcommand{\Id}{\mathrm{Id}}
\newcommand{\di}{\mathscr{D}}
\newcommand{\poly}[1][\Sigma]{\mathsf{H}_{#1}}
\newcommand{\free}[1][\Sigma]{\mathsf{F}_{#1}}
\newcommand{\Th}{\mathbb{T}}
\newcommand{\algb}[1][A]{\mathcal{#1}}
\newcommand{\swp}{\mathsf{swap}}
\newcommand{\unit}{\left[ 0,1\right]}
\newcommand{\id}{\mathrm{id}}
\newcommand*{\ccld}[1]{{\tikz[baseline=(X.base)]\node(X)[draw,shape=circle,inner sep=0]{\text{\scriptsize\strut$#1$}};}}
\newcommand{\prepare}[1][\algb]{\delta_{#1}}
\newcommand{\evaluate}[1][\algb]{\gamma_{#1}}
\newcommand\te{\psi}
\newcommand{\prob}{\nu}
\newcommand{\ari}[1]{|{#1}|}
\newcommand{\semi}{\mathbb{S}}
\newcommand{\ms}{\mathbf{M_{\semi}}}
\newcommand{\mzt}{\mathbf{M_{\mathbb{Z}_2}}}
\newcommand{\Var}{\mathrm{Var}}
\newcommand{\Arg}{\mathrm{Arg}}
\newcommand{\overleftrightsmallarrow}{\mathpalette{\overarrowsmall@\leftrightarrowfill@}}
\newcommand{\overrightsmallarrow}{\mathpalette{\overarrowsmall@\rightarrowfill@}}
\newcommand{\overleftsmallarrow}{\mathpalette{\overarrowsmall@\leftarrowfill@}}
\newcommand{\overarrowsmall@}[3]{%
  \vbox{%
    \ialign{%
      ##\crcr
      #1{\smaller@style{#2}}\crcr
      \noalign{\nointerlineskip}%
      $\m@th\hfil#2#3\hfil$\crcr
    }%
  }%
}
\def\smaller@style#1{%
  \ifx#1\displaystyle\scriptstyle\else
    \ifx#1\textstyle\scriptstyle\else
      \scriptscriptstyle
    \fi
  \fi
}
\newcommand{\mtr}[1]{\overleftrightsmallarrow{#1}}
\newcolumntype{P}[1]{>{\centering\arraybackslash}p{#1}}
\newcolumntype{M}[1]{>{\centering\arraybackslash}m{#1}}
\author{Louis Parlant}{University College London}{l.parlant@cs.ucl.ac.uk}{}{}
\author{Jurriaan Rot}{Radboud University}{jrot@cs.ru.nl}{}{}
\author{Alexandra Silva}{University College London}{alexandra.silva@ucl.ac.uk}{}{}
\author{Bas Westerbaan}{University College London}{bas@westerbaan.name}{}{}
\authorrunning{Parlant, Rot, Silva and Westerbaan}
\keywords{monoidal monads, algebraic theories, preservation of equations}
\begin{document}

%% Title information
\title{Preservation of Equations by Monoidal Monads}         %% [Short Title] is optional;

\maketitle

\begin{abstract}
If a monad~$T$ is monoidal, then operations on a set~$X$ can be lifted
canonically to operations on~$TX$. In this paper we study structural
properties under which~$T$ preserves equations between those
operations. It has already been shown that any monoidal monad preserves
linear equations; \emph{affine} monads preserve \emph{drop} equations
(where some variable appears only on one side, such as~$x\cdot y = y$)
and \emph{relevant} monads preserve \emph{dup} equations (where some
variable is duplicated, such as~$x \cdot x = x$). We start the paper by
showing a converse: if the monad at hand preserves a drop equation, then
it must be affine. From this, we show that the problem whether a given
(drop) equation is preserved is undecidable. A converse for relevance
turns out to be more subtle: preservation of certain 
dup equations implies a weaker notion which we call $n$-relevance. Finally,
we identify a subclass of equations such that their preservation is
equivalent to relevance.
\end{abstract}

\section{Introduction}

Monads are fundamental structures in programming language semantics as they encapsulate many common side-effects such as non-determinism, exceptions, or randomisation. Their structure has been studied not only from an operational point of view but also from a categorical and algebraic perspective. One question that has attracted much attention is that of monad composition: given two monads $T$ and $S$, is the composition $TS$ again a monad? The answer to the question in full generality is subtle and examples have shown that even in simple cases the correct answer might be surprisingly tricky to find and prove. 
This subtlety is illustrated, for instance, by Klin and Salamanca's result that the powerset monad does not compose with itself~\cite{KlinPP}, invalidating repeated claims to the contrary in the literature. 

%Famously, Manes and Mulry~\cite{manes2007monad} proposed a specific monad structure for the composition of the powerset monad with itself and later remarked that this was flawed~\cite{manes2008monad}. Over 10 years later, Klin and Salamanca~\cite{KlinPP} proved that in fact no monad structure exists for this composition. The subtlety is further illustrated by the fact that Klin himself had a year earlier proposed a monad structure intended to fix the bug in Manes and Mulry's original paper, just to find out this was also flawed. 

Monad composition can be viewed in different ways. On the one hand, a sufficient condition is given by the existence of the categorical notion of a distributive law between the monads. % or, equivalently, the lifting of the outer monad to the category of Eilenberg--Moore algebras of the inner monad in the composition. 
 On the other hand, if one takes into account the algebraic structure of the inner monad, which can be presented in a traditional operations plus equations fashion, one can turn the original question into a preservation question:  does the outer monad preserve all operations and equations of the inner algebra? More generally, starting from a set $A$ with some additional algebraic structure, the question arises naturally: is this structure preserved by the application of a monad? For example, consider a set $A$ that has the structure of a group with binary operation $\cdot$ and unit $1$. As we apply the \emph{powerset monad} $\po$ the set $\po A$ of subsets of $A$, is $\po A$ is again a group? Can $\cdot$ be interpreted as a binary operation on elements of $\po A$? Which subset do we identify with the constant $1$? In a nutshell, does our monad preserve algebraic features, i.e., the operations and equations defining the structure of $A$?

This question has already been studied in the late 50s, albeit not in categorical terms. In~\cite{gautam1957validity}, Gautam introduces the notion of \emph{complex algebra}---the transformation of a $\Sigma$-algebra with carrier $A$ into a $\Sigma$-algebra on $\po A$,
where $\Sigma$ is an arbitrary signature. 
Gautam gives a range of positive and negative results for equation preservation. In particular, he shows that commutativity of a binary operation ($x \cdot y = y \cdot x$) and unitality ($x \cdot 1= x$) are unconditionally preserved by $\po$. These are examples of \emph{linear} equations, in which each variable appears exactly once on each side.  Negative results are given for non-linear equations. First, if variables appear more than once (we call these \emph{dup} equations; e.g., $x \cdot x = x$), this is not preserved by $\po$.  Second, if a variable appears on one side of the equation only (we call these \emph{drop} equations, for instance $x \cdot 0 = 0$), then the powerset does not preserve it either.

In this paper, we examine the question of equation preservation at the general level 
of a \emph{monoidal monad} $T$.
Monoidal monads give a canonical lifting to $\Sigma$-algebras, of which Gautam's complex
algebra construction for $\po$ is a special case. 
We provide a comprehensive characterisation of monad classes and equations that are preserved inside a certain class. 

Part of this question has been studied in the literature: in \cite{manes2007monad}, Manes and Mulry show that for a monad to preserve linear equations, it suffices to have a \emph{symmetric monoidal} structure. This argument is further developed in \cite{layers}, where the authors give sufficient conditions on $T$ for preserving the types of equations outlined by Gautam.  In particular, it was shown that so-called  \emph{relevant} monads preserve {\it dup} equations, and \emph{affine} monads preserve \emph{drop} equations. It remained open whether relevance or affineness were necessary conditions for preservation. We now settle this question and provide an extensive characterisation of equations preserved by classes of monoidal monads. 

We start with preliminaries on monoidal categories and monads (Section~\ref{sec:prelim}), and continue with a technical section recalling how monoidal monads offer a canonical lifting of all algebraic signatures (Section~\ref{sec:preserve}). We then present the main contributions of the present paper: 
\begin{enumerate}[leftmargin=*]
\item We prove a monoidal monad preserves strict-drop equations if and only if it is affine (Section~\ref{sec:affine}). 
\item We characterise a large class of \textit{dup} equations, for which preservation is equivalent to the monad being relevant. We then prove that for a restricted class of \textit{dup} equations preservation requires a weaker version of relevance, which we call $n$-relevance (Section~\ref{sec:dup}).
\item Orthogonally, we prove a more algorithmic result: given a monad and an equation, we show that the general problem of preservation is undecidable (Section~\ref{sec:affine}).
\end{enumerate}
A summary of the classes of equations used to derive the preservation results and concrete examples of monads and the class in which they fall appear in Figure~\ref{monadtable}.

We remark that the provided necessary conditions on preservation of equations can be used in contrapositive form to show that certain monads do not preserve certain equations. 
 For instance, from Theorem~\ref{thm:drop-affine}, the main result of Section~\ref{sec:affine}, we know that if a monad is not affine, it does not preserve any drop equations in general. This generalises Gautam's result for $\po$ and provides a range of other examples that do not preserve drop equations: the maybe monad $X + 1$, the monad $M \times X$ for $M$ a non-trivial monoid, and the multiset monad $\ms$, for $\semi$ a non-trivial semiring.
Similarly, the results in Section~\ref{sec:dup} allow us to show that certain (in fact, many) monads
do not preserve dup equations, just by showing that they are not relevant. 

\section{Preliminaries}
\label{sec:prelim}
%\subsection{Category Theory}

We recall basic notions related to monoidal categories, monads and algebras for a functor. 

\subparagraph*{Cartesian monoidal categories. } 
%
%\begin{definition} 
A \emph{monoidal category} consists of a category $\cat$ equipped with:
%\begin{itemize}
%\item 
a bifunctor $\otimes \colon \cat \times \cat \to \cat$,
%\item 
an object $I$ called \emph{unit} or \emph{identity},
%\item 
a natural isomorphism $\alpha_{X,Y,Z}\colon  (X\otimes Y)\otimes Z \to X\otimes (Y\otimes Z)$ called \emph{associator},
%\item 
a natural isomorphism $\rho_{X}\colon  X \otimes I \to X$ called \emph{right unitor}, and
%\item 
a natural isomorphism $\rho'_{X}\colon  I \otimes X \to X$ called \emph{left unitor},
%\end{itemize}
where $\alpha, \rho$ and $\rho'$ are subject to coherence conditions~\cite{mac2013categories}. 
A \emph{Cartesian monoidal category} is a monoidal category whose monoidal structure is given by product (denoted by $\times$) 
and a terminal object (denoted by $1$).
%\end{definition}
A category with finite products forms a Cartesian monoidal category, by making a choice of products for each pair of objects. 
Any Cartesian monoidal category $\cat$ is symmetric monoidal, witnessed by a natural 
isomorphism denoted by $\swp_{X,Y} \colon X \times Y \rightarrow Y \times X$. For a product $\prod_{i \in I} X_i$, we
denote the projections by $\pi_j \colon \prod_{i \in I} X_i \rightarrow X_j$. Given an object $X$, we define
the \emph{diagonal} $\Delta_X$ by pairing: $\Delta_X = \langle id_X, id_X \rangle \colon X \rightarrow X \times X$. 
Further, for a functor $T \colon \cat \rightarrow \cat$ we let $\chi_{X,Y} = \langle T\pi_1, T\pi_2 \rangle 
\colon T(X \times Y) \rightarrow TX \times TY$. 
Throughout this paper, we will mainly focus on the Cartesian monoidal category  
$\Set$ of sets and functions. 

%\begin{example}
%The category $\Set$ together with the cartesian product $\times$ and its unit $1$ forms a monoidal category. Associators and unitors are given by natural isomorphisms such as $(X\times Y)\times Z \simeq X\times (Y\times Z)$ and $X \times 1 \simeq X$
%\end{example}

%
%
%
%monoidal category, cartesian monoidal category

\subparagraph*{Monads} 
%
%Monads will play a central role in the later chapters. In this section, we define them and their attributes, and give some introductory examples.
%\begin{definition}
A \emph{monad} on a category $\cat$ is a triple $(T,\eta,\mu)$ consisting of an endofunctor
$T \colon \cat \rightarrow \cat$, a natural transformation
$\eta \colon \Id \Rightarrow T$ called \emph{unit} and a natural transformation 
$\mu \colon TT \Rightarrow T$ called \emph{multiplication}, 
%\begin{itemize}
%\item An endofunctor $T \colon \cat \to \cat$
%\item A natural transformation $\eta \colon \Id \to T$ called \emph{unit}
%\item A natural transformation $\mu \colon TT \to T$ called \emph{multiplication}
%\end{itemize}
%subject to the following commuting diagrams, for $X$ any object of $\cat$:
such that $\mu \circ T \mu = \mu \circ \mu T$ and $\mu \circ \eta T = \id = \mu \circ T \eta$. 
%
%\begin{adjustbox}{max width=\textwidth}
%\begin{tabular}{l  r}
%\begin{minipage}[t]{4cm}
%\begin{equation}
%\xymatrix{
%TTTX \ar[d]_{\mu_{TX}} \ar[r]^{T\mu_X} & TTX \ar[d]^{\mu_{X}}\\
%TTX \ar[r]_{\mu_X} & TX 
%}
%\end{equation}
%\end{minipage}
%&
%\begin{minipage}[t]{4cm}
%\begin{equation}
%\xymatrix{
%TX \ar[d]_{T\eta_{X}}  \ar[dr]^{\id_{TX}}  \ar[r]^{\eta_{TX}} & TTX \ar[d]^{\mu_{X}}\\
%TTX \ar[r]_{\mu_X} & TX 
%}
%\end{equation}
%\end{minipage}
%\end{tabular}
%\end{adjustbox}
%\end{definition}
%
%For the purpose of this study, we will consider monads on $\Set$. 
%
%\subsubsection{Examples}
\begin{example} [Monads] \label{ex:monads}
\begin{enumerate}[leftmargin=*]
\item 
The \emph{powerset monad} is given by the functor $\po \colon \Set \to \Set$, 
which maps a set $X$ to its set of subsets, together with unit $\eta$ mapping $x$ to $\{x\}$,
and $\mu$ given by union. 
This monad restricts to its finitary version with the functor $\pof \colon \Set \to \Set$ mapping a set to its finite subsets, and similarly to the \emph{non-empty powerset} $\po^+$.
%
%\item The unit maps an element $x$ to the singleton $\{x\}$.
%\begin{align*}
%\eta^\po_X \colon	 X &\rightarrow \po X \\
%				 x &\mapsto \{x\}
%\end{align*}
%\item The multiplication takes the union of sets.
%\begin{align*}
%\mu^\po_X \colon	\po\po X &\rightarrow \po X \\
%				V &\mapsto \bigcup \{ U \mid U \in V \}
%\end{align*}
%\end{itemize}
%\end{example}

%
%
%\begin{example}
%The \emph{List Monad} $(\List, \eta^\List, \mu^\List)$ is given by:
%\begin{itemize}
%\item $\List: \Set \to \Set$ maps a set $X$ to the set of all finite lists with elements in $X$. The action of $\List$ on morphism is pointwise: $\List (f [w_1, \dots, w_n]) = [f(w_1), \dots, f(w_n)]$
%\item The unit maps an element $x$ to the singleton list $[x]$
%\begin{align*}
%\eta^\List_X \colon	 X &\rightarrow \List X \\
%				 x &\mapsto [x]
%\end{align*}
%\item The multiplication concatenates a list of lists into a single list.
%\begin{align*}
%\mu^\List_X \colon	\List\List X &\rightarrow \List X \\
%				[[w_1, \dots ,w_{k_1}],\dots,[w_{k_{n-1}},\dots, w_{k_n}]] &\mapsto [w_1, \dots, w_{k_n}]
%\end{align*}
%\end{itemize}
%\end{example}

%\begin{example}
\item For a semiring $\semi$, let $\ms$ be the \emph{generalised multiset} monad defined as follows.
A multiset~$\xi \in \ms X$
    is a map~$X \to \semi$
    with finite support,
    also written as the formal sum~$\sum_x \xi(x) x$.
For $f\colon  X \to Y$
    we define~$(\ms f)(\xi)(y) = \sum_{x;\ f(x)=y} \xi(x)$.%
%    which boils down to~$(\ms f)(\sum_i s_i x_i) = \sum_i s_i f(x_i)$
  %      when writing the multisets as formal sums.
 The unit is defined by
%\begin{align*}
%    \eta(x)(y) = 
%    \begin{cases}
%               1               & \text{if } y = x\\
%               0  & \text{otherwise},
%           \end{cases} 
%\end{align*}
%    which is to say~
$\eta(x) = 1\cdot x$
    and the multiplication is given
    by~$\mu(\delta)(f) = \sum_x \delta(f) f(x) $,
viz.~$\mu(\sum_i s_i \sum_j t_{ij} x_{ij}) = \sum_{i,j} s_i t_{ij} x_{ij}$.
%\end{example}

%
%\begin{example}
%The \emph{Multiset Monad} $(M, \eta^M, \mu^M)$ is given by:
%\begin{itemize}
%\item $M: \Set \to \Set$ maps a set $X$ to the set of all finite multisets with elements in $X$, similar to subsets of $X$ but we now allow several instances of each element. $M$ acts pointwise on morphisms.
%\item The unit maps an element $x$ to the singleton $\{x\}$, which is the multiset containing one instance of $x$.
%\begin{align*}
%\eta^M_X \colon	 X &\rightarrow M X \\
%				 x &\mapsto \{x\}
%\end{align*}
%\item The multiplication takes the union of multisets.
%\begin{align*}
%\mu^M_X \colon	MM X &\rightarrow M X \\
%				V &\mapsto \bigcup \{ U \mid U \in V \}
%\end{align*}
%\end{itemize}
%
%For more precision, we will sometimes use natural numbers, or more generally a semiring $S$ to count multiplicity. A multiset $U \in MX$ is then a function $U: X \to S$.
%
%\end{example}

%\louis{maybe remove the subscripts $\eta^{\di}$ for consistency?}
%\begin{example}
\item The \emph{distribution monad} is given by $\di \colon \Set \rightarrow \Set$, $\di X=\{\prob \colon X\to\unit \mid \sum_{x\in X}\prob(x)=1, \prob\text{ with finite support}\}.$ 
The action on morphisms, unit, and multiplication are as in~$\ms$.
\item \label{ex:mtimesmonad}
For any commutative monoid~$M$,
    the functor~$X \mapsto M \times X$
    is a monad
    with~$\eta (x) = (1,x)$
    and~$\mu(v,(w,x)) = (vw, x)$.
%\end{example}

%\begin{example}
\item For a set $A$, $X \to X^A$ forms a monad. The unit is defined by $\eta_X(x)=(a \mapsto x)$, and the multiplication $\mu_X$ maps $s \in (X^A)^A$ to $(a \mapsto s(a)(a))$.
%\end{example}

%\begin{example}
%\item The \emph{maybe monad} maps $X \in \Set$ to $X+1$ ($\eta$ and $\mu$ are defined in the intuitive way).
\end{enumerate}
\end{example}

\subparagraph*{Monoidal monads} The notion of \emph{monoidal monad} captures a well-behaved interaction between
a monad and the monoidal structure of the underlying category. %We recall the definition 
%and basic associated properties. %, as they are a central object of our study.

Let $\cat$ be a Cartesian monoidal category. 
A \emph{(lax) monoidal functor} is an endofunctor $F\colon \cat\to\cat$ together with natural transformations $\psi_{X,Y}\colon  FX\times FY\to F(X\times Y)$ and $\psi^0\colon  1\to F1$ satisfying certain laws. 
A monoidal functor is called \emph{symmetric} if
$\psi_{Y,X} \circ \swp_{FX,FY} = F\swp_{X,Y} \circ \psi_{X,Y}$ for all $X,Y$. 
%\begin{equation}\vcenter{
%\xymatrix@C+1pc@R-1pc
%{
%FX\times FY\ar[r]^{\psi_{X,Y}}\ar[d]_{\swp_{FX,FY}} & F(X\times Y)\ar[d]^{F\swp_{X,Y}}\\
%FY\times FX\ar[r]_{\psi_{Y,X}} & F(Y\times X)
%}}\label{diag:monoidal:sym}\tag{SYM}
%\end{equation}
%
A monad $(T,\eta,\mu)$ on $\cat$ is called \emph{monoidal}
if $T$ is monoidal, the unit $\eta$ and multiplication $\mu$ are monoidal natural transformations, and the associated natural transformation 
$\psi_{X,Y}\colon  TX\otimes TY\to T(X\otimes Y)$ satisfies $\psi^0=\eta_1$.
Since the underlying category $\cat$ is symmetric monoidal, 
monoidal monads
are equivalent to commutative monads~\cite{kock,kock1970monads}, see also~\cite{jacobs1994semantics}. It follows 
that any monoidal monad on a symmetric monoidal category is symmetric.

%\begin{definition}
For any $n\geq 2$, we define $\te^n\colon  TX_1 \times \dots \times TX_n \to T(X_1 \times \dots \times X_n)$ as the $n$-ary version of $\te$, associatively constructed from the binary version.
%\end{definition}

\begin{example}\label{ex:monoidal} All the monads from Example~\ref{ex:monads} are monoidal, by defining $\psi$ as follows: 
\begin{enumerate}[leftmargin=*]
\item $\te_{X,Y} \colon \po X \times \po Y \to \po (X \times Y)$ is given by Cartesian product:
$\te_{X,Y}(U,V) = U \times V$.
\item $\te_{X,Y}  \colon {\ms X} \mathop\times {\ms Y} \to \ms {(X \times Y)}$ is
given by $\te_{X,Y}(\xi_1, \xi_2)=  \lambda (x,y) . \xi_1 (x) \cdot \xi_2(y)$. 
\item $\te_{X,Y} \colon \di X \times \di Y \to \di (X \times Y)$ is given by
$\te_{X,Y}(\nu_1, \nu_2) = \lambda (x,y). \nu_1(x) \cdot \nu_2(y)$. 
\item     $\te_{X,Y}\colon (M\times X) \times (M\times Y) \to M \times (X\times Y)$ is given by $\te_{X,Y}((v,x), (w,y)) = (vw, (x,y))$.
\item $\te_{X,Y}\colon X^A \times Y^A \to (X\times Y)^A$ is defined pointwise as $\te_{X,Y}(f,g)(a) = (f(a),g(a))$.
%\item $\te_{X,Y}\colon (X+1) \times (Y+1) \to (X\times Y)+1$ is defined as $\te_{X,Y}(x,y) = (x,y)$ if $x\in X$ and $y\in Y$, otherwise the pair is mapped to $\ast\in 1$.
\end{enumerate}
\end{example}
\subparagraph*{Algebraic Constructs.}
%
%\subsubsection{Algebraic Theories}
%
%\begin{definition}
 A \emph{signature} $\Sigma$ is a set of operation symbols, together with
a natural number $\ari{\sigma}$ for each $\sigma \in \Sigma$, called the \emph{arity} of $\sigma$. 
%\end{definition}
%
%\begin{definition}[$\Sigma$-terms]
%
For $X$ a set, the set of \emph{$\Sigma$-terms} over $X$ is 
the least set $\Sigma^* X$ such that 
$X \subseteq \Sigma^* X$ and, 
if $t_1, \ldots, t_{\ari{\sigma}} \in \Sigma^* X$ for some $\sigma \in \Sigma$ then $\sigma(t_1, \dots t_n) \in \Sigma^* X$. 
We write $\Var(t)$ for the set of variables appearing in the term $t$.
%\begin{itemize}
%\item For $x \in X$,
%\item If $t_1, \dots t_n$ are terms and $\sigma$ is a symbol of $\Sigma$ with arity $n$, then $\sigma(t_1, \dots t_n)$ is a term.
%\end{itemize}
%\end{definition}
%
%For a term $t$ with variables in $Y$, $f$ a function mapping variables in $Y$ to terms, we write $t[f(y)/y]$ or $t[f]$ the corresponding substitution.

%\begin{definition}[Algebraic Theory]\label{definition:algberaictheory}
An \emph{algebraic theory} $\Th$ is a triple $(\Sigma, V, E)$ where
%\begin{itemize}
%\item 
$\Sigma$ is a signature,
%\item 
$V$ is a set of variables, and 
%\item 
$E \subseteq \Sigma^* V \times \Sigma^* V$ is a relation. We refer to elements of $E$ as \emph{equations} or \emph{axioms} of $\Th$,
and denote an equation $(u,v)\in E$ by $u=v$. 
%\end{itemize} 
When two $\Sigma$-terms $t_1,t_2$ can be proved equal using equational logic and the axioms of $\Th$, we write $t_1=t_2$.
More precisely, $t_1=t_2$ if $t_1$ and $t_2$ are related by the least congruence containing $E$ which is
also closed under substitution.% in $E$ of variables by terms. 
%\end{definition}

For instance, the theory of monoids has a signature containing one constant $1$ and a binary symbol $\cdot$ and the axioms of associativity and unit:
$x \cdot 1 =  x = 1 \cdot x$ and 
$x \cdot ( y \cdot z ) =  (x \cdot y) \cdot z$.
For a signature $\Sigma$ , a \emph{$\Sigma$-algebra} $\mathcal{A}$ consists of a carrier set $A$ and, for each symbol $\sigma \in \Sigma$ of arity $\ari{\sigma}$, a morphism $\sigma_\mathcal{A}\colon  A^{\ari{\sigma}} \to A$. Given a $\Sigma$-algebra $\mathcal{A}$ and a map $f \colon V \rightarrow A$ to its carrier, 
we inductively define $f^\sharp \colon \Sigma^* V \rightarrow A$ by $f^\sharp(x) = f(x)$ and 
$f^\sharp(\sigma(t_1, \ldots, t_n)) = \sigma_\mathcal{A}(f^\sharp(t_1), \ldots, f^\sharp(t_n))$. 
Given an equation $t_1 = t_2$ with $t_1, t_2 \in \Sigma^*V$ we say 
$\mathcal{A}$ \emph{satisfies} $t_1 = t_2$, denoted by $\mathcal{A} \models t_1 = t_2$,
if $f^\sharp(t_1) = f^\sharp(t_2)$ for every map $f \colon V \rightarrow A$. 
This extends to sets of equations $E \subseteq \Sigma^* V \times \Sigma^*V$ 
by $\mathcal{A} \models E$ iff $\mathcal{A} \models t_1 = t_2$ for all $(t_1, t_2) \in E$. 
%\end{definition}

%Let $H \colon \Set \rightarrow \Set$ be a functor. An \emph{$H$-algebra}
%is a pair $(X,f)$ consisting of a set $X$ together with a map $f \colon HX \rightarrow X$. 
%A morphism from $(X,f)$ to $(Y,g)$ is a map $h \colon X \rightarrow Y$ such that
%$h \circ f = g \circ Hh$. The category of $H$-algebras and their homomorphisms is denoted by
%$\Alg[\Sigma]$.

%Categorically speaking, a signature can be presented as a functor $\Sigma \colon \mathbb{N} \rightarrow \Set$
%from the discrete category of natural numbers. This gives rise to a functor
%$\poly=\coprod_{i \in \mathbb{N}}\Sigma(n) \times (-)^n$. 
Categorically speaking, a signature $\Sigma$ gives rise to a polynomial functor $\poly \colon \Set \rightarrow \Set$,
defined by $\poly X = \coprod_{\sigma \in \Sigma} X^{\ari{\sigma}}$. 
A $\Sigma$-algebra as defined above is then precisely an algebra for $\poly$, i.e., a set $A$ together with a
map $\poly A \rightarrow A$.% A morphism from such an algebra $\beta\colon \poly A\to A$ to an algebra $\gamma\colon \poly Y\to Y$ is a map $f\colon A \to Y$ such that $\gamma\circ \poly f=f\circ \beta$. 
The category of $\Sigma$-algebras and $\Sigma$-algebra morphisms is denoted $\Alg[\Sigma]$. In particular, the set $\free X$ of free $\Sigma$-terms on $X$ is a $\Sigma$-algebra, and $\free$ forms a functor.
For a set of equations $E \subseteq \Sigma^* V \times \Sigma^*V$, we
denote by $\Alg[\Sigma,E]$ the full subcategory of $\Alg[\Sigma]$ consisting of those algebras satisfying $E$.

\section{Preserving Operations and Equations}\label{sec:preserve}

%\jurriaan{following sentence is perhaps more for the intro}
%The main focus of this paper is the interaction of monads with algebraic objects. 
In this section we explain how to lift operations,
    which allows us to define preservation of equations.
We conclude with a technical reformulation of equations and preservation that will be useful later.
In this section, we assume $T$ is a monoidal monad on~$\Set$.

\subparagraph*{Lifting operations.} 
Let $\Sigma$ be a signature. Given a $\Sigma$-algebra $\algb$
with carrier $A$, we define a $\Sigma$-algebra $\widehat{T}\algb$ on $TA$:
for each operator $\sigma \in \Sigma$, we set
$
\sigma_{\widehat{T}\algb}
\ \equiv \
\bigl(
\xymatrix{
	(TA)^{\ari{\sigma}} \ar[r]^-{\psi^{\ari{\sigma}}}
		& TA^{\ari{\sigma}} \ar[r]^-{T\sigma_{\algb}} 
        & \displaystyle TA
}\bigr)
.
$
This gives a lifting $\widehat{T} \colon \Alg[\Sigma] \rightarrow \Alg[\Sigma]$ of $T$. In fact, this is a lifting of the \emph{monad} $T$,
as it arises from a canonical distributive law of $H_\Sigma$ over the monad $T$
from~\cite{sokolova2007generic} (cf.~\cite{parlant2020preservation}). 

\begin{example}\label{poliftsign}
Consider the powerset monad and an algebra~$\algb$ with carrier $A$ and
a binary operation $\cdot$.
The lifted operation is given by
$
    U \mathrel{\cdot_{\widehat{\po}\algb}} V
    \ \equiv \ \{u \cdot v;\  u\in U, v\in V \}
$.
Liftings for other monads are easily obtained from Example~\ref{ex:monoidal}.
\end{example}

\subparagraph*{Preserving equations.} 
The lifting of algebraic operations allows interpretation of equations after application of $T$: if $t_1=t_2$ holds 
on a $\Sigma$-algebra $\algb$, we can interpret $t_1$ and $t_2$ as terms of $\widehat{T} \algb$ and verify equality. 
This leads to the central notion of \emph{preservation} of equations.

\begin{definition}\label{def:equationpreserved}
Let $\Sigma$ be a signature, $V$ a set of variables, and $t_1,t_2 \in \Sigma^* V$. We say that $T$ 
\emph{preserves the equation $t_1=t_2$} if for every $\Sigma$-algebra $\algb$ we have
    $\widehat{T} \algb \models t_1=t_2 $ provided~$\algb \models t_1=t_2$.
A set of equations is preserved if each one of them is.
\end{definition}
Equivalently, $T$ preserves $E$ if $\widehat{T}$ restricts to $\Alg[\Sigma,E]$. 
In particular, if $S$ is a monad such that $\Alg[\Sigma,E] \cong \EM(S)$,
then $T$ preserving $E$ implies that $TS$ is again a monad. 

\begin{example}
Does the powerset monad preserve the equation of commutativity? Let us consider an algebra $\algb$ with carrier $A$ featuring a commutative operation $+$. Recalling the lifting defined in example~\ref{poliftsign}, we can verify that for $U,V\in \po A$ we have: 
%\begin{align*}
%U +_{\widehat{\po}\algb} V &\ =\  \{ u \cdot v \mid u \in U, v \in V\}
% \ =\  \{ v \cdot u, v \in V, u \in U\} 
%\ =\  V +_{\widehat{\po}\algb} U
%\end{align*}
$U +_{\widehat{\po}\algb} V \ =\  \{ u + v \mid u \in U, v \in V\}
 \ =\  \{ v + u \mid v \in V, u \in U\} 
\ =\  V +_{\widehat{\po}\algb} U$.
Therefore $\po$ preserves commutativity. 
\end{example}

\begin{example}
Consider now an algebra $\algb$ with carrier $A$ and an idempotent operation $\cdot$ and the monad $\di$ of probability distributions. Let $a,b \in A$, such that $a \cdot b \neq a$, $a \cdot b \neq b$ and $a \cdot b \neq b \cdot a$. 
Let $\prob \in \di \algb$ be the distribution on $A$ such that $\prob(a)=\prob(b)=0.5$. 
Note that, for all $u,v \in A$, we have $\nu(u) \cdot \nu(v) \neq 0$ iff $u,v \in \{a,b\}$. Now we compute:
$(\prob \cdot_{\widehat{\di}\algb} \prob)(a \cdot b) 
\ =\  \sum \{ \nu(u) \cdot \nu(v) \mid u,v \in A \text{ s.t. }u \cdot v = a \cdot b\}
\ =\ \nu(a) \cdot \nu(b)\ =\ 0.25 \ \neq\  0 \ =\  \nu (a \cdot b)$.
%\begin{align*}
% (\prob \cdot_{\widehat{\di}\algb} \prob)(a \cdot b) 
%&\ =\  \sum \{( \nu(u) \cdot \nu(v) \mid u,v \in A \text{ s.t. }u \cdot v = a \cdot b\}\\
%&\ =\ \nu(a) \cdot \nu(b)\ =\ 0.25 \ \neq\  0 \ =\  \nu (a \cdot b)
%\end{align*}
Thus we have $\nu \neq \nu \cdot_{\widehat{\di}\algb} \nu$, which means that idempotence is not preserved by $\di$.
\end{example}

In subsequent sections we will treat several classes of equations, which are preserved by different
types of monads. Here, we first recall a fundamental result about preservation: any monoidal monad
preserves linear equations.

\begin{definition}
An equation $t_1=t_2$ is called \emph{linear} when $\Var(t_1) = \Var(t_2)$, and every variable
occurs exactly once in $t_1$ and once in $t_2$. 
\end{definition}
For instance, the laws of associativity, commutativity and unit are linear. 
Note that if an equation is \emph{not} linear, then either there is a variable 
which occurs on one side but not the other (which we will refer to as a \emph{drop} equation, Definition~\ref{def:drop})
or there is a variable that occurs twice (referred to as a \emph{dup} equation, Definition~\ref{def:dup}).
Manes and Mulry showed that any monoidal monad $T\colon \Set \to \Set$ preserves linear equations~\cite{manes2007monad}. This generalises Gautam's result that the powerset monad preserves linear equations~\cite{gautam1957validity}. 

\begin{figure}[t]
%\centering
      \subcaptionbox{Affineness and relevance of well-known monads\label{monadtable}}{
\begin{tabular}{@{} |@{\,} c@{\,} | @{\,} c@{\,}  | @{\,} c@{\,}  |@{}}
\hline
Monad & Affine & Relevant\\
\hline
$\po$ & $\times$ & $\times$\\
$\po^+$ & $\checkmark$ & $\times$\\
$\di$ & $\checkmark$ & $\times$\\
$X+1$ & $\times$ & $\checkmark$\\
$X^A$ & $\checkmark$ & $\checkmark$\\
$M \times X$ & $\checkmark$ iff $M$ trivial & $\checkmark$ iff $M$ idempotent\\
$\ms$ & $\checkmark$ iff $\semi$ trivial &  $\checkmark$ iff $\semi$ trivial \\
\hline
\end{tabular}
}\qquad
   \subcaptionbox{Classes of equations\label{venn}}{
\scalebox{.65}{\begin{picture}(200,100)
\put(0,0){\includegraphics[scale=0.175]{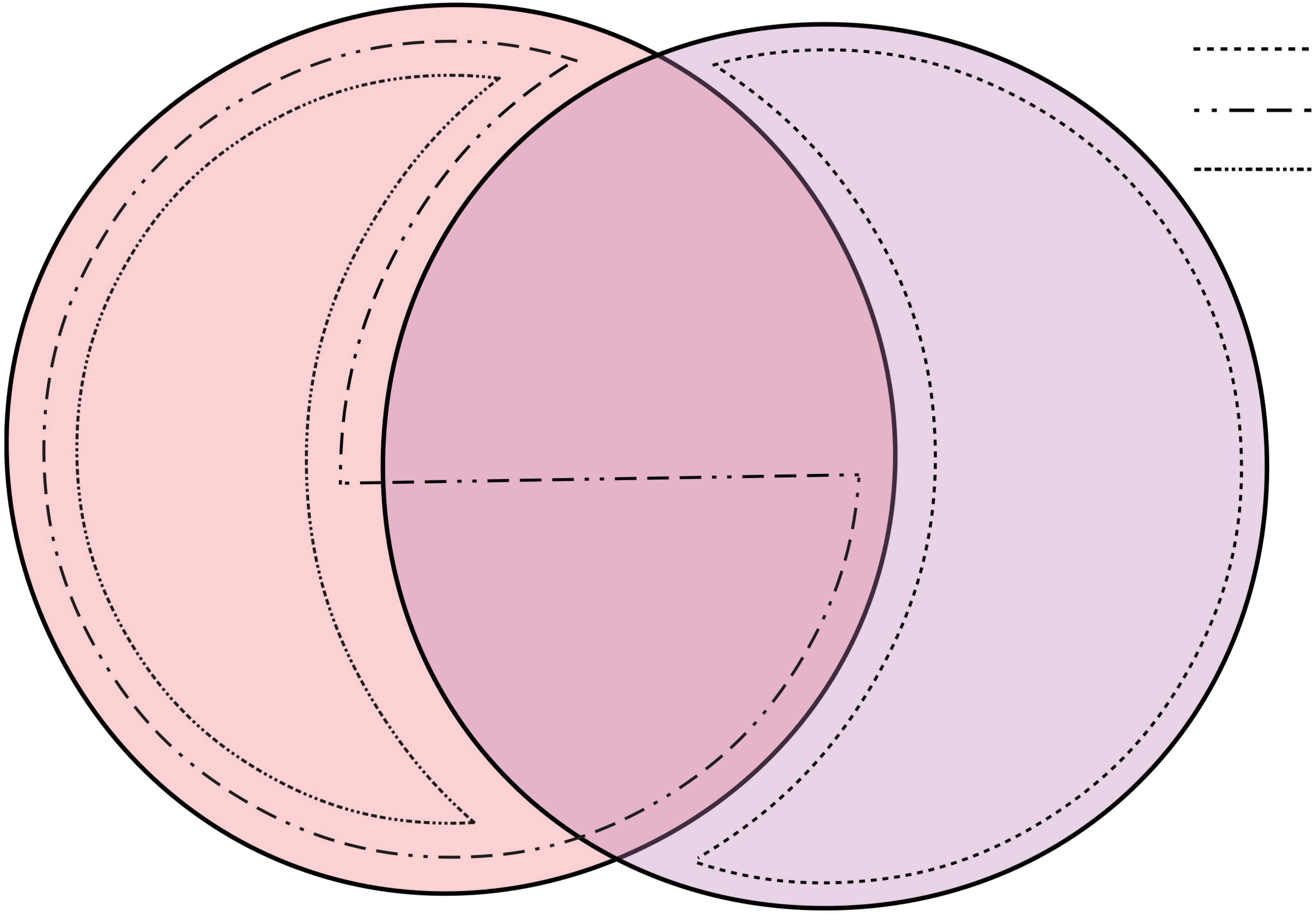}}
\put(50,163){drop}
\put(140,163){dup}
\put(18,100){$x\cdot y=y$}
\put(80,55){$x\cdot y=x\cdot x$}
\put(80,90){$x\cdot x=y\cdot y$}
\put(165,50){$x\cdot x=x$}
\put(240,150){strict-dup}
\put(240,140){one-drop}
\put(240,130){strict-drop}
\end{picture}}
}

\caption{}\label{figure}
\end{figure}

\section{Affine Monads and Drop Equations}\label{sec:affine}

In this section, we study preservation of \emph{drop} equations, which are non-linear
equations where at least one variable occurs on one side but not the other.
Preservation of 
such equations by a monoidal monad~$T$ will be related to a property of monoidal monads called \emph{affineness}.%, recalled below. 
\begin{definition}[Drop equations]\label{def:drop}
An equation $t_1=t_2$ is called
\begin{enumerate*}
	\item \emph{drop} when at least one variable appears in $t_1$ but not in $t_2$ (or conversely);
	\item \emph{one-drop} when a variable appears once in $t_1$ and does not appear in  $t_2$ (or conversely);
	\item \emph{strict-drop} when it is not linear and each variable of $V$ appears at most once in $t_1$ and $t_2$.
\end{enumerate*}
\end{definition}
The set of strict-drop equations is included in the set of one-drop equations,
and the latter in the set of drop equations. Both inclusions are strict. 
Strict-drop equations can equivalently be characterised as equations $t_1 = t_2$ where
neither $t_1$ nor $t_2$ contains duplicate variables, and at least one variable occurs on one side but not the other. 
The various classes of drop equations are pictured in Figure~\ref{venn}, which also contains dup equations (Definition~\ref{def:dup}).

%\begin{example}
For instance, the law of absorption $x \cdot  0 = 0$ is a strict-drop equation.
The equation $x\cdot (y\cdot y) = y\cdot y$ shows one occurrence of $x$ on the left side and none on the right side, therefore it is a one-drop equation.
It is not a strict-drop equation. The equation $x \cdot  x = y \cdot  y$ is drop but not one-drop. Finally, $x \cdot  x = x$ is not drop.
%\end{example}
%This example shows that one-drop equations may still contain duplicated variables. To make this distinction, we define an ever smaller subclass:
%\begin{definition}
%An equation $t_1=t_2$ is called 
%\end{definition}
%\begin{example}
%\end{example}

The property of $T$ that we focus on now is \emph{affineness}, introduced by Kock~\cite{kock1972bilinearity}; see also~\cite{jacobs1994semantics}.

\begin{definition}[\cite{kock1972bilinearity}]\label{definition:affine}
%\jurriaan{this is not the original reference I assume}
A monoidal monad $T$ on a Cartesian monoidal category~$\cat$
is called \emph{affine} if it has one of the following three
equivalent properties: $T1$ is final;  diagram $\spadesuit$ commutes; diagram $\heartsuit$ commutes, for all $A$ and $B$. 
\begin{equation*}
\vcenter{\xymatrix{
T1 \ar[r]^{!}
    \ar@{=}@/_1.0pc/[rr]
& 
1 \ar[r]^{\eta_1}
&
T1
}}\qquad\spadesuit\qquad\qquad\qquad
    \vcenter{\xymatrix@R-1.3pc{
TA \times TB \ar[r]^{\te}
    \ar@{=}[rd]
& 
T(A \times B) \ar[d]^{\chi}
\\
& TA \times TB
}}\qquad\heartsuit
\end{equation*}
\end{definition}
See Figure~\ref{monadtable} for (non-)examples of affine monads.
%\begin{example}
%Both the distribution and non-empty powerset monad
%    are affine (as~$T1$ is easily seen to be final in both cases). On the other hand, $\po 1 = 2$ hence the powerset monad is not affine.  More examples can be found in Figure~\ref{monadtable}.
%    %\jurriaan{give some non-examples. In fact, it would be nice to have a table with (non)-affine/relevant monads}
%\end{example}
Regarding preservation of equations by affine monads,
we recall the following result.
\begin{theorem}[\cite{layers}]\label{affinepresdrop}
    Any affine monoidal monad~$T$ on~$\Set$
    preserves strict-drop equations. 
\end{theorem}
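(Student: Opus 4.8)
Proof plan for Theorem~\ref{affinepresdrop} (Any affine monoidal monad $T$ on $\Set$ preserves strict-drop equations.)

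The plan is to reduce a general strict-drop equation to a manageable normal form, then exploit affineness — specifically the diagram $\spadesuit$, that is, $\eta_1 \circ {!} = \id_{T1}$ — to discard the "dropped" variables after applying $T$. Let me think about what a strict-drop equation looks like. We have $t_1 = t_2 \in \Sigma^*V$ with no variable occurring twice in $t_1$ and none twice in $t_2$, but some variable appears on one side only. Without loss of generality say $x \in \Var(t_1) \setminus \Var(t_2)$.

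First I would set up the semantic picture. Fix a $\Sigma$-algebra $\algb$ with carrier $A$ such that $\algb \models t_1 = t_2$. An assignment of the variables in $\widehat T \algb$ is a map $V \to TA$, equivalently (for the finitely many variables actually occurring) an element of $(TA)^{\Var(t_1)\cup \Var(t_2)}$. The key observation is the following factorisation lemma: for a term $t$ with pairwise-distinct variables $y_1,\dots,y_k$, the interpretation $y \mapsto y^\sharp(t)$ in $\widehat T\algb$ equals the composite
\[
(TA)^{k} \xrightarrow{\ \te^k\ } T(A^k) \xrightarrow{\ T(f_t)\ } TA,
\]
where $f_t \colon A^k \to A$ is the interpretation of $t$ in $\algb$ (the map sending a tuple to the value of $t$ under that assignment). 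This is proved by induction on $t$, using the definition $\sigma_{\widehat T\algb} = T\sigma_\algb \circ \psi^{\ari\sigma}$ and the associativity/naturality coherences that make $\psi^n$ well-behaved; the hypothesis that $t$ has no repeated variable is exactly what lets us avoid diagonals and keep everything in this "linear" shape. This is essentially the technical reformulation the paper promises at the end of Section~\ref{sec:preserve}, so I may be able to cite it rather than reprove it.

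Next I handle the drop. Let $W = \Var(t_1) \cup \Var(t_2)$, let $W_1 = \Var(t_1)$, $W_2 = \Var(t_2)$, and note $W_2 \subsetneq W_1 \subseteq W$ after possibly swapping sides (the case where each side drops some variable of the other is handled the same way on each side). Using the lemma, the two interpretations in $\widehat T\algb$ are $T(f_1) \circ \te^{W_1} \circ \pi_{W_1}$ and $T(f_2)\circ \te^{W_2}\circ \pi_{W_2}$ as maps $(TA)^W \to TA$, where $f_i \colon A^{W_i}\to A$ and $\pi_{W_i}$ is the projection. Since $\algb \models t_1 = t_2$, the maps $f_1$ and $f_2$ agree after reindexing along the inclusion $W_2 \hookrightarrow W_1$, i.e. $f_1 = f_2 \circ p$ where $p \colon A^{W_1}\to A^{W_2}$ forgets the coordinates in $W_1\setminus W_2$. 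The remaining task is to show $T(f_2\circ p)\circ \te^{W_1} = T(f_2)\circ\te^{W_2}\circ p'$ where $p'\colon (TA)^{W_1}\to (TA)^{W_2}$ is the corresponding projection on the $T$-side — and this is where affineness enters. Factor $p$ through $A^{W_2}\times A^{W_1\setminus W_2}$ and peel off one dropped coordinate at a time; the heart of it is the commuting square saying that for the projection $A\times B \to B$, the composite $TA\times TB \xrightarrow{\psi} T(A\times B)\xrightarrow{T\pi_2} TB$ equals $\pi_2 \colon TA\times TB\to TB$. This is precisely diagram $\heartsuit$ composed with $\pi_2 \colon TA\times TB \to TB$ (since $\chi = \langle T\pi_1, T\pi_2\rangle$, affineness gives $T\pi_2 \circ \psi = \pi_2 \circ \chi \circ \psi = \pi_2$). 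Iterating over all coordinates of $W_1\setminus W_2$ collapses $\te^{W_1}$ followed by $T$ of the projection down to $\te^{W_2}$ precomposed with the set-level projection, which is exactly what we need.

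Assembling: the $\widehat T\algb$-interpretation of $t_1$ is $T(f_1)\circ\te^{W_1}\circ\pi_{W_1} = T(f_2)\circ T(p)\circ \te^{W_1}\circ\pi_{W_1} = T(f_2)\circ \te^{W_2}\circ p' \circ \pi_{W_1} = T(f_2)\circ\te^{W_2}\circ\pi_{W_2}$, which is the interpretation of $t_2$. Hence $\widehat T\algb \models t_1 = t_2$, and since $\algb$ was arbitrary, $T$ preserves the equation. I expect the main obstacle to be purely bookkeeping: formulating and proving the factorisation lemma for $\psi^n$ with the right handling of reindexing, and being careful that the "no repeated variables" hypothesis is used in exactly the right place (it is what guarantees $f_i$ is built without diagonals, so the lemma applies). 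The affineness step itself is a one-line consequence of diagram $\heartsuit$; the work is in the reduction that isolates it.
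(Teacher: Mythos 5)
Your plan is correct and follows essentially the same route as the residual-diagram framework from \cite{layers} that the paper recalls in Appendix~\ref{sec:appendix_preserve}: your ``factorisation lemma'' is the decomposition $\evaluate(t)\circ\prepare(t)$ together with Lemma~\ref{lemma:preservediag}(3), and your key step $T(p)\circ\te^{W_1}=\te^{W_2}\circ p'$ (via $T\pi_2\circ\te=\pi_2$ from $\heartsuit$) is exactly the commutativity of the residual diagrams $\ccld{R_1},\ccld{R_2}$ for a strict-drop equation. The paper itself only cites this theorem rather than reproving it, but your argument is the intended one, and you correctly locate where the no-duplicate-variables hypothesis is needed (to keep $\prepare(t_i)$ diagonal-free so that affineness, rather than relevance, suffices).
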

Theorem~\ref{affinepresdrop} does not extend to one-drop equations, as follows
from a later result: Example~\ref{zxx-zyx} gives
a one-drop equation whose preservation implies relevance.

\subparagraph*{From drop preservation to affineness}
We proceed to prove the converse of Theorem~\ref{affinepresdrop}: if a monoidal monad $T$ 
preserves a strict-drop equation, then it is affine (Theorem~\ref{thm:drop-affine}).  %
 \begin{lemma}
Let $t_1=t_2$ be a one-drop equation, and $T \colon \Set \rightarrow \Set$ a monoidal monad.
If $t_1=t_2$ holds on $T1$, then $T$ is affine.
\label{lemma:affine1}
\end{lemma}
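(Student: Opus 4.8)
The statement to prove: if $t_1 = t_2$ is a one-drop equation and $T$ preserves it on the one-element algebra structure carried by $T1$, then $T$ is affine. Here "holds on $T1$" means $\widehat{T}\mathcal{A} \models t_1 = t_2$ where $\mathcal{A}$ is the (unique, trivial) $\Sigma$-algebra on $1$, so $\widehat{T}\mathcal{A}$ has carrier $T1$. We want to conclude $T1$ is terminal, equivalently (by Definition~\ref{definition:affine}) that diagram $\spadesuit$ commutes: $\eta_1 \circ {!} = \mathrm{id}_{T1}$.

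Let me think about what the lifted operations on $T1$ look like. The algebra $\mathcal{A}$ on $1$ has every operation $\sigma_{\mathcal{A}} : 1^{|\sigma|} \to 1$ being the unique map. So $\sigma_{\widehat{T}\mathcal{A}} : (T1)^{|\sigma|} \to T1^{|\sigma|} \to T1$ is $T\sigma_{\mathcal{A}} \circ \psi^{|\sigma|}$. Now $1^{|\sigma|} \cong 1$, and under this iso $\sigma_{\mathcal{A}}$ is $\mathrm{id}_1$, so $T\sigma_{\mathcal{A}}$ is essentially $\mathrm{id}_{T1}$ composed with the canonical iso $T(1^{|\sigma|}) \cong T1$. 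The interesting part is $\psi^{|\sigma|} : (T1)^{|\sigma|} \to T(1^{|\sigma|})$.

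The key computation: I want to unwind, for a term $t \in \Sigma^* V$, what the interpretation $f^\sharp(t)$ is in $\widehat{T}\mathcal{A}$, where $f : V \to T1$ is an arbitrary assignment. I expect that $f^\sharp(t)$ depends only on the *multiset* of variable occurrences in $t$ — more precisely, if $t$ has variables $x_1,\dots,x_k$ occurring with multiplicities $m_1, \dots, m_k$, then $f^\sharp(t)$ should equal something like $T(!) \circ \psi^{m}(f(x_1), \dots, f(x_1), \dots)$ where we feed $f(x_i)$ in $m_i$ times, for $m = \sum m_i$ total arguments, possibly composed with the canonical iso $T(1^m) \cong T1$. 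This should follow by induction on term structure using naturality of $\psi$ and associativity/coherence of the $\psi^n$'s. In particular, for a variable $x$ appearing once, $f^\sharp(x) = f(x)$; if $x$ appears in $t_1$ but not $t_2$, then $f^\sharp(t_1)$ "sees" $f(x)$ but $f^\sharp(t_2)$ does not.

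Now apply the one-drop hypothesis. Say $x$ appears exactly once in $t_1$ and not at all in $t_2$. Pick $f$ that sends $x$ to an arbitrary element $p \in T1$ and all other variables to $\eta_1(*) \in T1$. Since $t_1 = t_2$ holds on $T1$, we get $f^\sharp(t_1) = f^\sharp(t_2)$. By the structural computation, $f^\sharp(t_2)$ is some fixed element of $T1$ independent of $p$ (all its variables are sent to $\eta_1(*)$, and using the unit laws / affine-style simplifications $\psi(\eta_1(*), -) $ collapses nicely — indeed $\psi^0 = \eta_1$ and the monoidal functor laws give $\psi \circ (\eta \times \mathrm{id}) = T(\text{iso}) $ type identities, so plugging $\eta_1(*)$ in a slot is harmless). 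Meanwhile $f^\sharp(t_1)$, because $x$ occurs once and the other slots get $\eta_1(*)$, should simplify to $c(p)$ for some function $c : T1 \to T1$ built from $\psi$, with all the $\eta_1(*)$-slots collapsed away — and I claim this $c$ is the identity (or at worst a constant composed appropriately; the single non-trivial slot carrying $p$ passes through untouched). Hence $p = f^\sharp(t_1) = f^\sharp(t_2) = $ constant, for every $p \in T1$. A constant-valued identity means $T1$ has at most one element; since it has at least one ($\eta_1(*)$), $T1 \cong 1$ is terminal, i.e. $T$ is affine.

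**Main obstacle.** The technical heart is the structural lemma describing $f^\sharp(t)$ on $\widehat{T}\mathcal{A}$ and, especially, checking that the $\eta_1(*)$-filled slots genuinely collapse so that the single occurrence of $x$ in $t_1$ yields exactly $\mathrm{id}_{T1}$ applied to $f(x)$. This requires careful bookkeeping with the monoidal-functor coherence laws (the interaction of $\psi$, $\psi^0 = \eta_1$, and the associators/unitors on $1$), and the fact that all objects in sight are powers of the terminal object, so many maps that look distinct are forced to coincide. One should also be slightly careful that "$x$ appears once in $t_1$" is used crucially — if $x$ appeared twice, the slot carrying $p$ would appear twice and we'd get $\psi^2(p,p)$ rather than $p$, which need not recover $p$; this is exactly why the lemma is about one-drop and not general drop equations, and it is consistent with the remark after Theorem~\ref{affinepresdrop}.
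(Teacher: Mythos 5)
Your proposal is correct and takes essentially the same route as the paper: the paper's central construction (the morphism $\alpha = \id\times(\eta_1)^n$ in its Lemma~\ref{lemma:alpha}) is exactly your substitution feeding an arbitrary $p\in T1$ into the dropped variable and $\eta_1(*)$ into all other slots, and the collapse of the $\eta_1$-filled slots via the monoidal unit and coherence laws is the paper's Lemma~\ref{lemma:monoidalrhon}. The only difference is presentational --- the paper runs a diagram chase through its prepare/evaluate decomposition and concludes $\eta_1\circ{!}=\id_{T1}$, while you reason element-wise and conclude that $T1$ is a singleton; these are equivalent characterisations of affineness.
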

%The above lemma entails the main result of this section. 
Since any equation $t_1=t_2$ trivially holds on $1$, by Lemma~\ref{lemma:affine1} we obtain:
%Any equation $t_1=t_2$ trivially holds on $1$. 
%If $T$ preserves $t_1=t_2$, then it holds on $T1$, thus by Lemma~\ref{lemma:affine1}, $T$ is affine.

\begin{theorem}\label{thm:drop-affine}
Let $\Sigma$ be a signature, and let $t_1=t_2$ be a one-drop equation with $t_1, t_2 \in \Sigma^* V$ and $\Var(t_1) \cup \Var(t_2) = V$. 
Let $T \colon \Set \rightarrow \Set$ be a monoidal monad.
If $T$ preserves $t_1=t_2$, then $T$ is affine.
\end{theorem}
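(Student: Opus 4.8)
The plan is to derive Theorem~\ref{thm:drop-affine} from Lemma~\ref{lemma:affine1} by reducing the hypothesis about preservation of $t_1=t_2$ to the hypothesis that $t_1=t_2$ holds on $T1$. The key observation is that the one-element set $1$ carries a (unique) $\Sigma$-algebra structure for any signature $\Sigma$: each operation $\sigma$ of arity $n$ must be the unique map $1^n \to 1$. Call this algebra $\mathbf{1}$. Since every equation is trivially satisfied on a one-element carrier, we have $\mathbf{1} \models t_1 = t_2$.

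Now I would invoke the preservation hypothesis: since $T$ preserves $t_1=t_2$ and $\mathbf{1}\models t_1=t_2$, it follows that $\widehat{T}\mathbf{1} \models t_1=t_2$. But the carrier of $\widehat{T}\mathbf{1}$ is $T1$, so this says precisely that the equation $t_1=t_2$ holds on the $\Sigma$-algebra $\widehat{T}\mathbf{1}$, whose carrier is $T1$. This is exactly the hypothesis of Lemma~\ref{lemma:affine1} (one should check that the lifted operations on $T1$ used in the statement of Lemma~\ref{lemma:affine1} coincide with those of $\widehat{T}\mathbf{1}$ — they do, since the operations of $\mathbf{1}$ are the unique maps into $1$, and the lifting $\widehat{T}$ applied to these gives the canonical interpretation on $T1$ via $\psi^{n}$ followed by $T$ of the unique map). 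Applying Lemma~\ref{lemma:affine1}, we conclude $T$ is affine.

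The work is genuinely in Lemma~\ref{lemma:affine1}, not in this deduction; the theorem is essentially a corollary. For the lemma itself (which I would also sketch, since it is the substance): given a one-drop equation, say a variable $x$ occurs exactly once in $t_1$ and not at all in $t_2$, one interprets $t_1$ and $t_2$ in $\widehat{T}\mathbf{1}$ and exploits that on $T1$ the lifted operations are built from $\psi$ and the monad structure. The single occurrence of $x$ means $t_1$'s interpretation, as a function of the valuation of $x$ (with all other variables fixed), factors through a composite of $\psi$-maps applied in one coordinate; because $x$ occurs exactly \emph{once} and linearly on that side, this composite can be massaged — using naturality of $\psi$, the monoidal monad axioms, and the fact that $\psi^0 = \eta_1$ — into something of the form (a map not depending on $x$) applied after $T!$ or $\eta_1 \circ {!}$ in the $x$-coordinate. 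Meanwhile $t_2$ does not depend on $x$ at all. Equating the two and instantiating the valuation of $x$ at two different elements of $T1$ forces a collapse that yields one of the equivalent conditions for affineness in Definition~\ref{definition:affine} — most naturally the commutativity of diagram $\spadesuit$, i.e.\ $\eta_1 \circ {!} = \mathrm{id}_{T1}$, or equivalently that $T1$ is final.

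The main obstacle in the lemma — and the reason the theorem restricts to \emph{one-drop} rather than arbitrary drop equations — is controlling the dropped variable's contribution when it occurs exactly once and linearly: this single linear occurrence is what lets the $\psi$-machinery peel it off cleanly. If the variable occurred multiple times on the side where it appears (general drop), the term's dependence on $x$ would involve the diagonal $\Delta$ and $\chi$, and the argument would instead produce the weaker $n$-relevance-style conclusions discussed in Section~\ref{sec:dup}, not affineness. For the theorem as stated, though, once Lemma~\ref{lemma:affine1} is in hand the only remaining care is the bookkeeping identifying $\widehat{T}\mathbf{1}$ with the structure on $T1$ named in the lemma, which is routine.
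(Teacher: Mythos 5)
Your derivation of the theorem is exactly the paper's: the equation holds trivially on the one-element algebra, preservation transfers it to $\widehat{T}\mathbf{1}$ with carrier $T1$, and Lemma~\ref{lemma:affine1} then gives affineness. Your supplementary sketch of the lemma also matches the paper's strategy (isolating the single linear occurrence of the dropped variable, setting the remaining variables to $\eta_1(*)$ via the unitality of $\psi$, and forcing $T1$ to collapse to a point), so the proposal is correct and essentially the same proof.
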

%\begin{proof}
%Any equation $t_1=t_2$ trivially holds on $1$. 
%If $T$ preserves $t_1=t_2$, then it holds on $T1$, thus by Lemma~\ref{lemma:affine1}, $T$ is affine.
%\end{proof}
The above theorem is a stronger result than the converse of Theorem~\ref{affinepresdrop}: preservation of \emph{one-drop} equations suffices
for affineness. Equivalently, if a monad is 
\emph{not} affine, then we know that it does not preserve any drop equations in general. 
This generalises 
Gautam's result that $\po$ does not preserve any one-drop equation~\cite{gautam1957validity}.
Other examples of monads that, by Theorem~\ref{thm:drop-affine}, do not preserve one-drop equations 
are  
$M \times X$ for $M$ a non-trivial monoid, and $\ms$ for $\semi$ a non-trivial semiring (see Figure~\ref{monadtable}). %\marginpar{Alex: what Figure is this?} 

%This result is slightly stronger that the converse of~\ref{affinepresdrop}, as the one-drop class contains more than strict-drop equations. Namely, we know that preserving equations like $x * y = y$ implies affinity, but our result also applies to laws that are not strict-drop, such as $(x*x) * y = y$.
%\jurriaan{also discussed this above. Now we need to talk about whether~\ref{affinepresdrop} can be strengthened.}

\begin{remark}
	Theorem~\ref{thm:drop-affine} treats preservation of single equations. 
	Another consequence of Lemma~\ref{lemma:affine1} is that,
	if a monoidal monad $T \colon \Set \rightarrow \Set$ preserves
	a non-empty \emph{set} of equations $E$ that includes a one-drop equation, then it is affine. 
	To see this, note that any equation holds on the algebra $1$; therefore also
	on $T1$, hence $T$ is affine by Lemma~\ref{lemma:affine1}.
\end{remark}
%
%\jurriaan{update}
%We say that the monoidal monad $T$ preserves a theory $\mathbb{T}=(\Sigma, E)$ if its monoidal lifting $\widehat{T}$ to the category of $(\Sigma)$-algebras preserves the structure of $(\Sigma, E)$-algebra. Intuitively, it is now a matter of preserving the set $E$ of equations rather than preserving independant equations.
%
%\begin{theorem}
%If $T$ is not affine and $\mathbb{T}$ contains \emph{drop} equations, then $T$ does not preserve $\mathbb{T}$.
%\end{theorem}
%\begin{proof}
%The trivial algebra $1$ is an algebra for $\mathbb{T}$. Lemma~\ref{lemma:affine1}  shows that if $T$ is not affine, no drop equation may hold on $T1$, therefore $T$ cannot preserve the theory $\mathbb{T}$.
%\end{proof}

%We can elaborate on this result in the following way: a drop equation $t_1=t_2$ is not preserved on its own by an affine monad, and it is still not preserved if additional equations are added to the theory. One could think that we could enrich the theory with new equations to bypass $t_1=t_2$, but we know now that it is impossible.

\subparagraph*{Decidability} %\label{sec:decidable}

The previous section establishes the equivalence between affineness of a monad $T$ 
and preservation of one-drop equations.
We now use this result to analyse a more algorithmic question: is it decidable whether
a monad $T$ (presented by finitely many operations and equations) preserves a given equation $t_1=t_2$? 
Unfortunately the answer is negative, which we prove by showing that the question
whether a given monad is affine is undecidable. 

%\begin{theorem}
%The following problem is undecidable: given a finite presentation $(G,R)$ of a monoid $\mathcal{M}$,
% is $\mathcal{M}$ trivial?\label{bookundecidable}
%\end{theorem}
\begin{theorem}\label{thm:affine-undecidable}
The following problem is undecidable: given a finite signature $\Sigma$ and a finite set $E$ of equations, 
is the monad $T$ presented by $(\Sigma,E)$ affine?
\end{theorem}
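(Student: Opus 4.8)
The plan is to reduce from a known undecidable problem about algebraic theories or rewriting systems --- the most natural candidate being the word problem for finitely presented monoids (or semigroups), or equivalently the triviality problem for finitely presented groups/monoids, which is undecidable by the Markov--Post construction. The key observation connecting this to affineness is the characterisation $T$ affine $\iff$ $T1$ is final, i.e. $T1$ is a singleton. If $\Th = (\Sigma, E)$ presents a monad $S$ with $\Alg[\Sigma,E] \cong \EM(S)$, then $S1$ is the free $\Th$-algebra on one generator, so $S1$ is a singleton precisely when the one-generated free model of $\Th$ is trivial. Hence the plan reduces Theorem~\ref{thm:affine-undecidable} to: it is undecidable whether a finitely presented algebraic theory has a trivial one-generated free algebra.

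First I would make precise that a finite signature and finite equation set $(\Sigma,E)$ presents a finitary monad $T$ on $\Set$ (the free-algebra monad of the variety $\Alg[\Sigma,E]$), and that $T1$ is the carrier of the free algebra $\free[\Th]\{*\}$ on a single generator; affineness of $T$ is equivalent to this carrier being a singleton, i.e. to the theory proving $* = t$ for every closed-up-to-one-variable term $t$. Then I would exhibit a computable map from instances of the chosen undecidable problem to pairs $(\Sigma,E)$ such that the instance is positive iff $\free[\Th]\{*\}$ is trivial. A clean route: take the undecidability of the word problem for semigroups presented by finitely many generators $a_1,\dots,a_k$ and finitely many relations $u_i = v_i$; given such a presentation together with two words $w_1, w_2$, one cannot decide $w_1 = w_2$. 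To turn "$w_1 = w_2$" into "the one-generated free algebra collapses", augment the signature with unary operation symbols for each generator and a constant, add the semigroup relations as equations, and add one further equation of the form "$w_1$ applied to $x$ equals $w_2$ applied to $x$"; arrange (via a standard padding/encoding trick, e.g. making every generator act invertibly or forcing a zero) that this last equation collapses the whole one-generated algebra exactly when $w_1 = w_2$ was already derivable. Alternatively, and perhaps more directly, reduce from the undecidability of triviality of a finitely presented theory: Murskii / Perkins-type results already give undecidability of whether a finite equational theory is trivial on a one-element-generated term algebra; if such a ready-made statement is citable, the reduction is a one-line identity map.

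The steps, in order: (1) recall/establish the semantic dictionary: $(\Sigma,E) \rightsquigarrow T$ finitary monad, $T1 \cong$ carrier of free model on one generator, and $T$ affine $\iff$ $\lvert T1\rvert = 1$; (2) fix the source undecidable problem and state its reduction target as "triviality of the one-generated free algebra of a finite presentation"; (3) perform the encoding, carefully checking that the constructed $(\Sigma,E)$ is finite and computably obtained from the instance, and that positivity of the instance corresponds exactly to collapse of $\free[\Th]\{*\}$ --- both directions, using soundness and completeness of equational logic; (4) conclude that affineness of the presented monad is undecidable. The main obstacle I expect is step (3): ensuring the encoding is faithful in both directions --- in particular that adding the "collapsing" equation does not accidentally trivialise the one-generated algebra when the original word problem is negative, and that it does fully trivialise it (not merely identify two specific elements) when it is positive. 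This typically requires a careful choice of auxiliary operations --- for instance encoding so that from $w_1 = w_2$ one can derive $* = \sigma(*)$ for a generating unary $\sigma$ that is "surjective enough" to sweep out the entire free algebra --- and a proof that in the negative case a non-trivial model survives, which I would get by explicitly constructing one (e.g. the syntactic semigroup of the presentation acting on itself, suitably pointed) and invoking soundness.
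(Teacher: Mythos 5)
Your step (1) --- the dictionary ``$T$ affine $\iff$ $T1$ final $\iff$ the free algebra on one generator is a singleton'' --- is exactly the paper's starting point, and your choice of source problem (triviality/word problems for finitely presented monoids) is in the right family. The gap is in step (3), your actual reduction. Your primary route, reducing from the word problem by adding the axiom $w_1(x)=w_2(x)$ and ``arranging that this collapses the one-generated algebra exactly when $w_1=w_2$ was already derivable,'' cannot work as stated: once $w_1(x)=w_2(x)$ is adopted as an axiom it holds in the new theory unconditionally, so whether the free algebra collapses is a property of the \emph{new} presentation and cannot be made to track derivability of $w_1=w_2$ in the \emph{old} one --- indeed, if $w_1=w_2$ was already derivable, adding the axiom changes nothing and the algebra collapses only if it was already trivial. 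Making a word-problem instance control triviality is precisely the hard content of the Markov-type constructions, which you defer to ``a standard padding/encoding trick''; and your fallback route is conditional on a Murski\u{\i}/Perkins-style citation you are not sure exists. So as written the proposal does not close.

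The paper avoids all of this by reducing from a problem that is already of the right shape: \emph{triviality of a finitely presented monoid} is known to be undecidable (citing Book and Otto). Given a presentation $(G,R)$ of a monoid $\mathcal{M}$, take $\Sigma$ to consist of one \emph{unary} symbol $f_g$ per generator $g\in G$ (no constants, no binary operation) and let $E$ translate each relation $(g_1\cdots g_n, h_1\cdots h_k)\in R$ into $f_{g_1}(\cdots f_{g_n}(x)\cdots)=f_{h_1}(\cdots f_{h_k}(x)\cdots)$. This is the theory of $\mathcal{M}$-actions, and the free algebra on one generator $*$ consists of the terms $f_{g_1}(\cdots f_{g_n}(*)\cdots)$ modulo $E$, i.e.\ words over $G$ modulo $R$; hence $T1\cong\mathcal{M}$ on the nose, and $T$ is affine iff $\mathcal{M}$ is trivial. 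No collapsing equation, no auxiliary constant, and no verification of a delicate biconditional is needed --- both directions of your step (3) become immediate. If you want to salvage your plan, replace your source problem by monoid triviality and your encoding by this action-theory construction.
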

\begin{proof}
We use an encoding of the following decision problem, which is known to be undecidable~\cite{book1993string}: 
given a finite presentation $(G,R)$ of a monoid $\mathcal{M}$,
is $\mathcal{M}$ trivial?

Let  $(G,R)$ be a finite presentation  of a monoid $\mathcal{M}$. 
Let $\Sigma$ be the set of unary operations $f_g$, for $g \in G$, 
and $E$ the set of equations
$f_{g_1}(f_{g_2}( \dots f_{g_n}(x) \dots))= f_{h_1}(f_{h_2}( \dots f_{h_k}(x) \dots))$, for each $(g_1 \dots g_n, h_1 \dots h_k) \in R$. Note that $(\Sigma,E)$ corresponds to the theory of $\mathcal{M}$-actions; let $T$ be the monad presented by this theory.
Now, one can show that $T1$ is isomorphic to $\mathcal{M}$, 
and thus that $T1=1$ iff $\mathcal{M}$ is trivial. 
\end{proof}
%\begin{proof}
%We use an encoding of the following decision problem, which is known to be undecidable~\cite{book1993string}: 
%given a finite presentation $(G,R)$ of a monoid $\mathcal{M}$,
%is $\mathcal{M}$ trivial?
%
%Let  $(G,R)$ be a finite presentation  of a monoid $\mathcal{M}$. 
%Let $\Sigma$ be the set of unary operations $f_g$, for $g \in G$, 
%and $E$ the set of equations
%$f_{g_1}(f_{g_2}( \dots f_{g_n}(x) \dots))= f_{h_1}(f_{h_2}( \dots f_{h_k}(x) \dots))$, for each $(g_1 \dots g_n, h_1 \dots h_k) \in R$. Let $T$ be the monad presented by $(\Sigma,E)$.
%
%We show that $T1$ is isomorphic to $\mathcal{M}$. We write $e$ for the unit of $\mathcal{M}$. Each element of $T1$ can be seen as a $(\Sigma,E)$-term on generator $*$,
%through $F \colon T1 \rightarrow \mathcal{M}$ defined by $F(x) = e$ and 
%$F(f_{g_1}(f_{g_2}(\dots f_{g_n}(*)\dots )))  = g_1 g_2  \dots  g_n$.
%%\begin{align*}
%%F: x &\mapsto e\\
%%f_{g_1}(f_{g_2}(\dots f_{g_n}(*)\dots )) &\mapsto g_1 g_2  \dots  g_n
%%\end{align*}
%$F$ is an isomorphism: by construction, $F(u)=F(v) \Leftrightarrow u=v$ for $u,v$ terms of $T1$, and all $m \in \mathcal{M}$ can be written as $F(t)$ for $t \in T1$.
%Hence $T1=1$ iff $\mathcal{M}$ is trivial. This reduction gives that affineness is undecidable.
%\end{proof}

Using the equivalence between preserving a class of equations and affineness, together with the latter being undecidable, we obtain a general result on  equation preservation.

\begin{corollary}
The following problem is undecidable: given a finite theory $(\Sigma,E)$ and an equation $t_1=t_2$, does the monad $T$ presented by $(\Sigma,E)$ preserve $t_1=t_2$?
\end{corollary}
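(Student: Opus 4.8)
The plan is to reduce the (undecidable) affineness problem from Theorem~\ref{thm:affine-undecidable} to the preservation problem, using the equivalence between affineness and preservation of one-drop equations provided by Lemma~\ref{lemma:affine1} and Theorem~\ref{thm:drop-affine}. Concretely, suppose we had a decision procedure for the preservation problem. Given a finite theory $(\Sigma, E)$, I would first enlarge the signature to $\Sigma' = \Sigma \cup \{\cdot, 0\}$ by adding a fresh binary symbol $\cdot$ and a fresh constant $0$, keeping the equation set $E$ unchanged; this does not disturb the presented monad in any essential way, and in particular $T'1 \cong T1$, so $T'$ is affine iff $T$ is. Then I would ask the hypothetical oracle whether $T'$ preserves the strict-drop (hence one-drop) equation $x \cdot 0 = 0$. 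By Theorem~\ref{thm:drop-affine}, if $T'$ preserves this equation then $T'$ is affine; by Theorem~\ref{affinepresdrop}, if $T'$ is affine then it preserves every strict-drop equation, in particular $x \cdot 0 = 0$. So $T'$ preserves $x \cdot 0 = 0$ if and only if $T'$ is affine, if and only if $T$ is affine. Thus the oracle would decide affineness, contradicting Theorem~\ref{thm:affine-undecidable}.

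A slightly cleaner variant avoids enlarging the signature at all: the undecidable instances produced in the proof of Theorem~\ref{thm:affine-undecidable} are theories of $\mathcal{M}$-actions, whose signatures consist only of unary operations $f_g$. For such a signature one can already write down a one-drop equation, for instance $f_g(f_g(x)) = f_g(f_g(y))$ for some fixed $g \in G$ (or more simply any equation of the shape $f_{g_1}(\cdots f_{g_n}(x)\cdots) = f_{h_1}(\cdots f_{h_k}(y)\cdots)$ with distinct variables $x \neq y$), which is drop because $x$ occurs on the left but not the right. Feeding $(\Sigma, E)$ together with such an equation to the oracle and invoking Theorem~\ref{thm:drop-affine} and Theorem~\ref{affinepresdrop} again yields a decision procedure for triviality of $\mathcal{M}$. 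Either variant works; I would present whichever keeps the bookkeeping lightest, probably the first one since the equation $x \cdot 0 = 0$ is the canonical strict-drop example already used in the paper.

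The only subtlety — and the one step worth spelling out rather than waving through — is the direction that uses Theorem~\ref{affinepresdrop}: one must be sure that the equation chosen really is a strict-drop (or at least that affineness implies its preservation), so that \emph{non-}affine monads are correctly classified as not preserving it. The equation $x \cdot 0 = 0$ has no duplicated variables and $x$ occurs on one side only, so it is strict-drop by Definition~\ref{def:drop}, and Theorem~\ref{affinepresdrop} applies. There is no real obstacle here; the corollary is essentially immediate once Theorems~\ref{thm:affine-undecidable}, \ref{thm:drop-affine} and~\ref{affinepresdrop} are in place, and the proof is just the composition of the two implications into an ``if and only if'' plus a trivial reduction.
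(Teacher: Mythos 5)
Your overall strategy --- reduce affineness (Theorem~\ref{thm:affine-undecidable}) to preservation via the two-way equivalence given by Theorems~\ref{affinepresdrop} and~\ref{thm:drop-affine} --- is exactly the paper's intended argument, and your discussion of why both directions of the ``iff'' are needed is correct. However, your preferred first variant contains a genuine error. In Definition~\ref{def:equationpreserved} the signature of the equation $t_1=t_2$ is the signature of the \emph{algebras being lifted}; it is completely independent of the theory $(\Sigma,E)$ presenting the monad $T$. So there is no need to add $\cdot$ and $0$ to $\Sigma$ in order to ask whether $T$ preserves $x\cdot 0 = 0$: one simply feeds the oracle the unmodified theory $(\Sigma,E)$ together with the equation $x\cdot 0=0$ over the separate signature $\{\cdot,0\}$. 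Worse, the modification you propose is not harmless: adding a fresh binary symbol and a fresh constant to $\Sigma$ with no new equations changes the presented monad drastically. The set $T'1$ of $(\Sigma',E)$-terms over one generator then contains $*$, $0$, $*\cdot 0$, $(*\cdot *)\cdot 0$, \dots\ as pairwise distinct elements, so $T'1\not\cong T1$ and $T'$ is \emph{never} affine; your reduction would therefore always answer ``not affine'' regardless of $T$. (In addition, $T'$ need not even be commutative, hence not monoidal, so the lifting $\widehat{T'}$ and the preservation question are not even defined by the paper's machinery.)

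Your second variant is essentially sound and recovers the paper's proof: for the instances of Theorem~\ref{thm:affine-undecidable} (theories of $\mathcal{M}$-actions), $T$ preserves a fixed strict-drop equation if and only if $T$ is affine if and only if $\mathcal{M}$ is trivial. The only cosmetic point is that the equation you choose there, $f_g(f_g(x))=f_g(f_g(y))$, needlessly reuses the symbols of the presenting signature (and breaks down when $G=\emptyset$); the cleanest reduction is the constant map $(\Sigma,E)\mapsto\bigl((\Sigma,E),\,x\cdot 0=0\bigr)$, with the equation over its own two-symbol signature. With that correction the proof is exactly the paper's: preservation of $x\cdot 0=0$ is equivalent to affineness by Theorems~\ref{affinepresdrop} and~\ref{thm:drop-affine}, and affineness is undecidable.
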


\section{Relevant Monads and Dup Equations}\label{sec:dup}

We now relate so-called \emph{dup} equations, featuring duplications of variables,
to \emph{relevant} monads. 

\begin{definition}[Dup equations]\label{def:dup}
An equation $t_1=t_2$ is called
\begin{enumerate*}
	\item \emph{dup} when at least one variable appears more than once in $t_1$ or in $t_2$;
    \item \emph{2-dup} when it is dup and each variable appears at most twice in~$t_1$ or~$t_2$;
	\item \emph{strict-dup} when it is not linear and each variable of $\Var(t_1) \cup \Var(t_2)$ 
	appears at least once in $t_1$ and in $t_2$.
\end{enumerate*}
\end{definition}
Equivalently, an equation is strict-dup when it is not drop, and some variable appears at least twice
in $t_1$ or $t_2$. 
Every strict-dup equation is a dup equation. See Figure~\ref{venn} for an overview of dup and drop equations.
%
%\begin{example}
For example, the law of idempotence $x = x \cdot  x $ is a strict-dup equation. 
Distributivity of $\cdot $ over $+$, written $x\cdot (y+z)=x\cdot y+x\cdot z$ is strict-dup as well.
The equation $x\cdot (y\cdot y) = y\cdot y$ is dup, because $y$ is duplicated, but it is not strict-dup. 
%\end{example}
% Again, we can refine this definition to obtain a class of equations where some variables may be duplicated, but no variable can be dropped:
%\begin{definition}
%An equation $t_1=t_2$ is called 
%\end{definition}
%\begin{example}
%The law of idempotence $x = x * x $ is a strict-dup equation, as the only variable $x$ appears respectively once and twice on each side. Distributivity of $*$ over $+$, written $x*(y+z)=x*y+x*z$, falls into the same subclass.
%\end{example}

\noindent
Relevant monads are introduced by Kock in~\cite{kock1972bilinearity}, and extensively studied by Jacobs in~\cite{jacobs1994semantics}.
\begin{definition}\label{definition:relevant}
    A monoidal monad $T \colon \cat \rightarrow \cat$ on a Cartesian monoidal category $\cat$ 
    is \emph{relevant} if one of the following two equivalent conditions hold: diagram $\spadesuit$ commutes for all objects $A$;  diagram $\heartsuit$  commutes for all objects $A,B$:
\begin{equation*}\label{phi_o_chi_def}
    \vcenter{\xymatrix@C+2pc@R-1.3pc{
TA \ar[r]^-{\Delta}
	\ar[rd]_-{T \Delta}
& 
TA \times TA \ar[d]^-{\te}
\\
& T(A \times A)
}}\qquad\spadesuit\qquad\qquad
    \vcenter{\xymatrix@C+2pc@R-1.3pc{
T(A \times B) \ar[r]^-{\chi}
	\ar[rd]_{\id}
& 
TA \times TB \ar[d]^-{\te}
\\
& T(A \times B)
}}\qquad\heartsuit
\end{equation*}
\end{definition}

\noindent
See Figure~\ref{monadtable} for  examples of relevant monads.
We recall the following result.
%\jurriaan{introduce these earlier}
% We recall the following result about relevant monads and preservation of equations.
\begin{theorem}[\cite{layers}]
    Any relevant monad on $\Set$ preserves strict-dup equations.
\end{theorem}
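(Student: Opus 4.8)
The plan is to prove that a relevant monad $T$ preserves strict-dup equations by showing that the lifting $\widehat{T}$ behaves, on duplicated variables, exactly like the original algebra behaves, using the defining diagram $\spadesuit$ of relevance. Fix a strict-dup equation $t_1 = t_2$ over variables $V$ with $\Var(t_1)\cup\Var(t_2) = V$, a $\Sigma$-algebra $\algb$ with carrier $A$ satisfying $t_1 = t_2$, and an arbitrary assignment $g\colon V \to TA$. I must show $g^\sharp(t_1) = g^\sharp(t_2)$ computed in $\widehat{T}\algb$.

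The key technical step is a lemma relating the interpretation of a term $t$ with $\Var(t) = \{v_1,\dots,v_k\}$ in $\widehat{T}\algb$ to its interpretation in $\algb$. Concretely, writing $[t]_\algb\colon A^k \to A$ for the map sending $(a_1,\dots,a_k)$ to the value of $t$ under the assignment $v_i \mapsto a_i$, I would prove by induction on $t$ that the interpretation $[t]_{\widehat{T}\algb}\colon (TA)^k \to TA$ factors as $T[t]_\algb \circ \te^k$ (the $k$-ary strength applied to the tuple of values of $v_1,\dots,v_k$), provided every variable actually occurs in $t$. The base case $t = v_i$ is trivial; the inductive step for $t = \sigma(s_1,\dots,s_n)$ uses the definition $\sigma_{\widehat{T}\algb} = T\sigma_\algb \circ \psi^{|\sigma|}$, the induction hypotheses for the $s_j$, naturality of $\psi$ and $\te^k$, and the coherence/associativity laws for the $n$-ary strength that let one reassemble the various $\te$'s into a single $\te^k$ over the combined variable set. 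This is the "no duplication inside a single side" bookkeeping; note that a strict-dup equation can have duplicated variables across $t_1$ and $t_2$, but the point is that the classes we handle allow us to keep each side's variable-collection under control — the subtlety that distinguishes strict-dup from general dup is precisely that each variable in $V$ occurs in both $t_1$ and $t_2$, so $\Var(t_1) = \Var(t_2) = V$, which makes the two factorizations comparable.

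With the lemma in hand, both sides become $T[t_1]_\algb \circ \te^{|V|}\circ \langle g(v_1),\dots,g(v_{|V|})\rangle$ and $T[t_2]_\algb \circ \te^{|V|}\circ \langle g(v_1),\dots,g(v_{|V|})\rangle$. Since $\algb \models t_1 = t_2$ we have $[t_1]_\algb = [t_2]_\algb$ as maps $A^{|V|}\to A$ (here it matters that both terms range over the same ordered tuple of variables, i.e. the set $V$ — another place where strict-dup is used), hence $T[t_1]_\algb = T[t_2]_\algb$, and the two composites agree. The apparent gap — and the reason relevance is needed rather than mere monoidality — is that when a variable is genuinely duplicated within one side (say $t_1 = v\cdot v$), the naive computation produces a map $(TA)^2 \to TA$ applied to $\langle g(v), g(v)\rangle = \Delta\circ g(v)$, and to collapse this into the "one copy of each variable" normal form I must rewrite $\te\circ\Delta$ as $T\Delta$; that is exactly diagram $\spadesuit$. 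So the induction hypothesis should be stated for terms with repeated variables, with $\te^k$ replaced by the appropriate "diagonalized strength," and the $\spadesuit$ law is invoked each time a variable is duplicated to show this diagonalized strength equals $T$ applied to the corresponding diagonalized tuple map on $A$.

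The main obstacle I anticipate is formulating the inductive invariant cleanly enough to handle arbitrary occurrence patterns: a term may use variables with varying multiplicities and in varying orders, so the bookkeeping of which projection/diagonal reindexing map $r\colon A^{|V|} \to A^{m}$ describes the "raw" occurrences of $t$, and proving that $[t]_{\widehat{T}\algb}$ on a tuple $\bar{x}\in (TA)^{|V|}$ equals $T([t]^{\mathrm{raw}}_\algb)\circ \te^{m}\circ (\text{reindex by } r)(\bar{x}) = T([t]^{\mathrm{raw}}_\algb \circ r)\circ \te^{|V|}(\bar{x})$, requires repeated appeals to naturality together with the relevance law (for diagonals) and — if any variable of $V$ failed to occur in $t$ — something like affineness (for projections); since strict-dup equations are exactly the non-drop ones, every variable does occur in both sides, so only relevance, not affineness, is invoked. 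I would isolate all the strength-manipulation coherence facts (reindexing a tuple commutes with $\te^k$ up to the evident map, $\spadesuit$ generalizes from binary diagonal to arbitrary diagonals) into a small preparatory lemma so that the main induction reads smoothly.
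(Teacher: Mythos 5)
Your proposal is correct and follows essentially the same route as the cited proof, whose machinery the paper recalls in Appendix~\ref{sec:appendix_preserve}: your ``reindexing map $r$'' followed by the ``raw'' evaluation is exactly the $\prepare(t)$/$\evaluate(t)$ decomposition of Lemma~\ref{lem:soundcompl}, your observation that only the diagonal/permutation part needs to be pushed past the strength is the commutation of the residual diagrams in Lemma~\ref{lemma:preservediag}, and your use of $\spadesuit$ (generalised to $n$-ary diagonals, cf.\ Proposition~\ref{prop:relevance-weakening}) together with the absence of dropped variables is precisely where relevance, and only relevance, enters.
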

Does this theorem extend to an equivalence? Can a non-relevant monad preserve a dup equation? Again, Gautam provides an answer in the case of the powerset, which is not a relevant monad: $\po$ does not preserve dup equations. 
%\begin{theorem}[\cite{gautam1957validity}]
%$\po$ does not preserve dup equations.
%\end{theorem}
We now study the general question for a monoidal monad $T$. For this purpose, we define a framework of string-like diagrams to easily represent the application of $T$ on categorical objects.
\subparagraph*{Diagrammatic framework}

Our diagrams are inspired from well-known graphical representations for monoidal categories. The concept of \emph{functorial boxes} is introduced by Cockett and Seely in \cite{cockett1999linearly} then further examined by Melli\`es in \cite{mellies2006functorial}, where functors are represented as boxes surrounding morphisms and objects. In a similar fashion as our framework, monoidal properties allow to gather several objects inside a single sleeve. More details on this representation are given by McCurdy~\cite{mccurdy2011graphical}, although he chooses to focus on monoidal functors satisfying the Frobenius property, which we do not assume here. Let us first summarise a few central ideas of our diagram calculus.

\begin{wrapfigure}{r}{4.6cm}
%\begin{center}
\vspace{-.1cm}
\begin{tabular}{@{}P{2cm} P{2cm}@{}}
$TX \times TY$ & $T(X \times Y)$
\\
\includegraphics[scale=0.375]{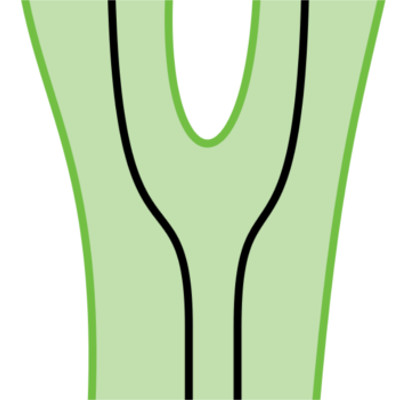} 
&
\includegraphics[scale=0.375]{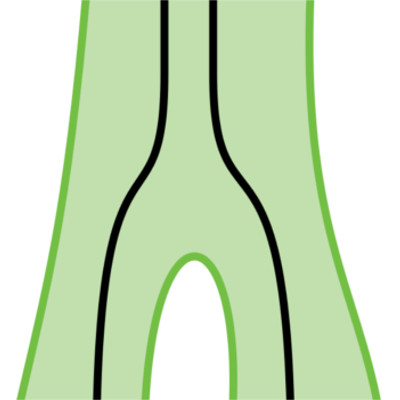} 
\\
$T(X \times Y)$ & $TX \times TY$
\\
\hline $\chi$ &  $\te$
\end{tabular}
%\end{center}
\vspace{-.4cm}
\end{wrapfigure}
An object $X$ of our category is now represented by a thread (or `wire'), and the application of $T$ on this object results in a `sleeve' covering it. We read a diagram from bottom to top and represent products implicitly as horizontal adjacency. For instance, the morphism $\chi\colon  T(X \times Y) \to TX \times TY$ is modelled by a cup-like shape where one sleeve containing two objects splits into two sleeved objects. $\te$ is modelled in the opposite way and merges two sleeved objects into a single sleeve.

\begin{wrapfigure}{r}{4.6cm}
%\begin{center}
\vspace{-.5cm}
\begin{equation*}%\label{monoidal_unit_sleeve}
\includegraphics[align=c,scale=0.11]{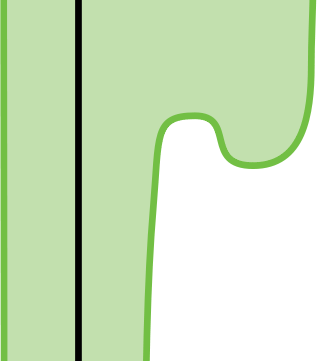} 
 \ =\ 
\includegraphics[align=c,scale=0.11]{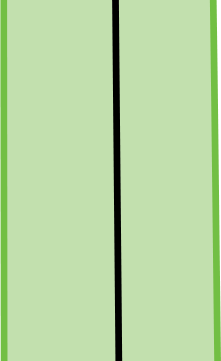} 
\end{equation*}
%\end{center}
\vspace{-.5cm}
\end{wrapfigure}
Note that the object $1$ is not represented in our diagrams. By the isomorphism $X\times 1 \simeq 1$, we can imagine the presence of $1$ as a vertical thread anywhere on the diagram without affecting calculations. Some deformations of the outline of sleeves are allowed: for instance, the equality in the diagram on the right corresponds to the right unitality of a monoidal functor. %\eqref{diag:monoidal:left_unit}. 
Note that the neither the product nor the unitor $\rho$ is explicitly represented in the diagram. 
\begin{wrapfigure}{r}{6.5cm}
%\begin{center}
\vspace{-.3cm}
\begin{tabular}{P{2cm} M{0.5cm} P{2cm}}
\includegraphics[align=c,scale=0.115]{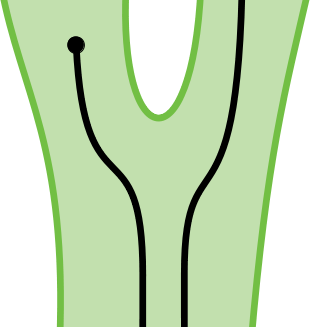} 
&$=$&
\includegraphics[align=c,scale=0.115]{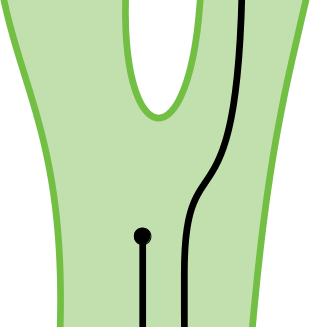} 
\\
\hline
 $(T!\times \id) \circ \chi$ & $=$ &  $\chi \circ T(!\times \id) $
\end{tabular}
%\end{center}
\vspace{-.5cm}
\end{wrapfigure}

We can `delete' an object by mapping it to the final object $1$, represented as an unfinished vertical thread. Naturality of $\chi$ and $\te$ allow to `pull' these threads to the bottom of the diagram, as shown in the equation on the right.

The following result means that unfinished threads may be ignored.
\begin{lemma}\label{lemma:fgsleeve}
If we have any of the two equalities $(i)$ or $(ii)$ below, then $f=g$.
\begin{align*}
(i)
    \begin{tikzpicture}[baseline=(current bounding box.center)]
        \draw (0, 0) node[inner sep=0] {
            \includegraphics[scale=0.07]{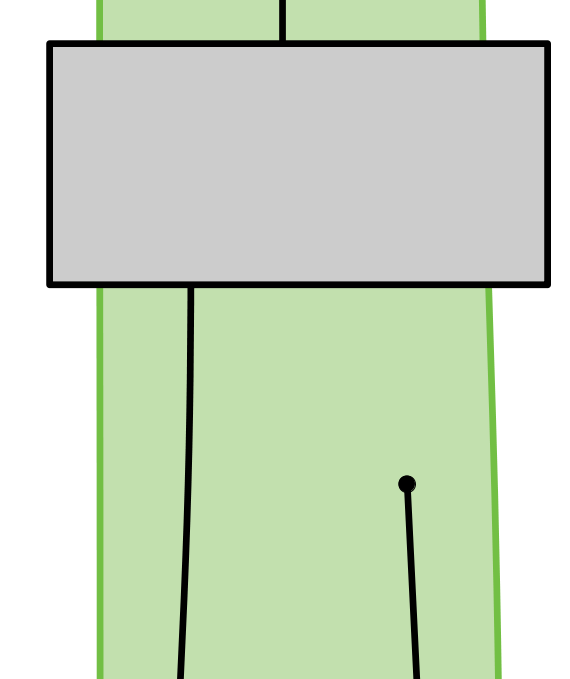}};
        \draw (0,0.45) node{f};
    \end{tikzpicture}
    \ = \ 
    \begin{tikzpicture}[baseline=(current bounding box.center)]
        \draw (0, 0) node[inner sep=0] {
            \includegraphics[scale=0.07]{Sleeve_pics/drop2.png}};
        \draw (0,0.45) node{g};
    \end{tikzpicture}
    & \qquad(ii)
        \begin{tikzpicture}[baseline=(current bounding box.center)]
        \draw (0, 0) node[inner sep=0] {
            \includegraphics[scale=0.07]{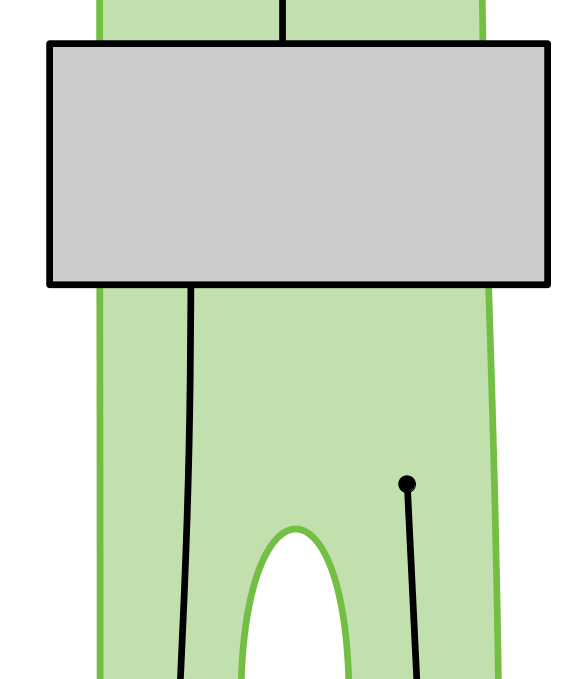}};
        \draw (0,0.45) node{f};
    \end{tikzpicture}
    \ = \ 
    \begin{tikzpicture}[baseline=(current bounding box.center)]
        \draw (0, 0) node[inner sep=0] {
            \includegraphics[scale=0.07]{Sleeve_pics/drop1.png}};
        \draw (0,0.45) node{g};
    \end{tikzpicture}
\end{align*}
%    Then: $f = g$.
\end{lemma}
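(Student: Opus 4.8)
The plan is to read the two pictures as equalities between concrete morphisms of $\Set$, to delete the redundant ``dead-end'' wire by projecting it away, and then to read off $f=g$. First I would spell out what each picture says. In both of them $f$ (resp.\ $g$) is the morphism sitting on the wires that carry genuine data, whereas the unfinished thread is a wire that is created out of $1$ and/or sent back to $1$, so it contributes only a trivial component with codomain $1$ (possibly under $T$, the relevant structure map being $\psi^0=\eta_1$). Reading bottom to top, hypothesis $(i)$ becomes an equality of composites in which the $f$-part runs alongside a fresh sleeve --- so a $\chi$ occurs --- whereas $(ii)$ becomes an equality in which the dead-end wire has been merged into the sleeve carrying $f$ --- so a $\te$ occurs.

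Next I would use naturality of $\te$ and $\chi$ with respect to $!$ --- precisely the rewrite $(T!\times\id)\circ\chi=\chi\circ T(!\times\id)$ pictured just above --- together with the unitality coherences of the monoidal functor $(T,\te,\psi^0)$, to rearrange each side of each hypothesis into a pairing $\langle h_f, d\rangle$ (and $\langle h_g, d\rangle$ on the other side), up to the unitor isomorphisms. Here $h_f$ is $f$ itself up to unitors, and $d$ is the dead-end component: a morphism into (the $T$-image of) $1$ that does not mention $f$ at all.

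At that point both hypotheses read $\langle h_f, d\rangle=\langle h_g, d\rangle$ modulo the fixed isomorphisms; post-composing with the first projection cancels $d$ and leaves $h_f=h_g$, that is $f=g$. So either $(i)$ or $(ii)$ yields the conclusion.

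I expect the second step to be the main obstacle. Translating the pictures into honest morphism equations, and then reorganising them --- by repeated use of naturality of $\te$ and $\chi$ and of the monoidal coherences --- into the split form with an $f$-dependent factor paired with a dead-end factor, requires care; in particular one must verify that the dead-end component is genuinely independent of $f$, so that projecting it away is legitimate. Steps one and three are then routine bookkeeping.
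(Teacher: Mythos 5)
Your overall reading of the pictures is off in a way that breaks the core step. In these diagrams the unfinished thread is an \emph{input} that is deleted, i.e.\ mapped into the terminal object: hypothesis $(i)$ is an equality of the form $f \circ (\id \times \mathord{!}) = g \circ (\id \times \mathord{!})$, and hypothesis $(ii)$ is $f \circ T(\id \times \mathord{!}) \circ \te = g \circ T(\id \times \mathord{!}) \circ \te$, where the deleted wire sits inside the sleeve. Neither composite is a pairing $\langle h_f, d\rangle$ into a product --- there is no ``dead-end output component'' $d$ to discard --- so your plan of post-composing with a projection has nothing to act on. More fundamentally, post-composition can never undo a \emph{pre}-composition: from $f\circ e = g\circ e$ you can only conclude $f=g$ by arguing that $e$ is an epimorphism, and that is exactly the ingredient missing from your proposal. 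The paper's proof is precisely this: in case $(i)$, $(\id\times\mathord{!})$ is epi and cancels immediately; in case $(ii)$, naturality of $\te$ pushes the deletion out of the sleeve, turning the hypothesis into $f\circ\te_{X,1}\circ(\id\times T!) = g\circ\te_{X,1}\circ(\id\times T!)$, the epi $(\id\times T!)$ is cancelled, and the residual $\te_{X,1}$ is stripped off by precomposing with $(\id\times\eta_1)\circ\rho^{-1}$ and invoking the left-unit coherence of the monoidal functor. Your mention of naturality and the unit coherences shows you saw the right tools for the second half of case $(ii)$, but without the epimorphism cancellation the argument does not go through.
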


%Note also that the following property holds trivially for a monoidal monad on $\Set$ and a set $A$ (see for instance \cite{jacobs1994semantics}).
%\begin{equation}\label{deltatdelta}
% \Delta_{TA} = \chi_{A,A} \circ T \Delta_A
%\end{equation}
%In terms of our diagrams, it implies that duplicating a sleeved object amounts to duplicating the object inside the sleeve, then separating both copies using $\chi$. 

\begin{wrapfigure}{r}{6.4cm}
%\begin{center}
\vspace{-0.85cm}
\begin{equation}\label{relevancediag}
\begin{tabular}{P{2cm} M{0.5cm} P{2cm}}
$T(X \times Y)$ && $T(X \times Y)$
\\
\includegraphics[align=c,scale=.5]{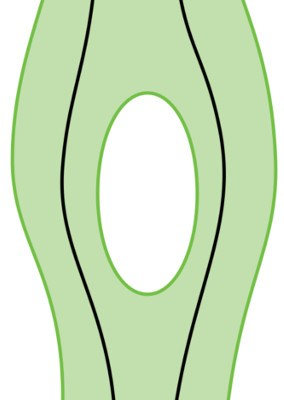} 
&\ $=$\ &
\includegraphics[align=c,scale=.5]{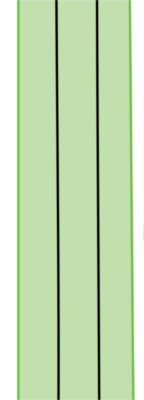} 
\\
$T(X \times Y)$ && $T(X \times Y)$
\\
\hline
 $\te \circ \chi$ & $=$&  $\id$
\end{tabular}
\end{equation}
%\end{center}
\vspace{-1.1cm}
\end{wrapfigure}
Finally, we focus on the composition $\te \circ \chi$, which splits a sleeved product, then reunites both components into one sleeve. Recall that relevance means 
that this composition yields the identity (Definition~\ref{definition:relevant}).
In \cite{mccurdy2011graphical}, this diagrammatic equality is presented as a property of connectivity of functorial regions. We like to describe the graphical aspect of this property as follows: applying $\chi$ followed by $\psi$ results in a `bubble' in our sleeve, surrounded by two threads representing arbitrary objects, and relevance allows to pop this bubble, as in~\eqref{relevancediag}. In the rest of this section, we develop a method to reduce complex equational problems to this `bubble' property.

\subparagraph*{Proving relevance from dup preservation}
First, we consider the simplest dup equation: idempotence
of a binary operation. Assuming that $T$ preserves it, our strategy
is to define an algebra $\algb$ and an operation $m$ such that
$m(x,x)=x$, and then to derive the relevance of $T$ from the
preservation of the idempotence of $m$. For such an algebra (whose
carrier is written $A$), we draw the following diagrams after this paragraph.
The grey
box represents our binary idempotent operation $m$, hence the leftmost
diagram represents the term $m(x,x)$ on the lifted algebra $\widehat{T}
\algb$. By preservation of idempotence, it must be equal to the
identity modelled by the right diagram in the leftmost equality.
In order to derive the property of relevance from this equality,
we conveniently choose $A$ and $m$. For any two sets $X,Y$, we
define $A \equiv X \times Y$ and
$m((a,b),(c,d)) \equiv (a,d)$.
Categorically~$m = (\pi_1 \times
\pi_2)$. It is idempotent: $m((a,b),(a,b))= (a,b)$.
Hence for this algebra,
the left equality becomes the right one:
\begin{center}
\begin{tabular}{P{2cm} M{0.1cm} P{2cm}}
$TA$ && $TA$
\\
\includegraphics[align=c,scale=0.6]{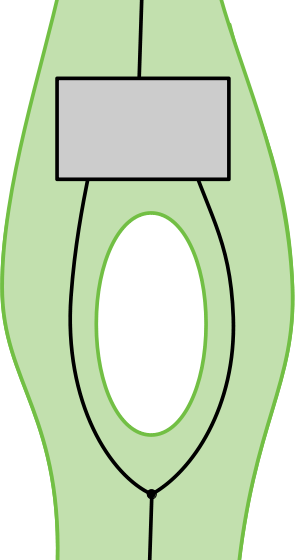} 
&$=$&
\includegraphics[align=c,scale=0.890]{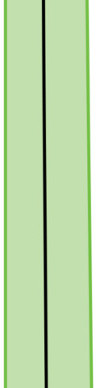} 
\\
$TA$ && $TA$
\\
\hline
 $T(m) \circ \te \circ \Delta$ & $=$&  $\id$
    \end{tabular}
    \qquad
\begin{tabular}{P{2cm} M{0.1cm} P{2cm}}
$T(X\times Y)$ && $T(X\times Y)$
\\
\includegraphics[align=c,scale=0.9]{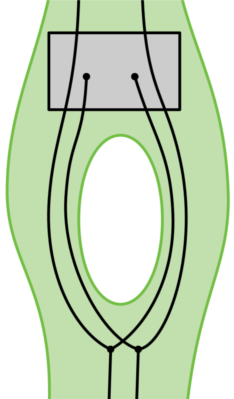} 
&$=$&
\includegraphics[align=c,scale=0.9]{Sleeve_pics/id_morphism2.png} 
\\
$T(X\times Y)$ && $T(X\times Y)$
\\
\hline
 $T(m) \circ \te \circ \Delta$ & $=$&  $\id$
\end{tabular}\end{center}
Pulling down the threads corresponding to deleted objects,
    we obtain diagram \eqref{relevancediag}.
%\louis{maybe more info on discarding threads}

\begin{theorem}\label{relevantiffdup}
Let $T$ be a monoidal monad. If $T$ preserves $m(x,x)=x$, then $T$ is relevant.
\end{theorem}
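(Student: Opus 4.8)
\textbf{Proof plan for Theorem~\ref{relevantiffdup}.} The strategy is exactly the one outlined in the paragraph preceding the statement, made precise. First I would fix arbitrary sets $X$ and $Y$, set $A \equiv X \times Y$, and define the binary operation $m \colon A \times A \to A$ by $m((a,b),(c,d)) = (a,d)$, i.e.\ $m = \pi_1 \times \pi_2 \colon (X\times Y)\times(X\times Y) \to X\times Y$ after reassociating and swapping the middle two factors. The key elementary observation is that $m$ is idempotent: $m((a,b),(a,b)) = (a,b)$, so the algebra $\algb = (A, m)$ satisfies $m(x,x)=x$. Since $T$ preserves this equation by hypothesis, the lifted algebra $\widehat{T}\algb$ also satisfies it, which unfolds (using the definition of the lifting, $m_{\widehat{T}\algb} = T m \circ \psi^2$, and the fact that the interpretation of $m(x,x)$ composes the binary operation with the diagonal) to the identity
\begin{equation*}
T m \circ \psi_{A,A} \circ \Delta_{TA} = \id_{TA}.
\end{equation*}

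Next I would rewrite the left-hand side in terms of $\chi$ and $\psi$ for the factors $X$ and $Y$ separately. The point is that $m = \pi_1 \times \pi_2$ factors through the ``shuffle'' of the four factors, and by naturality of $\psi$ (equivalently, using the diagrammatic pull-down of deleted-object threads, Lemma~\ref{lemma:fgsleeve}) the composite $T m \circ \psi_{A,A} \circ \Delta_{TA}$ on $TA = T(X\times Y)$ equals $\psi_{X,Y} \circ \chi_{X,Y}$. Concretely: $\Delta$ duplicates the sleeved object $T(X\times Y)$; applying $\psi$ merges these into $T((X\times Y)\times(X\times Y))$; then $T m = T(\pi_1 \times \pi_2)$ projects, inside the sleeve, onto the first $X$ and the second $Y$. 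Naturality of $\chi$/$\psi$ with respect to the projections lets one commute these projections past the merge, producing precisely the ``bubble'' $\psi_{X,Y}\circ\chi_{X,Y}$ with the two discarded threads pulled to the bottom and ignored via Lemma~\ref{lemma:fgsleeve}. Thus the identity above becomes $\psi_{X,Y} \circ \chi_{X,Y} = \id_{T(X\times Y)}$, which is exactly diagram $\heartsuit$ of Definition~\ref{definition:relevant} for the objects $X,Y$. Since $X,Y$ were arbitrary, $T$ is relevant.

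\textbf{Main obstacle.} The genuinely delicate step is the second one: carefully justifying that $T m \circ \psi_{A,A} \circ \Delta_{TA}$ really collapses to $\psi_{X,Y}\circ\chi_{X,Y}$. This requires being precise about how the associator, swap, and projection isomorphisms for the four-fold product interact with $\psi^2$ (the $n$-ary structure map) and with naturality, and in particular about discarding the ``unfinished threads'' corresponding to the second $X$-component and first $Y$-component. The diagrammatic calculus and Lemma~\ref{lemma:fgsleeve} are designed exactly to make this bookkeeping manageable, so I would present this step pictorially --- deform the diagram for $T(m)\circ\psi\circ\Delta$, pull the two deleted-object threads down to the bottom, invoke Lemma~\ref{lemma:fgsleeve} to erase them, and read off the bubble --- rather than unfolding all the coherence isomorphisms by hand. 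Everything else (idempotence of $m$, the unfolding of equation preservation via Definition~\ref{def:equationpreserved} and the lifting formula) is routine.
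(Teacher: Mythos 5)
Your proposal is correct and follows essentially the same route as the paper: the same algebra $A = X\times Y$ with $m = \pi_1\times\pi_2$, the same use of its idempotence, and the same collapse of $T(m)\circ\psi\circ\Delta_{TA}$ to $\psi_{X,Y}\circ\chi_{X,Y}$ via naturality with respect to the projections (the paper makes this explicit by inserting $\Delta_{TA}=\chi_{A,A}\circ T\Delta_A$ and applying naturality of $\psi$ and $\chi$, rather than by the diagrammatic thread-pulling you describe, but the content is identical). The step you flag as delicate is handled in the paper by exactly the bookkeeping you outline, so there is no gap.
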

\begin{proof}
We show here the categorical version of our diagrammatic proof:
\begin{align*}
\id & \ =\ T(m) \circ \te \circ \Delta &x*x \text{ holds on }\widehat{T}\algb\\
 &\ =\  T(m) \circ \te \circ \chi \circ T\Delta &\text{see below}\\
 &\ =\  T(\pi_1 \times \pi_2) \circ \te \circ \chi \circ T\Delta &\text{definition of }m\\
 &\ =\  \te  \circ (T\pi_1 \times T\pi_2) \circ \chi \circ T\Delta &\text{naturality}\\
 &\ =\  \te  \circ \chi \circ T(\pi_1 \times \pi_2)  \circ T\Delta &\text{naturality}\\
% &\ =\  \te \circ \chi \circ T((\pi_1 \times \pi_2)  \circ \Delta) \\
    &\ =\  \te \circ \chi & & % \text{idempotence} & %\qedhere
\end{align*}
In the second step, we used that 
$\Delta_{TA} = \chi_{A,A} \circ T \Delta_A$, a property that holds trivially for any 
monoidal monad on $\Set$ and set $A$ (see for instance \cite{jacobs1994semantics}).
\end{proof}

%\subsubsection{Preservation of $t[x] = t[x \cdot x]$}
\vspace{-.2cm}
\begin{wrapfigure}{r}{0.3\textwidth}
\vspace{-1cm}
\begin{gather}\label{yxz}
    (yx)z = (y(xx))z. \\
    \vcenter{\hbox{\includegraphics[scale=0.08]{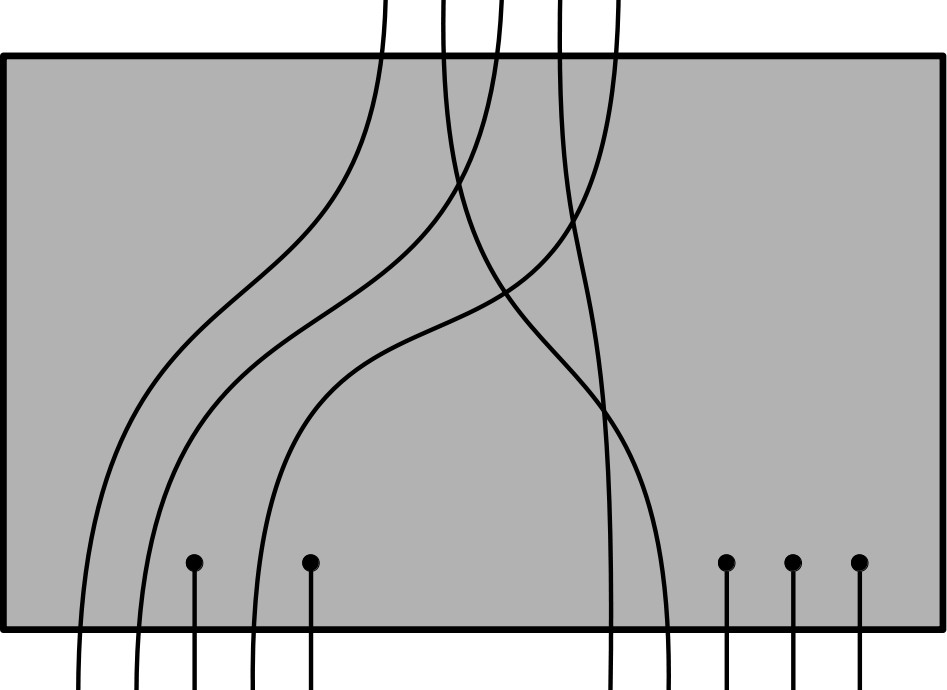}}}
    \label{yxzbox}\\
    \begin{tikzpicture}[baseline=(current bounding box.center),
            level distance=0.3cm, sibling distance=0.9cm]
        \coordinate
        child {
            child { node{$y$} }
            child {
                child { node{$x_1$} }
                child { node{$x_2$} }
            }
        }
        child { node{$z$} };
    \end{tikzpicture}
    \label{yxztree}
\end{gather}
    \vspace{-.7cm}
\end{wrapfigure}%
Our method to show relevance from idempotence preservation
    can be applied to a more general class of equations. Let us now consider a term $t$ that only contains binary operations and no variable duplication. We focus on equalities of the form~$t[x] = t[x \cdot x]$ with $\cdot \in \Sigma$, in other words where both sides only differ by one variable duplication. We sketch the method by treating a concrete example:
    equation~\eqref{yxz} on the right.
Note that this equality is of the desired form with~$t[N] \equiv (yN)z$.
Assume $T$ preserves~\eqref{yxz} and let~$X$ be a set.
Once again, we define a convenient algebra: its carrier is~$X^5$, and every  binary operation of $\Sigma$ is interpreted with the morphism~$m\colon  X^5 \times X^5 \to X^5$, graphically represented in~\eqref{yxzbox}.
The map $m$ can be categorically defined as $\langle \pi_1, \pi_7, \pi_2, \pi_6, \pi_4 \rangle$, but we will rather describe it as making a series of connections (seen as wires) between the components of its output and of its left and right inputs. The outputting wires are respectively labelled~$L^*$, $R^*$, $LR^*$,
    $RL^*$ and $LRL^*$.
We represent the syntax tree of~$t$ in~\eqref{yxztree} ($x_1$ and $x_2$ representing the two duplicates of $x$). Encoding locations in the tree as words with letters $L$ and $R$, 
	 the node where the duplication occurs is labelled $LR$. Let us now picture this tree with $m$-boxes on each node. By construction, the label we gave to each outputting wire describes the set of locations in the tree that are traversed by this thread.

\begin{lemma}\label{msatisfies}
The binary operation $m$ on $X^5$ satisfies the equation \eqref{yxz}.
\end{lemma}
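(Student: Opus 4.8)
The plan is to verify directly that the map $m = \langle \pi_1, \pi_7, \pi_2, \pi_6, \pi_4 \rangle \colon X^5 \times X^5 \to X^5$, when used to interpret the binary operation $\cdot$, makes both sides of equation~\eqref{yxz}, namely $(yx)z = (y(xx))z$, evaluate to the same function of the variables $x,y,z \in X^5$. Since all three variables range over the carrier $X^5$, an element assigned to a variable is a $5$-tuple; it is cleaner to track things componentwise, i.e.\ to think of each variable as carrying five ``coordinate wires''. The labelling $L^*, R^*, LR^*, RL^*, LRL^*$ records, for each output coordinate of $m$, which path through a syntax tree that coordinate follows: reading the definition of $m$, output coordinate~$1$ comes from the left input (so it tracks the leftmost branch $L^*$), coordinate~$2$ from the left-then-right of the... — more precisely, the five projections $\pi_1,\pi_7,\pi_2,\pi_6,\pi_4$ of $m$ pick out, from the concatenated $10$-tuple (left input coordinates $1$–$5$, right input coordinates $6$–$10$), the data that will flow along the five named wires.

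The key computation is then to evaluate $f^\sharp$ on both terms for an arbitrary assignment $f$ sending $x \mapsto \bar x$, $y \mapsto \bar y$, $z \mapsto \bar z$ in $X^5$, and to check the resulting $5$-tuples agree coordinate by coordinate. Concretely: on the left-hand side $t[x] = (yx)z$, the inner node computes $m(\bar y, \bar x)$, and then the outer node computes $m(m(\bar y,\bar x), \bar z)$; on the right-hand side $t[x\cdot x] = (y(xx))z$ we first compute $m(\bar x,\bar x)$ at the new $LR$-node, then $m(\bar y, m(\bar x,\bar x))$, then $m(m(\bar y, m(\bar x,\bar x)), \bar z)$. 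For each of the five output wires $L^*, R^*, LR^*, RL^*, LRL^*$, one traces its path up through the $m$-boxes of each syntax tree and reads off which leaf coordinate it ultimately depends on. The crucial point, which makes Lemma~\ref{msatisfies} true rather than accidental, is that the duplication node sits at address $LR$, and $m$ was designed precisely so that the wire passing through address $LR$ is $m$'s ``$LR^*$'' output, which at the duplication node receives the \emph{left} copy $x_1$; meanwhile no output wire of $m$ ever traverses the address $LRR$. Hence replacing the leaf at $LR$ by the subtree $x_1 \cdot x_2$ does not change the value flowing along any wire: every wire that used to read $x$ at address $LR$ now reads $x_1$ (which equals $x$ under the assignment), and the phantom address $LRR$ where $x_2$ lives is never consulted.

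So the proof reduces to: (1) make the correspondence between $m$'s five projections and the five wire-labels precise; (2) observe that for each wire-label $w$, following $w$ through the $m$-decorated syntax tree of $t[N]$ yields, as a function of $N = \bar x$ (resp.\ $N = \bar x \cdot \bar x$) and of $\bar y, \bar z$, exactly one coordinate of one of the inputs, and that this coordinate is insensitive to whether $N$ is a leaf or the node $x_1\cdot x_2$ — because $m$'s wires visit $LR$ but not $LRR$; (3) conclude $f^\sharp((yx)z) = f^\sharp((y(xx))z)$ for all $f$, i.e.\ $\mathcal{A} \models \eqref{yxz}$. The only real obstacle is bookkeeping: one must get the projection indices of $m$ and the tree-address arithmetic exactly right, so that the five wires of the LHS tree and the five wires of the RHS tree are matched up correctly and the ``$LR$ yes, $LRR$ no'' invariant is visibly maintained. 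There is no conceptual difficulty — it is a finite check — but it is worth doing the componentwise trace carefully, perhaps by tabulating for each of the five output wires its source leaf in each of the two trees.
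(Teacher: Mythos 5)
Your overall strategy --- tracing each of the five output coordinates of $m$ through the syntax trees of $(yx)z$ and $(y(xx))z$ and checking that each reads the same leaf coordinate on both sides --- is exactly the paper's proof. However, the structural invariant you offer in order to avoid actually doing the trace is false, and it is the load-bearing step of your argument. You claim that ``no output wire of $m$ ever traverses the address $LRR$'' and that the wire passing through the duplication node receives the \emph{left} copy $x_1$, so that $x_2$ is ``never consulted''. Neither claim is true. Two wires pass through the node at address $LR$: the $LR^*$ wire (output coordinate $3$, i.e.\ $\pi_2$, the left input's $R^*$ output) turns left once and thereafter always goes \emph{right}, so on the right-hand side of the equation it continues to address $LRR$ and reads the second component of $x_2$; the $LRL^*$ wire (output coordinate $5$, i.e.\ $\pi_4$, the left input's $RL^*$ output) goes to address $LRL$ and reads the first component of $x_1$. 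So $x_2$ \emph{is} consulted.

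The lemma still holds, but for a different reason than the one you state: on the left-hand side $LR^*$ reads the second component of $x$ and $LRL^*$ reads the first component of $x$, while on the right-hand side they read the second component of $x_2$ and the first component of $x_1$ respectively; since $x_1$ and $x_2$ are instantiated by the same element $\bar{x} \in X^5$, the five output coordinates agree (the remaining wires $L^*$, $R^*$, $RL^*$ read fixed components of $y$ and $z$ on both sides). Note also that the two wires through $LR$ landing in \emph{different} copies of $x$ is not a defect to be engineered away --- it is the point of the construction, since these are precisely the two threads that later wrap the $\chi$/$\psi$ ``bubble'' in the relevance argument. So you should discard the ``$LR$ yes, $LRR$ no'' invariant and instead carry out the five-wire trace, concluding by using $x_1 = x_2 = x$ to match the two sides.
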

\begin{proof}
As~$(yx)z = (y(xx))z$ is preserved by $T$, the equality on the left in the diagram below
 holds.  From this, pulling down the unfinished threads,
        we get the equality on the right.
\begin{multicols}{2}
\noindent
\begin{align*}\label{bigsleeve}
    \includegraphics[align=c,scale=0.035]{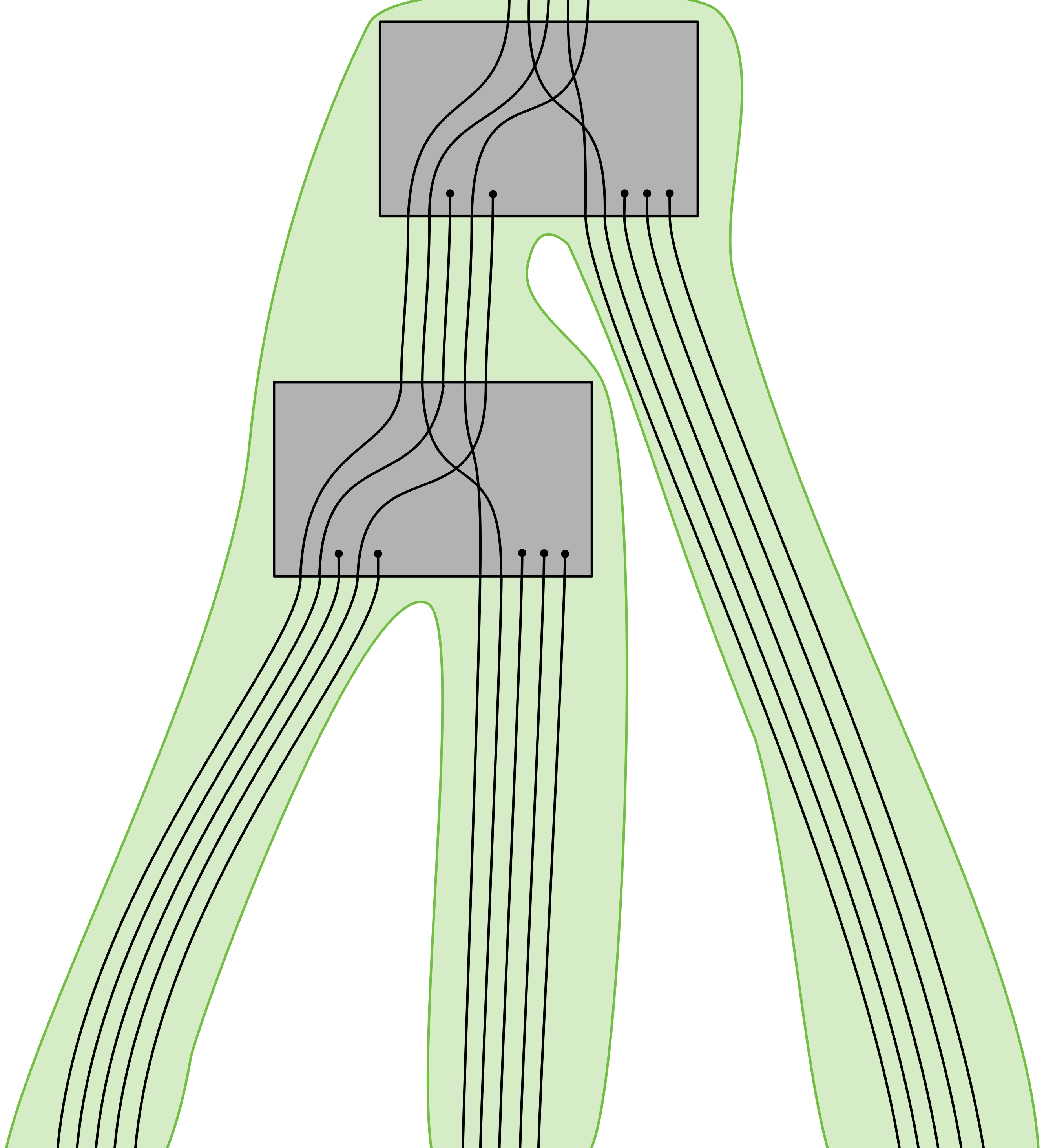} & =  
    \includegraphics[align=c,scale=0.035]{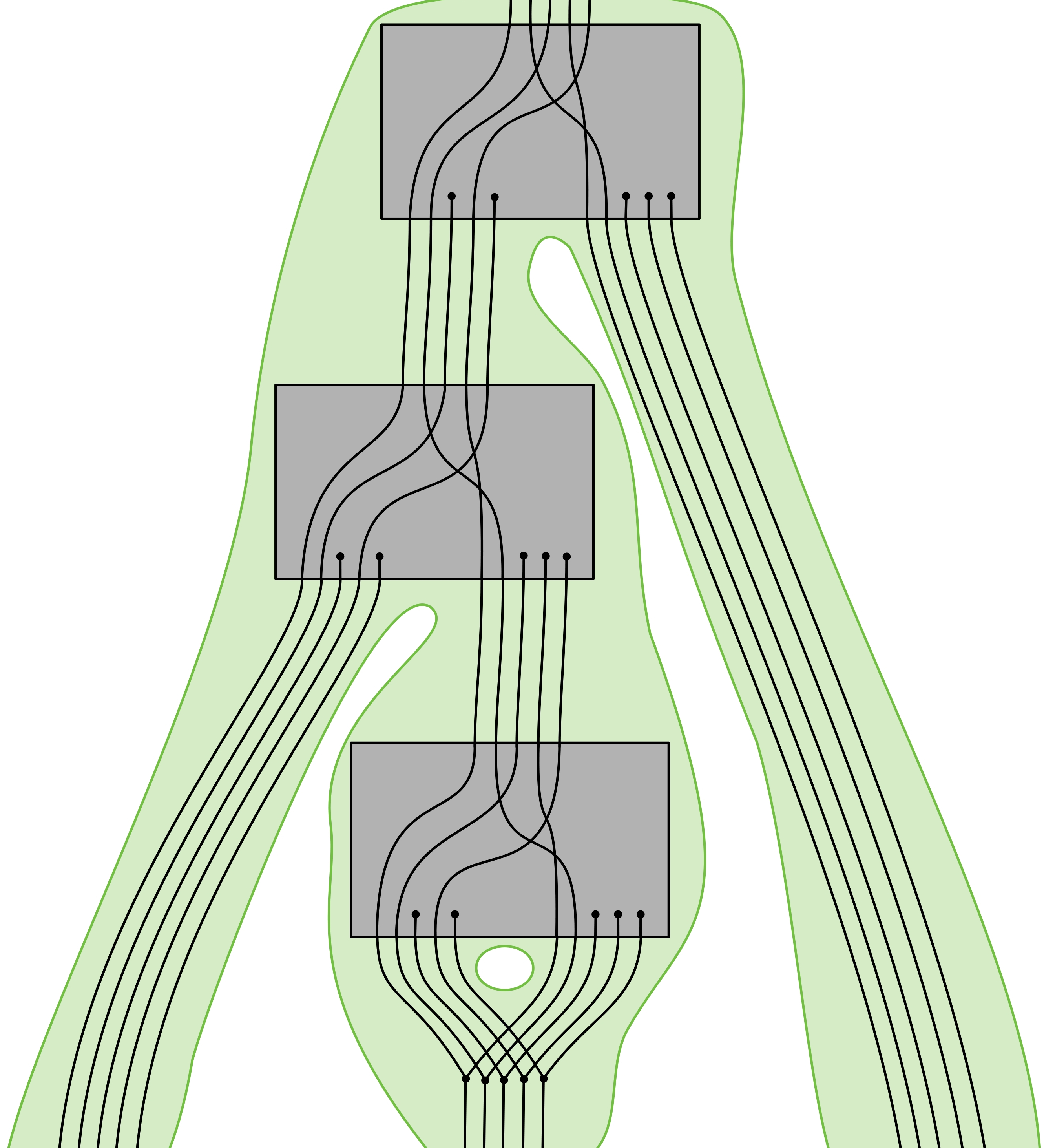} 
    \\
    y \qquad \quad x \quad \quad z \quad & \qquad   y \quad \quad x \qquad \quad z  \nonumber
\end{align*}
\columnbreak
    \begin{equation*}\label{bigsleeve2}
    \includegraphics[align=c,scale=0.035]{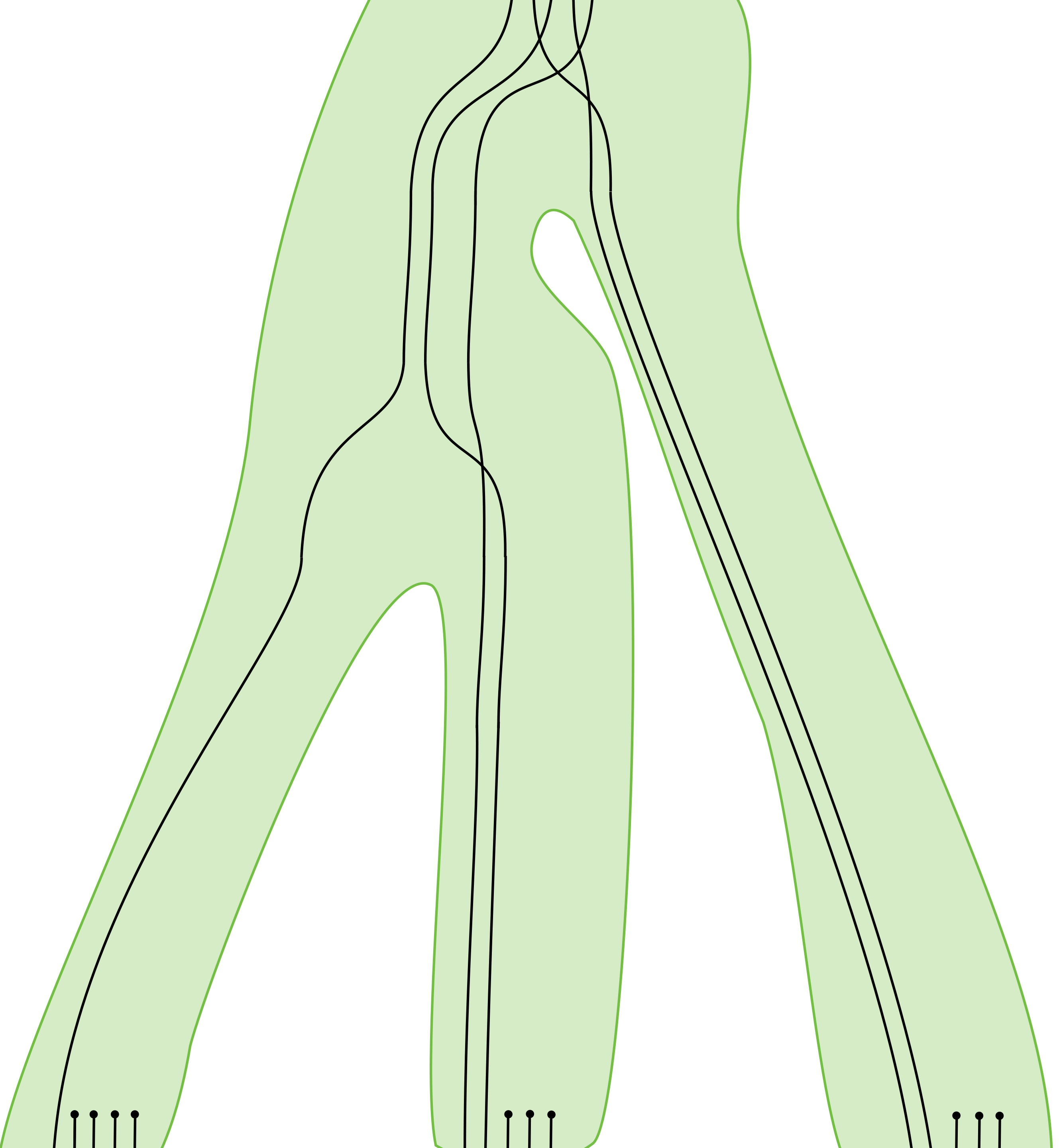}  =  
    \includegraphics[align=c,scale=0.035]{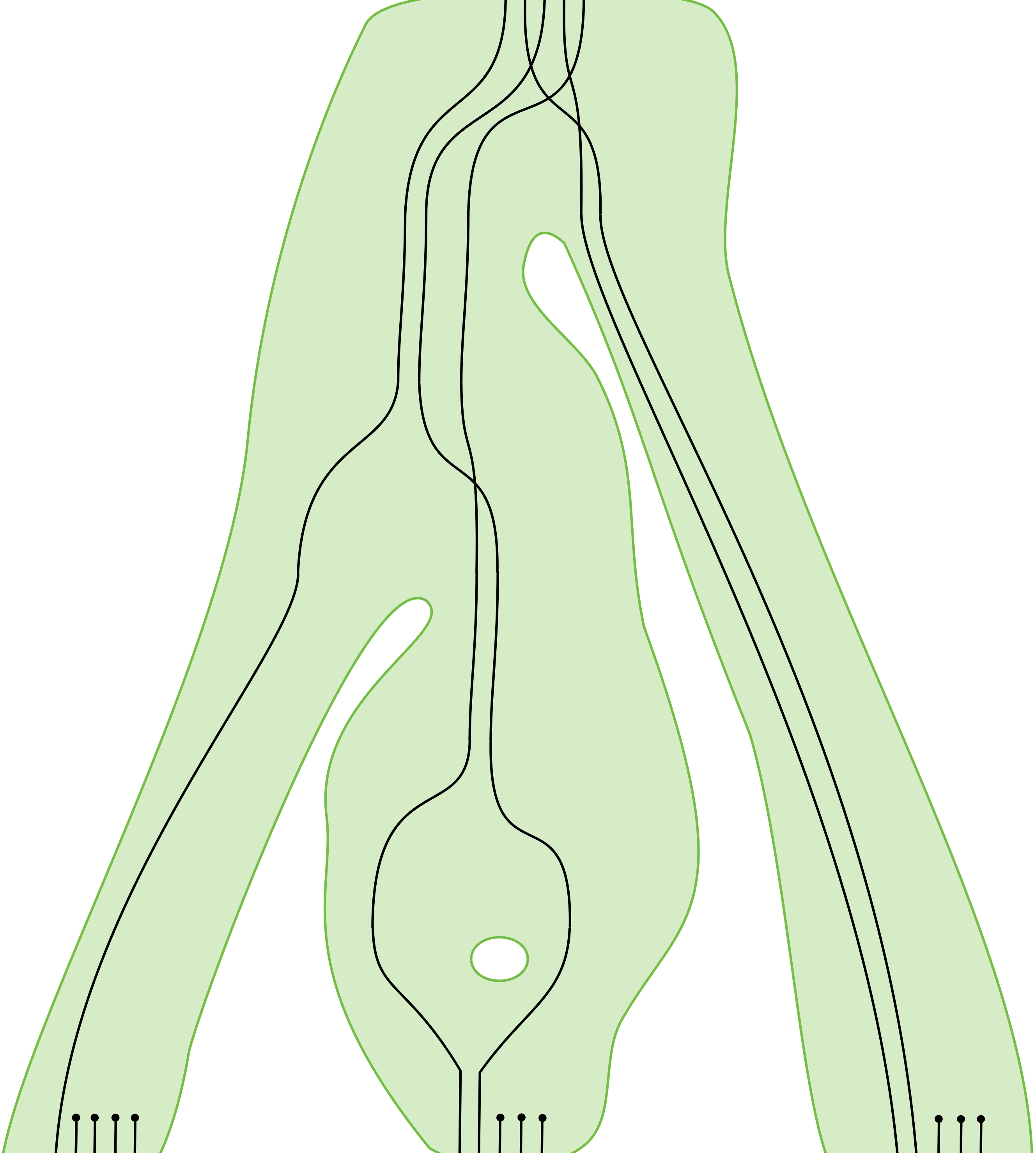}
\end{equation*}%
\end{multicols}%
\vspace{-1cm}% cf https://tex.stackexchange.com/questions/37863/
%\louis{maybe for more clarity show all dropped wires at the bottom?}
%As~$(yx)z = (y(xx))z$ is preserved by $T$, we know equality~\eqref{bigsleeve}
%    below holds.  From this, pulling down the unfinished threads,
%        we get~\eqref{bigsleeve2} on the right.
%\begin{multicols}{2}
%\noindent
%\begin{align}\label{bigsleeve}
%    \includegraphics[align=c,scale=0.035]{Sleeve_pics/yxz-sleeve.jpg} & =  
%    \includegraphics[align=c,scale=0.035]{Sleeve_pics/yxxz-sleeve.jpg} 
%    \\
%    y \qquad \quad x \quad \quad z \quad & \qquad   y \quad \quad x \qquad \quad z  \nonumber
%\end{align}
%\columnbreak
%    \begin{equation}\label{bigsleeve2}
%    \includegraphics[align=c,scale=0.035]{Sleeve_pics/yxz_end.jpg}  =  
%    \includegraphics[align=c,scale=0.035]{Sleeve_pics/yxxz_end.jpg}
%\end{equation}%
%\end{multicols}%
%\vspace{-1cm}% cf https://tex.stackexchange.com/questions/37863/
%%\louis{maybe for more clarity show all dropped wires at the bottom?}
As in the case of idempotence case, we now have two wires `wrapping' the bubble ($LR^*$ and~$LRL^*$). By Lemma~\ref{lemma:fgsleeve}, we can ignore the unfinished wires and obtain diagram~\eqref{relevancediag}. 
\end{proof}
For any equation of the form $t[x]=t[x \cdot x]$, we can design $m$ satisfying the equality and allowing to show relevance. We only have to make sure that the box representing $m$ features two wires realising the connections $L^*$ and $R^*$, as well as two others labelled $wR^*$ and $wL^*$, where $w$ is the word on $\{L,R\}^*$ describing the location of the variable duplication in $t$. 
    
    \begin{theorem}\label{thm:sleevebinary}
        $T$ is relevant if it preserves~$t[x]=t[x\cdot x]$
            when~$t$ only contains binary ops.
    % Let $t$ be a term with only binary operations. If $T$ preserves $t[x]=t[x \cdot x]$ then $T$ is relevant.
    \end{theorem}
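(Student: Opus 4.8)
The plan is to generalise the concrete argument worked out for equation~\eqref{yxz} and Lemma~\ref{msatisfies} to an arbitrary term $t$ built only from binary operations of $\Sigma$, with a single distinguished variable $x$ occurring exactly once (all other leaves being arbitrary variables $y_1,\dots,y_k$). Fix a set $X$; I will build an algebra $\algb$ whose carrier $A$ is a sufficiently large power $X^N$ and whose binary operations are all interpreted by one morphism $m\colon A\times A\to A$. The key data is the address $w\in\{L,R\}^*$ of the distinguished occurrence of $x$ in the syntax tree of $t$, i.e. the word of left/right turns from the root to that leaf. I want $m$ to be a projection-pairing $A\times A\to A$ realising four designated ``wires'': one wire $L^*$ copying a component straight through every left-input, one wire $R^*$ copying a component through every right-input, and two wires $wL^*$ and $wR^*$ which, when the $m$-boxes are stacked according to the tree $t$, enter at the leaf address $w$ via the left resp. right input and exit at the root. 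Concretely, after choosing $N$ large enough to give each such wire a private coordinate, $m=\langle \pi_{i_1},\dots,\pi_{i_N}\rangle$ for an appropriate list of projections of $A\times A=X^{2N}$; this is routine once the wiring is specified.

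First I would make the wiring precise and verify the analogue of Lemma~\ref{msatisfies}: that $\algb\models t[x]=t[x\cdot x]$. The point is that $m((a,\dots),(a,\dots))$ acts as a ``relay'' — duplicating the leaf $x$ into $x\cdot x$ at address $w$ changes the tree only by inserting one extra $m$-box at a leaf position, and because $m$ applied to equal inputs just forwards its wires, the output at the root is unchanged. Formally one checks $t[x]_\algb(\vec a)=t[x\cdot x]_\algb(\vec a)$ coordinate by coordinate, exactly as in the proof of Lemma~\ref{msatisfies} but with the generic tree; the $L^*$ and $R^*$ wires carry the ``context'' unchanged, and the $wL^*,wR^*$ wires are the ones that pass through the newly inserted box. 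Since $T$ preserves $t[x]=t[x\cdot x]$ by hypothesis, we get the corresponding equality of morphisms $T(t[x])_{\widehat T\algb}\circ(\cdots)=T(t[x\cdot x])_{\widehat T\algb}\circ(\cdots)$ on $TX\times\cdots\times TX$, precisely the statement that the two stacked-sleeve diagrams (as in the display in the proof of Lemma~\ref{msatisfies}) are equal.

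Next I would extract relevance from this equality. Interpreting the lifted operations via $\psi$ and $T m$ and pushing $\psi,\chi$ through naturality — as in the chain of equalities in the proof of Theorem~\ref{relevantiffdup} — all the ``context'' wires $L^*$ and $R^*$, together with the leaves $y_1,\dots,y_k$ other than $x$, become unfinished threads that can be deleted by Lemma~\ref{lemma:fgsleeve}. What survives is exactly the sub-picture around the duplicated leaf: on the left side the two wires $wL^*,wR^*$ are reconnected through the single extra $m$-box that was inserted, producing a $\chi$ followed by a $\psi$ on a two-object sleeve — the ``bubble'' of diagram~\eqref{relevancediag} — while on the right side (no duplication) they pass straight through as the identity. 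Hence $\psi\circ\chi=\id$ on $T(X'\times X'')$ for the relevant pair of objects; since $X$ was arbitrary and the two surviving wires range over arbitrary objects, diagram $\heartsuit$ of Definition~\ref{definition:relevant} commutes, so $T$ is relevant. I would present this both diagrammatically (stacking $m$-boxes on the tree of $t$, applying Lemma~\ref{lemma:fgsleeve}) and as a categorical computation mirroring the proof of Theorem~\ref{relevantiffdup}, with $t$ in place of the single operation and the address word $w$ bookkept explicitly.

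The main obstacle is purely combinatorial rather than conceptual: defining $m$ — equivalently, choosing $N$ and the list of projections — so that when the $m$-boxes are composed along the tree $t$ the four designated wires really do connect as claimed, and simultaneously no two of them collide on a shared coordinate in a way that would break either the satisfaction of $t[x]=t[x\cdot x]$ or the reduction to the bubble. I expect this to be handled by giving each of the finitely many ``threads'' running through the tree its own coordinate, reading off from $t$ which input side each thread enters and exits at each node; once that bookkeeping is set up, both the verification that $m$ satisfies the equation and the collapse to~\eqref{relevancediag} follow by the same naturality-and-deletion argument already used for~\eqref{yxz}, so no genuinely new idea beyond Theorem~\ref{thm:sleevebinary}'s special cases is needed.
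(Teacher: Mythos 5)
Your proposal follows essentially the same route as the paper: interpret every binary symbol by a single projection-pairing $m$ on a power of $X$ whose coordinates realise the wires $L^*$, $R^*$, $wL^*$, $wR^*$ (together with their intermediate suffix-states, as in the five-coordinate example for $(yx)z=(y(xx))z$), check the analogue of Lemma~\ref{msatisfies}, and then use preservation, naturality and Lemma~\ref{lemma:fgsleeve} to delete the unfinished threads and collapse to diagram~\eqref{relevancediag}. This is exactly the argument the paper gives, so the proposal is correct.
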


%\subsubsection{Further generalization}
Although this result may seem too specialised, the process described above actually applies to other equations. Recall that our strategy relies on defining a convenient algebra to derive relevance from the preservation of a particular equation. If $T$ preserves $(x+x)\cdot y=x \cdot y$, in particular $T$ preserves it on an algebra where $\cdot $ and $+$ are interpreted as identical, hence why we only needed to define one binary operation in the previous paragraph. We may generalise this even further to treat $n$-ary operations: if $f(x,x,z)=x \cdot z$ is preserved by $T$, then in particular it is on an algebra where $f(x,x,z)=(x\cdot x)\cdot z$ (as long as we can define such an algebra where the considered equation also holds). Since we have treated $(x\cdot x)\cdot z=x\cdot z$ above, our conclusion also applies to  $f(x,x,z)=(x\cdot x)\cdot z$. We have obtained the property of relevance from any possible case featuring binary operations, thus we also obtain for free the case of $n$-ary operations (where $n>1$).

    \begin{theorem}\label{thm:sleevenary}
Let $t$ be a term without constants. $T$ is relevant if it preserves $t[x]=t[x \cdot x]$.
    \end{theorem}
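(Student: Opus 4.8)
The plan is to reduce the statement to the binary case, Theorem~\ref{thm:sleevebinary}. Fix a term $t$ without constants, suppose $T$ preserves $t[x]=t[x\cdot x]$, and fix arbitrary sets $X,Y$; the goal is to exhibit a $\Sigma$-algebra on which preservation of this equation forces the ``bubble'' identity $\te\circ\chi=\id$ of Definition~\ref{definition:relevant}, which is exactly diagram~\eqref{relevancediag}.

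First I would rewrite $t$ into a term $\tilde t$ over a single binary symbol $\bullet$: replace each application $\sigma(s_1,\dots,s_n)$ of an operation of arity $n\geq 2$ by the left comb $(\cdots(s_1\bullet s_2)\bullet\cdots\bullet s_n)$, and each unary application $\sigma(s)$ by $s$ itself. The hole is untouched, so $\tilde t[x]=\tilde t[x\cdot x]$ is again an equation of the shape handled by Theorem~\ref{thm:sleevebinary} (with $\bullet$ as the duplicating operation; the degenerate case $\tilde t=x$ is idempotence, covered by Theorem~\ref{relevantiffdup}). The crucial feature of each rewrite step is that it is \emph{linear} --- the defining equations $\sigma(x_1,\dots,x_n)=(\cdots(x_1\bullet x_2)\bullet\cdots\bullet x_n)$ and $\sigma(x)=x$ have every variable exactly once on each side --- so, since monoidal monads preserve linear equations~\cite{manes2007monad}, the lifting $\widehat T$ sends a derived operation built as such a comb to the very same comb of lifted operations.

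Next I would take the algebra $\algb_0$ on a suitable product of copies of $X$ and $Y$, together with its binary operation $m$, produced as in the proof of Theorem~\ref{thm:sleevebinary} (the construction underlying Lemma~\ref{msatisfies}): so $\algb_0\models\tilde t[x]=\tilde t[x\cdot x]$ and $m$ carries the two wrapping wires $L^*,R^*$ as well as the wires $wR^*,wL^*$ at the location $w$ of the duplication in $\tilde t$. Turn $\algb_0$ into a $\Sigma$-algebra $\algb$ on the same carrier by interpreting every binary symbol as $m$, every operation of arity $\geq 2$ as the corresponding left comb of $m$, and every unary operation as the identity. By construction $t$ evaluates in $\algb$ exactly as $\tilde t$ evaluates in $\algb_0$, hence $\algb\models t[x]=t[x\cdot x]$, and so $\widehat T\algb\models t[x]=t[x\cdot x]$ by hypothesis. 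By the linearity observation above, in $\widehat T\algb$ these operations are still combs of $m_{\widehat T\algb}$ and unary operations are still identities; therefore $t$ evaluates in $\widehat T\algb$ exactly as $\tilde t$ computed from the single binary operation $m_{\widehat T\algb}$, i.e.\ $\widehat T\algb$, regarded as an algebra over $\bullet$, satisfies $\tilde t[x]=\tilde t[x\cdot x]$. This is precisely the equality reached partway through the proof of Theorem~\ref{thm:sleevebinary}; pulling the unfinished threads down via Lemma~\ref{lemma:fgsleeve} yields diagram~\eqref{relevancediag}, i.e.\ $T$ is relevant.

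The step I expect to be the main obstacle is the compatibility of the binary expansion with the lifting: a priori there is no reason for $\widehat T$ of a comb of operations to coincide with the comb of the lifted operations, and this is exactly the kind of phenomenon that fails for non-linear terms. It works here only because the expansion is linear; granting that, the remainder is the wire bookkeeping already carried out for Theorem~\ref{thm:sleevebinary}.
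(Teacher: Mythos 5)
Your proposal is correct and follows essentially the same route as the paper: reduce to the binary case by interpreting every $n$-ary symbol as a comb of the single binary operation $m$ on the specially constructed algebra, so that preservation of $t[x]=t[x\cdot x]$ collapses to the already-treated equation of Theorem~\ref{thm:sleevebinary}. Your explicit justification that the lifting $\widehat{T}$ commutes with the comb decomposition (via preservation of linear equations, equivalently the coherence of $\psi^n$) is a point the paper leaves implicit, but it is the same argument.
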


We can generalise this even further to cover equations outside the strict-dup class. If we use the above approach to treat an equation $t[x]=t[x \cdot x]$, it turns out we can slightly modify the equation without affecting our result. Consider the example $(yx)z=(y(xx))z$ again. At the end of our process, the only component of the $y$ that is connected to the output is the first one (through the $L^*$ wire). By construction, adding another iteration of $m$ with $y$ on its left input would not change this fact. Let us then substitute $y$ with $yv$ in the equation (for $v$ any new variable). The position of $v$ is coded as the word $LLR$, which does not belong to the language of any of our outputting wires, therefore $v$ has no influence on the matching of the outputs. In other words, the definition of $m$ allowing to prove relevance from the preservation of $(yx)z=(y(xx))z$ also applies to $((yv)x)z=(y(xx))z$, which is a one-drop equation. One could even substitute $v$ with a more complicated term to obtain another equation, whose preservation would still lead to relevance. This last theorem applies therefore to many equalities outside the case $t[x]=t[x \cdot x]$, even though the exact class described by these modifications is cumbersome to define.

\begin{example}\label{zxx-zyx}
Because Thm.~\ref{thm:sleevenary} applies to $z(xx)=zx$, it is also the case for $z(xx)=(zy)x$.
\end{example}

%
%
%\begin{equation}\label{yxz2}
%    ((ya)x)z  \ = \ (((yc)b)(xx))z
%\end{equation}
%To see~$m$ satisfies~\eqref{yxz2},
%    note that only the~$L^*$ wire
%    reaches the~$y$ leafs in the original equation
%    and so as long as the left-most leaf remains equal,
%    the modified equation remains satisfied.
%    The derivation of relevance from the preservation remains essentially
%    unaltered.
%\bas{louis might want to add more examples}
%\louis{Running example: $f(x,y,z)=(g(y,x,x))z$}
\subparagraph*{n-relevance}

We have shown that the preservation of~$x \cdot x = x$ implies relevance.
What about the preservation of~$f(x,x,x) = x$?
We will see that it does not imply relevance,
    but rather a weaker property which we will call~$3$-relevance.
To define~$n$-relevance in general,
    we introduce $n$-ary variations of existing maps:
        $\Delta^n$ is the $n$-times duplication operator $X \to X^n$
        and~$\chi^n \equiv \langle T \pi_1, \dots, T \pi_n \rangle$.
%\begin{definition}
% \begin{equation}
%     \vcenter{\xymatrix@R-1pc@C+1pc{
% TA \ar[r]^{\Delta^n}
% 	\ar[rd]_{T \Delta^n}
% & 
% (TA)^n \ar[d]^{\te^n}
% \\
% & T(A^n)
% }}
% \end{equation}
Now, we say $T$ is $n$-relevant iff~$\psi^n \circ \Delta^n = T\Delta^n$.
Or equivalently iff~$\te^n \circ \chi^n = \id$.
%\end{definition}
\begin{proposition}\label{prop:relevance-weakening}
Relevance implies $n$-relevance for~$n \geq 2$.
\end{proposition}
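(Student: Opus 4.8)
The plan is to derive $n$-relevance, i.e.\ $\te^n \circ \chi^n = \id$, from the binary relevance condition $\te \circ \chi = \id$, by decomposing the $n$-ary maps $\chi^n$ and $\te^n$ into iterated binary ones. First I would recall that $\te^n$ is by definition built associatively from the binary $\te$: writing $X = X_1 \times \cdots \times X_n$ and grouping as $X_1 \times (X_2 \times \cdots \times X_n)$, we have $\te^n = T(\id \times \cdots) \circ \te_{X_1,\, X_2\times\cdots\times X_n} \circ (\id_{TX_1} \times \te^{n-1})$ up to the evident associator coherence (which, as the paper notes for Cartesian monoidal categories, we may silently absorb). The key observation is that $\chi^n = \langle T\pi_1, \dots, T\pi_n\rangle$ admits the dual decomposition: $\chi^n = (\id_{TX_1} \times \chi^{n-1}) \circ \chi_{X_1,\, X_2 \times \cdots \times X_n}$, since $T\pi_1 \colon T(X_1 \times (X_2\times\cdots)) \to TX_1$ is the first component of the binary $\chi$, and the remaining components factor through $T\pi_{2\ldots n}$ followed by $\chi^{n-1}$. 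Verifying this decomposition is a short diagram chase using functoriality of $T$ and the universal property of the product.

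Given these two decompositions, the proof proceeds by induction on $n$. The base case $n = 2$ is exactly the definition of relevance. For the inductive step, I would compute $\te^n \circ \chi^n$ by substituting the decompositions and then sliding the binary $\te \circ \chi = \id$ into place: concretely, $\te^n \circ \chi^n = \te_{X_1, \text{rest}} \circ (\id_{TX_1} \times \te^{n-1}) \circ (\id_{TX_1} \times \chi^{n-1}) \circ \chi_{X_1, \text{rest}} = \te_{X_1,\text{rest}} \circ (\id_{TX_1} \times (\te^{n-1}\circ\chi^{n-1})) \circ \chi_{X_1,\text{rest}}$. The inner $\te^{n-1}\circ\chi^{n-1}$ is $\id$ by the induction hypothesis, leaving $\te_{X_1,\text{rest}} \circ \chi_{X_1,\text{rest}}$, which is $\id$ by relevance. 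Handling the associators rigorously — i.e.\ checking that the "rest" object $X_2 \times \cdots \times X_n$ can be treated coherently and that the coherence isomorphisms cancel — is the one piece requiring care, but in a Cartesian monoidal category these are all canonical and the paper has already adopted the convention of suppressing them, so this should be routine.

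The main obstacle I anticipate is simply making the $\chi^n$ decomposition precise: unlike $\te^n$, which the paper explicitly defines as "associatively constructed from the binary version," the map $\chi^n$ is given directly as a tuple $\langle T\pi_1, \dots, T\pi_n\rangle$, so one must argue that this tuple coincides with the iterated binary construction. This follows from the fact that the projections $\pi_i \colon X_1 \times \cdots \times X_n \to X_i$ compose as $\pi_i = \pi_i' \circ \pi_{2\ldots n}$ for $i \geq 2$ (where $\pi_{2\ldots n}$ projects onto the grouped tail), together with functoriality of $T$ and the universal property of products; it is entirely mechanical but deserves an explicit line. Everything else is a clean induction.
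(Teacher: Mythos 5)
Your proof is correct and takes essentially the same route as the paper's: an induction on $n$ that decomposes the $n$-ary monoidal maps into iterated binary ones and applies binary relevance once together with the induction hypothesis. The only difference is that you work with the characterisation $\te^n \circ \chi^n = \id$, whereas the paper uses the equivalent $\te^n \circ \Delta^n = T\Delta^n$ (trading your decomposition of $\chi^n$ into iterated binary $\chi$'s for a single naturality step of $\te$); both are sound.
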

%\begin{example}
For a commutative monoid~$M$ and the monad $M\times X$
 from Example~\ref{ex:mtimesmonad},
    we have
    $   (\te^n \circ \chi^n)(v, (x_1, \ldots, x_n))
        = \te^n((v, x_1), \ldots, (v, x_n)) 
        = (v^n, (x_1, \ldots, x_n))$
    and so~$M\times X$
        is~$n$-relevant
        iff~$w^n = w$ for all~$w \in M$.
    Hence monads may be~$n$-relevant
        but not~$m$-relevant for any~$n>m$.
%\end{example}
%
For affine monads the difference disappears:
\begin{proposition}\label{prop:relevant-and-affine}
Given any $n \in \mathbb{N}$,  if $T$ is $n$-relevant and affine, then $T$ is relevant.
\end{proposition}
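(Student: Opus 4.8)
The plan is to establish the $\spadesuit$-form of relevance, namely $\te_{A,A}\circ\Delta_{TA}=T\Delta_A$ for every object $A$ (the case $n=2$ being vacuous, so assume $n>2$), by taking the $\spadesuit$-form of $n$-relevance and post-composing it with the map that forgets all but the first two coordinates. Fix $A$ and write $p=\langle\pi_1,\pi_2\rangle\colon A^n\to A\times A$. By hypothesis $\te^n\circ\Delta^n_{TA}=T\Delta^n_A\colon TA\to T(A^n)$, and I post-compose both sides with $Tp\colon T(A^n)\to T(A\times A)$. On the right-hand side this collapses immediately: since $\pi_i\circ\Delta^n_A=\id_A$ for each $i$, we have $p\circ\Delta^n_A=\langle\id_A,\id_A\rangle=\Delta_A$, so $Tp\circ T\Delta^n_A=T\Delta_A$. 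It therefore remains to identify $Tp\circ\te^n\circ\Delta^n_{TA}$ with $\te_{A,A}\circ\Delta_{TA}$.

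The heart of the argument is the commuting square $Tp\circ\te^n=\te_{A,A}\circ\langle\pi_1,\pi_2\rangle\colon(TA)^n\to T(A\times A)$, where now $\langle\pi_1,\pi_2\rangle\colon(TA)^n\to(TA)^2$. To prove it, factor $p$ through $A\times A\times 1^{\,n-2}$ as a unitor isomorphism composed with $\id_{A\times A}\times\,!_{A^{n-2}}$, and, modulo coherence isomorphisms, write $\te^n$ as $\te_{A\times A,\,A^{n-2}}\circ(\te_{A,A}\times\te^{n-2})$. Naturality of $\te$ moves $T(\id_{A\times A}\times\,!_{A^{n-2}})$ past $\te_{A\times A,\,A^{n-2}}$, and the monoidal-functor unit law turns the remaining unitor part into a unitor applied after $\te$. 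Affineness enters exactly here, and twice: $T1$ is terminal, so $T!_{A^{n-2}}\circ\te^{n-2}\colon(TA)^{n-2}\to T1$ is the unique map into a terminal object; and $\te^0=\eta_1\colon 1\to T1$ is an isomorphism, which is precisely what makes the unit law usable in this direction. Assembling these three observations yields the square. Precomposing with $\Delta^n_{TA}$ and using $\langle\pi_1,\pi_2\rangle\circ\Delta^n_{TA}=\Delta_{TA}$ gives $Tp\circ\te^n\circ\Delta^n_{TA}=\te_{A,A}\circ\Delta_{TA}$, and combining with the computation of the right-hand side we conclude $\te_{A,A}\circ\Delta_{TA}=T\Delta_A$, i.e.\ $T$ is relevant.

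The main obstacle is this last square: nothing in it is deep, but one must track carefully the coherence isomorphisms hidden in the phrase ``$\te^n$ is constructed associatively from the binary $\te$'' and check that padding with copies of $1$ genuinely collapses $\te^n$ back to $\te_{A,A}$ — this is where affineness (equivalently $T1$ terminal, equivalently $\eta_1=\te^0$ invertible) is indispensable, in line with the fact that $M\times X$ is $n$-relevant as soon as $w^n=w$ holds in $M$ yet fails relevance unless $M$ is idempotent. In the diagrammatic calculus of Section~\ref{sec:dup} the argument is essentially visual: the $n-2$ extra factors are unfinished threads, $n$-relevance pops the ``$n$-bubble'', and by Lemma~\ref{lemma:fgsleeve} the unfinished threads may be discarded, leaving exactly the binary bubble of~\eqref{relevancediag}.
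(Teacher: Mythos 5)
Your proof is correct and follows essentially the same route as the paper: both arguments reduce relevance to the single commuting square $T\langle\pi_1,\pi_2\rangle\circ\te^n=\te\circ\langle\pi_1,\pi_2\rangle$ (a consequence of affineness) and then pre/post-compose the $n$-relevance identity with the diagonal and the binary projection. The only difference is that the paper cites this square as a residual diagram for the strict-drop equation $(x\cdot y)\cdot z=x\cdot y$ from~\cite{layers}, whereas you derive it directly from terminality of $T1$ and the monoidal unit law, which is a perfectly valid (and self-contained) substitute.
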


%\louis{possibly appendix}

As promised, we relate~$n$-relevance to preservation of equations:
\begin{theorem}\label{thm:nrelevance}
Assume $\Sigma$ features an $n$-ary operation $f^n$. 
$T$ preserves $f^n(x, \dots, x)=x$ if and only if $T$ is $n$-relevant.
\end{theorem}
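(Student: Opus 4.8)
The plan is to prove Theorem~\ref{thm:nrelevance} by mimicking, in the $n$-ary setting, the strategy used for Theorem~\ref{relevantiffdup}. The statement is an ``if and only if'', so I would treat the two directions separately.

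\medskip

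\textbf{Direction ($\Leftarrow$): $n$-relevance implies preservation of $f^n(x,\dots,x)=x$.}
Suppose $T$ is $n$-relevant and let $\algb$ be a $\Sigma$-algebra satisfying $f^n(x,\dots,x)=x$, with carrier $A$. Since $f^n_{\algb}\colon A^n\to A$ satisfies $f^n_{\algb}\circ\Delta^n = \id_A$, I want to show the lifted operation $f^n_{\widehat{T}\algb} = T f^n_{\algb}\circ\psi^n\colon (TA)^n\to TA$ also satisfies the equation, i.e.\ $T f^n_{\algb}\circ\psi^n\circ\Delta^n_{TA} = \id_{TA}$. The key bookkeeping identity is the $n$-ary analogue of $\Delta_{TA}=\chi_{A,A}\circ T\Delta_A$, namely $\Delta^n_{TA} = \chi^n_{A,\dots,A}\circ T\Delta^n_A$, which holds for any monoidal monad on $\Set$ (one checks it by postcomposing with the projections $\pi_j$, using $\pi_j\circ\Delta^n=\id$ and the definition of $\chi^n$). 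Then, just as in the proof of Theorem~\ref{relevantiffdup}, I compute
\begin{align*}
T f^n_{\algb}\circ\psi^n\circ\Delta^n_{TA}
 &= T f^n_{\algb}\circ\psi^n\circ\chi^n\circ T\Delta^n_A
  = T f^n_{\algb}\circ T\Delta^n_A
  = T(f^n_{\algb}\circ\Delta^n_A) = T\id_A = \id_{TA},
\end{align*}
using $n$-relevance ($\psi^n\circ\chi^n=\id$) in the second equality. Since this is the only axiom to check and it is the axiom $f^n(x,\dots,x)=x$ interpreted in $\widehat T\algb$, this direction is complete. (More precisely one should feed in an arbitrary assignment $g\colon V\to TA$; but the term on both sides has a single variable, so it reduces to the computation above.)

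\medskip

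\textbf{Direction ($\Rightarrow$): preservation implies $n$-relevance.}
Here I follow the ``convenient algebra'' idea. Fix arbitrary sets $X_1,\dots,X_n$ and put $A = X_1\times\cdots\times X_n$. Define an $n$-ary operation $m\colon A^n\to A$ by picking out the $i$-th coordinate of the $i$-th argument, i.e.\ $m = \langle \pi_1\circ\pi_1,\ \pi_2\circ\pi_2,\ \dots,\ \pi_n\circ\pi_n\rangle$ (where the outer $\pi_i\colon A^n\to A$ selects an argument and the inner one selects a coordinate of $A$). This $m$ is ``$n$-idempotent'': $m(a,\dots,a)=a$ for all $a\in A$, so the algebra $\algb$ with $f^n$ interpreted as $m$ (and other operations interpreted arbitrarily) satisfies $f^n(x,\dots,x)=x$. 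By hypothesis $\widehat T\algb$ satisfies it too, giving $T m\circ\psi^n\circ\Delta^n_{TA}=\id_{TA}$. Now rewrite $\Delta^n_{TA}=\chi^n\circ T\Delta^n_A$ as above and use naturality of $\psi^n$ and of $\chi^n$ to slide $T m$ through: since $m$ is a tuple of projections, $T m\circ\psi^n = \psi^n\circ(T\pi_{i})_i$-type manipulations let me move it past $\psi^n$ and $\chi^n$, and $m\circ\Delta^n_A$ collapses, leaving $\psi^n\circ\chi^n = \id_{T(X_1\times\cdots\times X_n)}$. Since the $X_i$ were arbitrary, $T$ is $n$-relevant. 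Diagrammatically this is exactly the ``bubble-popping'' picture used after Theorem~\ref{relevantiffdup}, now with $n$ threads wrapping the bubble and Lemma~\ref{lemma:fgsleeve} (or its $n$-ary extension) used to discard the unfinished wires.

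\medskip

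\textbf{Main obstacle.} The conceptual content is a routine lift of the binary case, so the real work is getting the $n$-ary coherence bookkeeping right: verifying $\Delta^n_{TA}=\chi^n\circ T\Delta^n_A$ and the naturality manipulations that move the tuple-of-projections map $m$ past $\psi^n$ and $\chi^n$. These are straightforward but need care because $\psi^n$ is only defined ``associatively from the binary $\psi$'', so one must either argue at the level of the universal property of the product (postcomposing everything with projections) or invoke the standard coherence of monoidal monads to treat $\psi^n$ and $\chi^n$ as genuine $n$-ary transformations. I expect the cleanest writeup to phrase both directions by postcomposition with $\pi_1,\dots,\pi_n$, avoiding any explicit unfolding of the associative bracketing of $\psi^n$.
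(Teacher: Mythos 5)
Your proposal is correct and follows essentially the same route as the paper: the $(\Leftarrow)$ direction is the same routine verification (the paper phrases it via the residual diagrams of Lemma~\ref{lemma:preservediag} and the identity $\prepare(f^n(x,\dots,x))=\Delta^n$, while you compute directly with $\psi^n\circ\chi^n=\id$ --- the same content through the equivalent formulation of $n$-relevance), and the $(\Rightarrow)$ direction uses exactly the paper's convenient algebra $f^n=\pi_1\times\cdots\times\pi_n$ together with $\Delta^n_{TA}=\chi^n\circ T\Delta^n_A$ and the same naturality manipulations. No gaps.
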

We have seen that the preservation of some~2-dup non-drop
    equations, like~$xx=x$ and~$(xx)y = xy$,
    implies relevance. However, the following statement shows
    that there are 2-dup non-drop equations
    whose preservation does not imply relevance.
    Indeed, $\mzt$ is not relevant, but:

\begin{proposition}\label{prop:multiset-preserves}
	The generalised multiset monad~$\mzt$ 
	preserves $x ( y y) = yx$. 
\end{proposition}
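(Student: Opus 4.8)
The plan is to reduce the claim to two ingredients: a purely equational fact, and a short characteristic-$2$ computation. Fix a $\Sigma$-algebra $\algb$ with carrier $A$ satisfying $\algb\models x\cdot(y\cdot y)=y\cdot x$; I must show $\widehat{\mzt}\algb\models x\cdot(y\cdot y)=y\cdot x$. \emph{First I would show that $\cdot$ is automatically commutative on $\algb$.} Write $(\dagger)$ for the instance $v\cdot(u\cdot u)=u\cdot v$ of the axiom. Using $(\dagger)$ once and then again on its right-hand side gives $a\cdot b = b\cdot(a\cdot a) = (a\cdot a)\cdot(b\cdot b)$ for all $a,b\in A$; setting $b:=a$ shows every element of the form $a\cdot a$ is idempotent, and the axiom with $y:=a\cdot a$ then shows every such idempotent element is central, i.e.\ $x\cdot(a\cdot a)=(a\cdot a)\cdot x$. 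Chaining these, $a\cdot b = (a\cdot a)\cdot(b\cdot b) = (b\cdot b)\cdot(a\cdot a) = b\cdot a$, so $\cdot_{\algb}$ is commutative.

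\emph{Then I would compute in $\widehat{\mzt}\algb$.} Unwinding the definition of $\widehat{T}$ and the $\psi$ of $\mzt$ from Example~\ref{ex:monoidal}, the lifted operation is $\mathbb{Z}_2$-bilinear with $(\xi\cdot\eta)(c)=\sum_{a\cdot b=c}\xi(a)\eta(b)$ (sums in $\mathbb{Z}_2$); writing multisets as formal $\mathbb{Z}_2$-sums, $\xi\cdot\eta=\sum_{a,b}\xi(a)\eta(b)\,(a\cdot b)$. Expanding $\eta\cdot\eta$ and splitting off the diagonal, the diagonal part is $\sum_b\eta(b)^2\,(b\cdot b)=\sum_b\eta(b)\,(b\cdot b)$ since $t^2=t$ in $\mathbb{Z}_2$, while the off-diagonal part $\sum_{b\neq b'}\eta(b)\eta(b')\,(b\cdot b')$ vanishes: pairing $(b,b')$ with $(b',b)$ and using commutativity of $\algb$, each pair contributes $\eta(b)\eta(b')\,(b\cdot b'+b'\cdot b)=2\,\eta(b)\eta(b')\,(b\cdot b')=0$. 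Hence $\eta\cdot\eta=\sum_b\eta(b)\,(b\cdot b)$, and so, using bilinearity and the axiom $a\cdot(b\cdot b)=b\cdot a$,
\[
\xi\cdot(\eta\cdot\eta)=\sum_{a,b}\xi(a)\eta(b)\,\bigl(a\cdot(b\cdot b)\bigr)=\sum_{a,b}\xi(a)\eta(b)\,(b\cdot a)=\eta\cdot\xi ,
\]
which is precisely $\widehat{\mzt}\algb\models x\cdot(y\cdot y)=y\cdot x$.

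I expect the first step to be the main obstacle: the whole point is noticing that this particular dup equation degenerates to commutativity, after which the rest is mechanical $\mathbb{Z}_2$ bookkeeping --- and it is exactly the characteristic-$2$ identities $t^2=t$ and $2t=0$ that make it work, the analogous statement for the ordinary multiset monad $\mathbf{M}_{\N}$ being false. As a variant that avoids describing $\eta\cdot\eta$ explicitly, a direct bilinear expansion yields $\xi\cdot(\eta\cdot\eta)=\eta\cdot\xi+\xi\cdot\Omega$ with $\Omega=\sum_{b\neq b'}\eta(b)\eta(b')\,(b\cdot b')$, and commutativity of $\algb$ forces $\Omega=0$ in $\mzt A$; I would pick whichever bookkeeping turns out shorter.
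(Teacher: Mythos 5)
Your proposal is correct and follows essentially the same route as the paper: first derive commutativity of $\cdot_{\algb}$ from the axiom $x\cdot(y\cdot y)=y\cdot x$, then use the characteristic-$2$ identities $t^2=t$ and $2t=0$ to collapse $\eta\cdot\eta$ to its diagonal $\sum_b\eta(b)\,(b\cdot b)$ and conclude by bilinearity. The only (immaterial) difference is the particular equational chain used to establish commutativity.
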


\subparagraph*{Preservation of discerning equations}
\label{discerning}
We present a class of equations for which relevance
    is necessary for preservation.
%\begin{definition}
%
    A \emph{2-discerning equation}~$t_1=t_2$
    is a 2-dup non-drop equation,
    where only one variable, say~$x_1$ out of~$x_1, \ldots, x_n$
    is duplicated and only one side, say~$t_2$,
    which can distinguish the places where~$x_1$ is duplicated
        in the following sense:
        the linear equation~$s_2=s_2'$ in~$x_1, x_1', x_2, \ldots, x_n$
        fixed by~$t_2 = s_2[x_1'/x_1]$
            and~$s_2' = s_2[x_1/x_1', x_1'/x_1]$
            is not derivable from~$t_1=t_2$.
%\end{definition}
\begin{example}
The equation~$x(yy) = yx$ is \emph{not}~2-discerning
    as it implies~$xy=yx$ and 
    in particular~$x(yy') = x(y'y)$.
On the other hand~$y(xy)= yx$
    is 2-discerning.
This requires one to show that~$y(xy')=y'(xy)$
    isn't derivable from~$y(xy)=yx$,
    which is easily seen
    by noting that all terms equal to~$y'(xy)$
    must start with~$y'$ as well.
In fact, all of the following equations are~2-discerning,
    which are essentially all the remaining candidates
    on two variables:
	$(yy) x = yx,
    (yx) y = yx,
    (xy) y = yx,
    y(y x) = yx,$ and 
    $y(x y) = yx$.
%\begin{align*}
%    (yy) x &= yx &
%    (yx) y &= yx &
%    (xy) y &= yx &
%    y(y x) &= yx &
%    y(x y) &= yx
%\end{align*}
% TODO perhaps include argument:
%     All are easy except for (xy)y=yx.  For this one we have the model
%     on three elements {a,b,c} with commutative multiplication
%     fixed by cx=c, aa=c and ba=bb=b in which (ab)a=b != c=(aa)b.
\end{example}
Theorem~\ref{thmdiscerning} states that relevance is equivalent to preservation of 2-discerning equations.
In the proof~\cite{parlant2020preservation}, we assume that $T$ is finitary---that is, $T$ is presentable as an algebraic theory~$\Th$,
    i.e.~$TX$ is the free $\Th$-algebra over~$X$;
    $Tf$ for~$f\colon X \to Y$ maps a term~$M \in TX$
        to~$M[x/f(x)]$;
    $\eta_X$ maps~$x$ to the term~$x$
        and~$\mu_X$ maps a term over terms to the collapsed term.
        The argument for arbitrary monads (which are presented
        by infinitary algebraic theories)
        is similar, but heavier on paper.
        We first relate relevance of a finitary monad
        to its presentation; this algebraic characterisation can be found with slightly different notation in Figure 7 of \cite{kammar2012algebraic}.

\begin{proposition}\label{propalgrel}
Suppose~$T$ is a monoidal monad on~$\Set$
    presented by an algebraic theory~$\Th$.
Then~$T$ is relevant iff 
    for every~$n$-ary operator~$f$ of~$\Th$
        we have
   % \begin{equation}\label{algebraicrelevance}
   $
        \mtr{f} ((x_{ij})_{ij}) \ =\  \vec{f} ( (x_{ii})_i) )
    $,
    %\end{equation}
    where~$(x_{ij})_{ij}$ is a~$n\times n$ matrix
        over any set~$X$,
        $\vec{f}( (y_i)_i ) \ \equiv\  f(y_1, \ldots, y_n)$
        and $\mtr{f} ( (y_{ij})_{ij} )
            \ \equiv \ f(
                f(y_{11}, \ldots, y_{1n}), \ldots,
                f(y_{n1}, \ldots, y_{nn}))$.
        %and
%    \begin{align*}
%        \vec{f}( (y_i)_i ) & \ \equiv\  f(y_1, \ldots, y_n) &
%        \mtr{f} ( (y_{ij})_{ij} )
%            &\ \equiv \ f(
%                f(y_{11}, \ldots, y_{1n}), \ldots,
%                f(y_{n1}, \ldots, y_{nn})).
%    \end{align*}
\end{proposition}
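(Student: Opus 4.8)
The plan is to prove a biconditional, characterising relevance of a finitary monoidal monad $T$ (presented by an algebraic theory $\Th$, so $TX$ is the free $\Th$-algebra on $X$) in terms of a concrete equation satisfied by every operator of $\Th$. I will use the standard description of $\psi$ for a finitary monoidal/commutative monad: for terms $M \in TX$ and $N \in TY$, the element $\psi_{X,Y}(M,N) \in T(X\times Y)$ is computed by substituting each variable $x$ of $M$ by $N[y/(x,y)]$ (equivalently, distributing one term over the other via commutativity), and likewise $\psi^n(M_1,\ldots,M_n)$ nests these substitutions. Since both $\Delta^n$ and $\chi^n$, and the relevance diagrams $\spadesuit$/$\heartsuit$, are built from $\psi$, naturality, and projections, the whole condition should reduce to a statement purely about how $\psi$ acts, which in turn is determined operator-by-operator because terms are generated by operators.

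First I would recall from Definition~\ref{definition:relevant} that $T$ is relevant iff $\psi_{A,A}\circ \Delta_{TA} = T\Delta_A$ for all $A$; using $\Delta_{TA} = \chi_{A,A}\circ T\Delta_A$ (the $\Set$-identity noted in the proof of Theorem~\ref{relevantiffdup}), this is equivalent to $\psi\circ\chi = \id$ on $T(A\times A)$. Then I would unfold what $\psi\circ\chi$ does to a term $M \in T(A\times A)$ over a doubled alphabet: it first projects $M$ onto its two coordinates, obtaining $M_1 = T\pi_1(M)$ and $M_2 = T\pi_2(M)$, then recombines them via $\psi$ into a term over $A\times A$. The key combinatorial fact is that writing $M$ out using its top-level operator $f$, say $M = f(M_1,\ldots,M_n)$ at the outermost layer and more generally using the term structure, the composite $\psi\circ\chi$ rebuilds $M$ by, at each occurrence of each operator $f$, applying $f$ to a full $n\times n$ grid of the subterms rather than just the diagonal. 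So $\psi\circ\chi = \id$ on all of $T(A\times A)$, for all $A$, is equivalent to the identity $\mtr{f}((x_{ij})_{ij}) = \vec{f}((x_{ii})_i)$ holding in $\Th$ for every operator $f$ — one direction is a direct induction on term structure (the operator identities for all $f$ imply $\psi\circ\chi=\id$ by induction), and the other direction comes from instantiating $M$ at a single two-layer term $f(f(\ldots),\ldots)$ over the set $X\times\{1,\ldots,n\}$ (or just $X$ with a generic matrix of variables) to read off exactly the claimed operator equation.

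I would organise the write-up as: (1) reduce relevance to $\psi\circ\chi = \id$ via the $\Delta_{TA}=\chi\circ T\Delta_A$ identity; (2) give the explicit formula for $\chi$ (projection of a term onto each coordinate) and for $\psi^2$ (distributive substitution) on the free $\Th$-algebra, citing the finitary-monad description fixed just before the proposition; (3) compute $\psi\circ\chi$ on an arbitrary term by induction, showing the outcome at each operator node is $\mtr{f}$ of the grid of coordinate-restricted subterms; (4) conclude: if every $f$ satisfies the stated equation, each $\mtr{f}$-grid collapses to $\vec{f}$ of the diagonal and the induction gives $\psi\circ\chi=\id$; conversely, evaluating $\psi\circ\chi$ on the depth-two term $f(f(x_{11},\ldots,x_{1n}),\ldots)$ — which the left-hand $\chi$ sends to the pair $((x_{i1},\ldots),(x_{1j},\ldots))$ appropriately — forces the operator equation, since $\id$ on it is $\vec{f}((x_{ii})_i)$ once we track which variables survive each projection. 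The main obstacle I anticipate is step (3): getting the bookkeeping right for how $\psi$'s iterated distributive substitution interacts with the two coordinate projections from $\chi$, so that the emergent pattern is exactly the "apply $f$ to the whole $n\times n$ matrix" description and not some other nesting. Once that formula is pinned down, both implications are short. I would lean on the diagrammatic calculus of this section (the "bubble" picture for $\psi\circ\chi$, \eqref{relevancediag}) to make the inductive step visually transparent, and defer the fully spelled-out substitution calculus to the referenced full version~\cite{parlant2020preservation}.
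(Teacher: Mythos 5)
Your proposal matches the paper's own proof in all essentials: both reduce relevance to $\psi\circ\chi=\id$ on the free algebra $T(X\times X)$, both rest on the distributive-substitution formula $\psi\bigl(f(\vec M),f(\vec N)\bigr)=\mtr{f}\bigl((\psi(M_i,N_j))_{ij}\bigr)$ for the presented monad, both prove the forward implication by induction on terms (using the operator identity to collapse each $n\times n$ grid to its diagonal so the induction hypothesis applies), and both obtain the converse by instantiating at a single application of $f$ to generic variables. The one imprecision is in your converse, where the depth-two term $\mtr{f}$ should arise as the \emph{output} of $\psi\circ\chi$ applied to a depth-one term $f$ over a product set rather than as the input; your parenthetical ``generic matrix of variables'' reading is the correct fix, and the paper itself dispatches this direction as straightforward.
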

%\begin{example}
For instance, in an algebraic theory presenting a relevant monoidal monad,
    we must have~$f^2=f$ for any unary~$f$
        and~$g(g(a,b),g(c,d)) = g(a,d)$
            for any binary~$g$. 
%\end{example}
%\noindent
%\noindent
%Hence this section's main technical result characterising relevance follows:
\begin{theorem}\label{thmdiscerning}
Suppose~$t_1=t_2$ is a~2-discerning equation
    and~$T$ is a monoidal monad on~$\Set$.
Then~$T$ is relevant if and only if~$T$ preserves~$t_1=t_2$.
\end{theorem}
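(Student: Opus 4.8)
The equivalence splits into an easy implication and a hard one. \emph{From relevance to preservation:} a 2-discerning equation is 2-dup and non-drop, hence not linear and with every variable of $\Var(t_1)\cup\Var(t_2)$ occurring on both sides --- that is, it is strict-dup in the sense of Definition~\ref{def:dup} --- so by the theorem of~\cite{layers} recalled above a relevant $T$ preserves it. The content of the statement is thus the converse.

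\emph{From preservation to relevance.} Here I would adapt the routing-algebra technique behind Theorems~\ref{thm:sleevebinary}--\ref{thm:sleevenary}; note first that a 2-discerning equation such as $y(xy)=yx$ is \emph{not} of the form $t[x]=t[x\cdot x]$, nor one of the one-drop modifications discussed there, so a fresh instance of the construction is genuinely needed. Following the remark preceding the statement, I would treat the case where $T$ is finitary, presented by an algebraic theory $\Th$ (the infinitary case being analogous but heavier on paper). By Proposition~\ref{propalgrel} it then suffices to prove, for every $n$-ary operator $f$ of $\Th$, the matrix law $\mtr{f}((x_{ij})_{ij})=\vec{f}((x_{ii})_i)$; equivalently, through the diagrammatic reformulation, to establish the ``bubble'' identity $\te_{X,Y}\circ\chi_{X,Y}=\id_{T(X\times Y)}$ of~\eqref{relevancediag} for arbitrary sets $X,Y$.

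Fixing $X$ and $Y$, I would let $x_1$ be the unique duplicated variable, occurring twice in the distinguished side $t_2$, and build a $\Sigma$-algebra $\algb$ whose carrier is a suitable power of $X\sqcup Y$: one wire for each variable position of the linear side $t_1$, together with two auxiliary wires destined to wrap the bubble. Every operator of $\Sigma$ is interpreted by a single routing map $m$, specified as in the proof of Lemma~\ref{msatisfies} by the connections it makes between the wires of its inputs and its output (reducing $n$-ary operators to binary ones as in the discussion before Theorem~\ref{thm:sleevenary}). The connections are arranged so that on the lifted algebra $\widehat{T}\algb$, after deleting the auxiliary wires via Lemma~\ref{lemma:fgsleeve}, evaluating $t_1$ computes $\id_{T(X\times Y)}$ while evaluating $t_2$ computes $\te_{X,Y}\circ\chi_{X,Y}$ --- the two auxiliary wires being precisely the pair of threads surrounding the bubble formed where the two copies of $x_1$ are merged. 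Since $T$ preserves $t_1=t_2$ and $\algb\models t_1=t_2$, we get $\widehat{T}\algb\models t_1=t_2$ and hence $\te\circ\chi=\id$, i.e.\ $T$ is relevant.

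I expect the main obstacle to be checking that the algebra just built genuinely satisfies $\algb\models t_1=t_2$, and it is here that the discerning hypothesis is essential. The routing map $m$ deliberately sends one copy of $x_1$ to the ``left'' and the other to the ``right'' input of the merge node, so $\algb$ distinguishes the two occurrences; one must verify that $t_1=t_2$ does not secretly identify them. This is exactly what 2-discerningness provides: if the separated-and-swapped linear equation $s_2=s_2'$ were derivable from $t_1=t_2$, then the equation would force $m$ to be symmetric in the two copies of $x_1$, the bubble would appear only up to a symmetrisation, and relevance could not be extracted; non-derivability of $s_2=s_2'$ is precisely the freedom needed to define $m$ consistently with $t_1=t_2$. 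Thus the hard part is combinatorial rather than conceptual --- giving a uniform definition of the routing map for an arbitrary 2-discerning equation and verifying the equation on $\algb$ --- which is why the full details are deferred to~\cite{parlant2020preservation}.
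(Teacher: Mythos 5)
Your ``if'' direction is fine: a 2-discerning equation is 2-dup and non-drop, hence strict-dup, so relevance gives preservation by the recalled theorem of~\cite{layers}. The ``only if'' direction, however, has a genuine gap, and it sits exactly where you flag your ``main obstacle''. The routing-algebra technique of Lemma~\ref{msatisfies} and Theorem~\ref{thm:sleevenary} works because in $t[x]=t[x\cdot x]$ the two sides have the \emph{same} syntax tree up to one extra node at the duplication site; a wire of the form $wL^*$ or $wR^*$ then reaches compatible leaves in both trees, which is what lets a single uniform interpretation $m$ both satisfy the equation and isolate the bubble. A general 2-discerning equation has no such shared shape. Take $y(xy)=yx$: on the $t_2$-side the second occurrence of $y$ sits at address $RR$, so any output wire reading it must begin with $R$; but in the depth-one tree of $t_1=yx$ every wire beginning with $R$ lands on the leaf $x$. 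Hence any output component of a uniform routing algebra that picks up the second copy of $y$ in $t_2$ picks up a component of $x$ in $t_1$, and the algebra can satisfy $t_1=t_2$ only if that component is degenerate---in which case it no longer wraps the bubble. No choice of routing map escapes this, and 2-discerningness does not rescue it: it is a property of the equational theory, not a licence to build an algebra with prescribed asymmetric behaviour that still models the equation. You never actually construct $\algb$ or verify $\algb\models t_1=t_2$, and for equations such as $y(xy)=yx$ the algebra you sketch cannot exist.

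The paper's proof avoids the satisfaction problem altogether by evaluating in the \emph{free} model $Y$ of $t_1=t_2$ over generators $y_1,\ldots,y_n,r_1,\ldots,r_m$, where satisfaction is automatic; preservation then yields $\mtr{f}\bigl((s_2[y_i,y_j,\vec{r}])_{ij}\bigr) = \vec{f}\bigl((s_2[y_i,y_i,\vec{r}])_i\bigr)$ in $TY$. The 2-discerning hypothesis is used for something quite different from what you propose: it guarantees that the $n^2$ terms $s_2[y_i,y_j,\vec{r}]$ are pairwise distinct in $Y$, so that a map $h\colon Y\to X$ with $h(s_2[y_i,y_j,\vec{r}])=x_{ij}$ exists for an arbitrary matrix $(x_{ij})_{ij}$; pushing the displayed identity forward along $Th$ gives the matrix law of Proposition~\ref{propalgrel}, hence relevance. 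Any repair of your argument would need this free-model (or an equivalent term-separation) step before a concrete algebra distinguishing the two occurrences of the duplicated variable can be produced.
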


\section{Related work}
%Our work has connections with several areas of the literature. 
The notions of relevant and affine monads are systematically studied in~\cite{kock1972bilinearity,jacobs1994semantics},
but those works do not treat preservation of equations.
Pioneering work on the preservation of algebraic features by a monad goes back to~\cite{gautam1957validity}. Without any notion of category theory, Gautam pinpoints exactly which equations are preserved by the powerset monad and highlights the importance of variable duplications or deletions.

Methods for combining the features of two monads have been discussed in several papers. Hyland et al.\ work out two canonical constructions in \cite{hyland2004combining}, the \emph{sum} and the \emph{tensor} of monads. Our work is closer to King and Wadler's study in \cite{king1993combining}, as they use distributive laws. Later, Manes and Mulry show in \cite{manes2007monad} and \cite{manes2008monad} that a correspondence between categorical properties of one monad and algebraic features of another may lead to a distributive law. Their work also sheds light on the importance of a monoidal structure in the problem of lifting signatures. This approach is then generalised to affine and relevant monads in \cite{layers}. Our contributions extend this work by showing on the one hand that some of the sufficient conditions of \cite{layers} are
necessary (Theorems~\ref{thm:drop-affine}, \ref{thmdiscerning}, \ref{thm:sleevenary}), and on the other hand that some algebraic features require finer categorical conditions to be preserved (Theorem~\ref{thm:nrelevance}).

In Section~\ref{discerning}, we rely on presenting monads with algebraic theories in order to study their composition. This approach is related to Zwart and Marsden's method in \cite{ZwartM19}. In that paper, the authors elaborate on the concept of composite theory, introduced and studied by Pirog and Staton in \cite{pirog2017backtracking}, and establish the absence of distributive laws under different algebraic conditions on the considered monads. The main difference is that in~\cite{ZwartM19}, the focus is on combining algebraic theories,
whereas the current paper relates algebraic structures to categorical properties of monads (relevance and affineness). These properties, along with other categorical conditions, are characterised algebraically in a similar manner to Proposition~\ref{propalgrel} by Kammar and Plotkin in~\cite{kammar2012algebraic}, and our work connects this  characterisation with the preservation of certain equations.

\section{Conclusions and Future work}

We have systematically related preservation of drop and dup equations to affineness and relevance of monoidal monads,
respectively. There are several avenues for future work. 
First, this paper focuses on monads on $\Set$. Generalising this work to monads on arbitrary Cartesian monoidal categories would be a natural follow-up and would require to use a more abstract notion of equation.
Second, our work focuses specifically on monoidal liftings. Therefore, the distributive laws that we obtain from preservation 
are of a specific shape. It would be interesting to algebraically characterise which distributive laws arise in this way. 
Finally, we would like to go beyond the world of monoidal liftings, possibly allowing a non-affine monad to preserve drop equations for such non-canonical liftings. This could give a new perspective on the construction of distributive laws. 

%\jurriaan{note for future work: consider general Cartesian monoidal categories, by using a more abstract notion of equation}

%\section{Omitted proofs for Section~\ref{sec:dup}}

\bibliography{monadsbib}

%\newpage
%% Appendix
\appendix

\section{Details of Section \ref{sec:prelim}}
\label{sec:appendix_prelim}
A few definitions were omitted from our preliminary section, we present here the details.
\begin{definition}[Monoidal Functor]
Let $\cat$ be a Cartesian monoidal category. 
A \emph{(lax) monoidal functor} is an endofunctor $F\colon \cat\to\cat$ together with natural transformations $\psi_{X,Y}\colon  FX\times FY\to F(X\times Y)$ and $\psi^0\colon  1\to F1$ satisfying the diagrams:

%\vspace{-20pt}
%\begin{adjustbox}{max width=\textwidth}
\hspace{-0.5cm}
\begin{tabular}{l c}
\begin{minipage}[t]{6cm}
\begin{equation}
\vcenter{
\xymatrix@C=7ex
{
FX\times 1\ar[r]^{\id_{FX}\times \psi^0}\ar[r]^{\id_{FX}\times \psi^0}\ar[d]^{\rho_{FX}} & FX\times F1\ar[d]_{\psi_{X,1}} \\
FX & F(X\times 1)\ar[l]_{F\rho_X}
}}\label{diag:monoidal:left_unit}\tag{MF. 1}
\end{equation}
\end{minipage}
&
\begin{minipage}[t]{6cm}
\begin{equation}\vcenter{
\xymatrix@C=7ex
{
1\times FX\ar[r]^{\psi^0\times \id_{FX}}\ar[r]^{\psi^0\times \id_{FX}}\ar[d]^{\rho'_{FX}} & F1\times FX\ar[d]_{\psi_{1,X}} \\
FX & F(1\times X)\ar[l]_{F\rho'_X}
}}\label{diag:monoidal:right_unit}\tag{MF. 2}
\end{equation}
\end{minipage}
\\
\multicolumn{2}{c}{ \begin{minipage}[t]{12cm}
\begin{equation}
\vcenter{
\xymatrix@C=12ex
{
(FX\times FY)\times FZ\ar[r]^{\alpha_{FX,FY,FZ}} \ar[d]^{\psi_{X,Y}\times \id_{FZ}}& FX\times (FY\times FZ) \ar[d]_{\id_{FX}\times \psi_{Y,Z}}\\
F(X\times Y)\times FZ\ar[d]_{\psi_{X\times Y, Z}} & FX\times F(Y\times Z)\ar[d]_{\psi_{X,Y\times Z}}\\
F((X\times Y)\times Z)\ar[r]^{F\alpha_{X,Y,Z}} & F(X\times (Y\times Z)))
}}\label{diag:monoidal:assoc}\tag{MF. 3}
\end{equation}
\end{minipage}}
\end{tabular}

Moreover, a monoidal functor is called \emph{symmetric} if
\begin{equation}
\vcenter{
\xymatrix
{
FX\times FY\ar[r]^{\psi_{X,Y}}\ar[d]_{\swp_{FX,FY}} & F(X\times Y)\ar[d]^{F\swp_{X,Y}}\\
FY\times FX\ar[r]_{\psi_{Y,X}} & T(Y\times X)
}}\label{diag:monoidal:sym}\tag{SYM}
\end{equation}

\end{definition}

\begin{definition}[Monoidal monad]
A \emph{monoidal monad} $(T,\eta, \mu)$ is a monad whose underlying functor is monoidal, 
such that the associated natural transformation 
$\psi_{X,Y}\colon  TX\otimes TY\to T(X\otimes Y)$ satisfies $\psi^0=\eta_1$
and makes the following diagrams commute, stating
that the unit and multiplication are monoidal natural transformations. 

\hspace{-1cm}
\begin{tabular}{l l}
\begin{minipage}{6cm}
\begin{equation}
\vcenter{
\xymatrix@C+1pc
{
X\times Y\ar[r]^{\eta_X\times \eta_Y} \ar[dr]_{\eta_{X\times Y}}& TX\times TY\ar[d]^{\psi_{X,Y}}\\
& T(X\times Y)
}}\label{diag:monoidal:unit}\tag{MM.1}
\end{equation}
\end{minipage}
&
\begin{minipage}{6cm}
\begin{equation}\vcenter{
\xymatrix@C-.2pc
{
T^2X\times T^2 Y\ar[d]^{\mu_{X}\times \mu_{Y}} \ar[r]^{\psi_{TX,TY}}
    & T(TX\times TY) \ar[r]^{T\psi_{X,Y}}& TT(X\times Y)\ar[d]_{\mu_{X\times Y}}\\
TX\times TY\ar[rr]^{\psi_{X,Y}} & & T(X\times Y)
}}\label{diag:monoidal:multiplication}\tag{MM.2}
\end{equation}
\end{minipage}
\end{tabular}
\end{definition}

\section{Details of Section \ref{sec:preserve}}
\label{sec:appendix_preserve}
Section \ref{sec:preserve} focuses on the mechanics of the preservation of algebraic features; we give here more details on our constructs.

\subparagraph*{Lifting monads.} 
 First, the lifting $\widehat{T}$ can be neatly formalised using a distributive law $\lambda \colon \poly T \Rightarrow T\poly$, 
as in~\cite{sokolova2007generic}.
To define it, first note that for any $\sigma \in \Sigma$ 
we have the map 
$$
\xymatrix{
	(TX)^{\ari{\sigma}} \ar[r]^-{\psi^{\ari{\sigma}}}
		& TX^{\ari{\sigma}} \ar[r]^-{T\kappa_\sigma} 
        & \displaystyle T\coprod_{\sigma \in \Sigma} X^{\ari{\sigma}}
},
$$
where $\kappa_\sigma$ is the coproduct injection. The distributive law 
$\lambda$ is the cotupling of these maps:
$$
\lambda_X \ \equiv \  \Bigl(
\xymatrix@C=1.7cm
{\displaystyle
\poly T X = \coprod_{ \sigma \in\Sigma} (TX)^{\ari{\sigma}} \ar[r]^-{\left[T\kappa_\sigma \circ \psi^{\ari{\sigma}}\right]_{\sigma \in \Sigma}}  & T \poly X
} 
\Bigr)
\,.
$$
This yields a lifting
%This distributive law is defined in~\cite{sokolova2007generic}, 
%see also~\cite{layers}.
$\widehat{T} \colon \Alg[\Sigma] \rightarrow \Alg[\Sigma]$ defined by
$$
\widehat{T}(A,a) \ \equiv \  \bigl(
	\xymatrix{
		\poly TA \ar[r]^-{\lambda_A}
			& T \poly A \ar[r]^-{Ta}
			& TA
	}
\bigr)\,.
$$
Since $\lambda$ is a distributive law of functor over monad~\cite{sokolova2007generic}, i.e., it is compatible 
with the monad structure of $T$, $\widehat{T}$ is a lifting of the \emph{monad} $T$.

\begin{remark}
Equivalently, $T$ preserves $E$ if $\widehat{T}$ restricts to $\Alg[\Sigma,E]$. The
latter is monadic; let $S$ be a monad whose category $\EM(S)$ of Eilenberg--Moore algebras is isomorphic
to $\Alg[\Sigma,E]$. Since liftings of $T$ to $\EM(S)$ correspond to distributive laws~\cite{Johnstone:Adj-lif},
preservation of $E$ implies the existence of a distributive law of the monad $S$ over the monad $T$. 

The converse does not hold: the existence of a distributive law $\lambda  \colon  ST \Rightarrow TS$
does not necessarily mean that $T$ preserves $E$. The point is that $S$ might have several different presentations
by operations and equations, and the notion of preservation makes explicit use of the presentation.
Consider for example $T$ to be a non-affine, relevant monad, and $S = \di$ the distribution monad. Algebras for $\di$ are convex algebras, which can be presented by several equivalent theories. One theory $\mathbb{T}_1$ has signature made of binary convex combination symbols $\oplus_{\lambda}$ for $\lambda \in [0,1]$, and the equations include the \emph{projection} axioms $\oplus_0(x,y)=y$ and $\oplus_1(x,y)=x$ (see for instance \cite{bonchi2017power} for more details). These two laws are one-drop equations and cannot be preserved by $T$. But $\di$ is also presented by $\mathbb{T}_2$, whose signature only contains convex combination operators $\oplus_{\lambda}$ for $\lambda \in ]0,1[$. Therefore $\mathbb{T}_2$ has no projection axioms; all its equations are linear or dup. As explained in \cite{layers}, the relevance of $T$ is sufficient to preserve $\mathbb{T}_2$, hence there exists a distributive law $\di T \Rightarrow T \di$. Indeed, the failure to preserve one presentation of $\di$ does not mean that no distributive law can be found. 
\end{remark}

\subparagraph*{Decomposing equations.} 
To study individual equations and their preservation, it is useful to redefine the interpretation
of terms by decomposing their action on variables, in the way described in~\cite{layers}, 
which we briefly recall here. 
%This material is only needed  within the proofs, not to state our results. 

%More precisely, for an equation $t_1=t_2$ using variables in $V=\Var(t_1) \cup \Var(t_2)$, we define an interpretation of terms $t_1$ and $t_2$ as morphisms by decomposing their action on variables. 

Let $V\equiv \{x_1, x_2, \ldots \}$ be a set of variables. 
For a term $t \in \Sigma^* V$, 
we write $\Arg(t)$ for the \emph{list of arguments} used in $t$, ordered as they appear in $t$ from left to right.
For instance~$\Arg(t)=[x_1,x_4,x_1,x_2]$
    for~$t \equiv (x_1 \cdot x_4)+(x_1 \cdot x_2)$.
    
Now, let $\algb = (A,a)$ be a $\Sigma$-algebra. 
First, we define tor a term $t \in \Sigma^*V $ the transformation $\prepare(t):A^{|V|} \to A^k$, where $k=|\Arg(t)|$, in the following way. For $\Arg(t)=[x_{i_1},x_{i_2},x_{i_3}, \dots x_{i_k}]$, we have  $\prepare(t) = \langle \pi_{i_1},\pi_{i_2},\pi_{i_3}, \dots \pi_{i_k} \rangle$.
This map rearranges the inputs to match the layout of variables in $t$. Second, we perform the algebraic operations described by the symbols in $t$ using their interpretation in $\algb$. To this end and for a term $t \in  \Sigma^*V$, we use the transformation $\evaluate (t) \colon A^{k} \to A$, defined inductively as
$\evaluate(x) 
		= \id_A$ for $x \in V$, 
		and 	
		$\evaluate(\sigma(t_1, \ldots, t_{i})) 
		= A^{k} \xrightarrow{\evaluate(t_1) \times \ldots \times \evaluate(t_i)} A^{i} \xrightarrow{\sigma_{\mathcal{A}}} A$
		for all $\sigma \in \Sigma$. 

 For example, consider $V=\{x,y\}$ and the term $(x\cdot y) \cdot x$. The first transformation arranges the variables to match the arguments of the
 given term: $\prepare(t)=\langle \pi_1, \pi_2, \pi_1\rangle$. The second one evaluates in the algebra: $\evaluate(t) = \cdot_{\mathcal{A}} \circ  ( \cdot_{\mathcal{A}} \times \id)$.

%Finally we compose both morphisms into an interpretation of the term as $\sem{t}_{\algb}^V=\evaluate(t) \circ \prepare(t)$. This allows us to characterise an equation by interpreting both sides:
By composing $\prepare$ and $\evaluate$, we can characterise satisfaction of an equation as follows. 
\begin{lemma}[\cite{layers}]\label{lem:soundcompl}
For $\algb$ a $\Sigma$-algebra and $t_1,t_2$ $\Sigma$-terms such that $V=\Var(t_1)\cup \Var(t_2)$,
    we have:
%\begin{equation}
%\sem{t_1}_{\algb}^V = \sem{t_2}_{\algb}^V \Leftrightarrow \algb\models t_1=t_2
$\evaluate(t_1) \circ \prepare(t_1) \ =\  \evaluate(t_2) \circ \prepare(t_2) 
    \quad \Leftrightarrow\quad  \algb\models t_1 = t_2$.
%\end{equation} 
%\jurriaan{do we need $V = \Var(t_1) \cup \Var(t_2)$, that is, does every variable need to occur in at least one term}
\end{lemma}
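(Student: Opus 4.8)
The plan is to show that the morphism $\evaluate(t) \circ \prepare(t) \colon A^{|V|} \to A$ is nothing but the function computing the interpretation of $t$: identifying a tuple $\vec{a} = (a_1, \ldots, a_{|V|})$ with the assignment $f_{\vec a} \colon V \to A$ sending $x_j \mapsto a_j$, I claim that $(\evaluate(t) \circ \prepare(t))(\vec a) = f_{\vec a}^\sharp(t)$. Call this the \emph{main claim}. Granting it, the equivalence is immediate: in $\Set$ two maps $A^{|V|} \to A$ are equal precisely when they agree on every tuple $\vec a$, and tuples are in bijection with assignments $f \colon V \to A$ (here $V = \Var(t_1) \cup \Var(t_2)$ is finite, since terms are finite). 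Hence $\evaluate(t_1) \circ \prepare(t_1) = \evaluate(t_2) \circ \prepare(t_2)$ holds iff $f^\sharp(t_1) = f^\sharp(t_2)$ for every $f$, which is exactly $\algb \models t_1 = t_2$. So both directions of the iff fall out of the single pointwise identity.

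The main claim I would prove by structural induction on $t$. In the base case $t = x_j \in V$ we have $\Arg(t) = [x_j]$, so $\prepare(t) = \pi_j$ and $\evaluate(t) = \id_A$, whence $(\evaluate(t)\circ\prepare(t))(\vec a) = a_j = f_{\vec a}(x_j) = f_{\vec a}^\sharp(x_j)$. For the inductive step $t = \sigma(t_1, \ldots, t_n)$, the crucial bookkeeping fact is that $\Arg(t)$ is the concatenation $\Arg(t_1) \cdots \Arg(t_n)$, which is how the left-to-right argument list is built. Consequently the tuple of projections defining $\prepare(t)$ splits block-wise as $\prepare(t) = \langle \prepare(t_1), \ldots, \prepare(t_n)\rangle \colon A^{|V|} \to A^{k_1} \times \cdots \times A^{k_n}$, with $k_j = |\Arg(t_j)|$. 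Combining this with the defining equation $\evaluate(t) = \sigma_\algb \circ (\evaluate(t_1) \times \cdots \times \evaluate(t_n))$ and the product fusion law $(g_1 \times \cdots \times g_n) \circ \langle h_1, \ldots, h_n\rangle = \langle g_1 \circ h_1, \ldots, g_n \circ h_n\rangle$ yields
\[
\evaluate(t) \circ \prepare(t) = \sigma_\algb \circ \langle \evaluate(t_1)\circ\prepare(t_1), \ldots, \evaluate(t_n)\circ\prepare(t_n)\rangle .
\]
Evaluating at $\vec a$ and applying the induction hypothesis to each $t_j$ turns the right-hand side into $\sigma_\algb(f_{\vec a}^\sharp(t_1), \ldots, f_{\vec a}^\sharp(t_n))$, which is precisely $f_{\vec a}^\sharp(\sigma(t_1, \ldots, t_n))$ by the defining recursion of $f^\sharp$. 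This closes the induction.

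Two points deserve care. First, all subterms must be handled with respect to the same global variable set $V$, so that every $\prepare(t_j)$ has domain $A^{|V|}$; the block-splitting of $\prepare(t)$ then references projections of this common $A^{|V|}$, and the fusion law applies on the nose. Second, although the excerpt pins $\Arg$ down only through an example, the inductive step relies on its concatenation behaviour, which I would record explicitly as the inductive definition $\Arg(\sigma(t_1,\ldots,t_n)) = \Arg(t_1)\cdots\Arg(t_n)$. I expect the only genuine obstacle to be the index-juggling showing that $\prepare(t)$ decomposes block-wise into the $\prepare(t_j)$ and that this decomposition lines up with the domain $A^{k_1}\times\cdots\times A^{k_n}$ of $\evaluate(t_1)\times\cdots\times\evaluate(t_n)$; once this alignment is fixed, the fusion law does the work and everything else is routine.
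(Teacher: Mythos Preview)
Your argument is correct. The paper itself does not supply a proof of this lemma: it is stated with a citation to~\cite{layers} and used as a black box. Your structural induction establishing $(\evaluate(t)\circ\prepare(t))(\vec a)=f_{\vec a}^\sharp(t)$ is exactly the natural proof, and the two care points you flag (working over the fixed global $V$, and the concatenation behaviour of $\Arg$) are precisely what one needs to make the block decomposition of $\prepare$ line up with the product in $\evaluate$.
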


%\louis{need to relate this with Jurriaan's $f^\#$}
%\jurriaan{it's fine, no? all hidden in $\algb \models t_1 = t_2$ as far as i see..}

By this characterisation, we can study whether $T$ preserves $t_1=t_2$ by examining whether 
$\evaluate(t_1) \circ \prepare(t_1) = \evaluate(t_2) \circ \prepare(t_2)$
implies $\evaluate[\widehat{T}\algb](t_1) \circ \prepare[\widehat{T}\algb](t_1) = \evaluate[\widehat{T}\algb](t_2) \circ \prepare[\widehat{T}\algb](t_2)$. 
The relevant properties are summarised in the following Lemma. 

%The following diagram summarises the situation:
\begin{lemma} 
\label{lemma:preservediag}
Consider the following diagram.
\begin{equation}
    \hspace{-1em}\vcenter{\xymatrix@C-.7pc@R-1.3pc{
& & (TA)^{|V|} \ar[d]_{\te} \ar@/^2.0pc/[ddrr]^{\prepare[\widehat{T} \algb](t_2)} \ar@/_2.0pc/[ddll]_{\prepare[\widehat{T} \algb](t_1)}& & \\
& \ccld{R_1} & T(A^{|V|})\ar[ld]_{T\prepare(t_1)} \ar[rd]^{T\prepare(t_2)}& \ccld{R_2} &\\
(TA)^{k_1}\ar[r]^{\te} \ar@/_2.0pc/[ddrr]_{\evaluate[\widehat{T} \algb](t_1)}& T(A^{k_1})\ar[ddr]^[@!-50]{T\evaluate[\algb](t_1)} & \ccld{b} & T(A^{k_2})\ar[ddl]_[@!50]{T\evaluate[\algb](t_2)} & (TA)^{k_2}\ar[l]_{\te} \ar@/^2.0pc/[ddll]^{\evaluate[\widehat{T} \algb](t_2)}\\
 & \ccld{a} &  & \ccld{c} \\
& & TA & &\\
    }}
\label{preservediag}
\end{equation}
%\jurriaan{should it be $\prepare[\widehat{T} \algb]$ instead of $\prepare[T \algb]$ in the diagram?}
In this context, we have:
	\begin{enumerate}
		\item $\ccld{b}$ commutes if $\algb \models t_1=t_2$;
		\item the outer diagram commutes iff $\widehat{T} \algb \models t_1=t_2$;
		\item $\ccld{a}$ and $\ccld{c}$ commute;
		\item if $\ccld{R_1}$ and $\ccld{R_2}$ commute then $T$ preserves $t_1=t_2$.
	\end{enumerate}
\end{lemma}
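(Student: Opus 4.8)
The plan is to read each labelled region of diagram~\eqref{preservediag} as the equality of the two composites along its boundary, and then dispatch the four assertions: (1)--(3) are essentially definition-unwinding together with Lemma~\ref{lem:soundcompl}, and (4) is a short diagram chase assembled from them. Concretely, I would record once that region~\ccld{b} says $T\evaluate[\algb](t_1)\circ T\prepare[\algb](t_1)=T\evaluate[\algb](t_2)\circ T\prepare[\algb](t_2)$; the outer diagram says $\evaluate[\widehat{T}\algb](t_1)\circ\prepare[\widehat{T}\algb](t_1)=\evaluate[\widehat{T}\algb](t_2)\circ\prepare[\widehat{T}\algb](t_2)$; regions~\ccld{a} and~\ccld{c} say $\evaluate[\widehat{T}\algb](t_j)=T\evaluate[\algb](t_j)\circ\te$ for $j=1,2$; and \ccld{R_1},~\ccld{R_2} say $\te\circ\prepare[\widehat{T}\algb](t_j)=T\prepare[\algb](t_j)\circ\te$.

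For~(1): by Lemma~\ref{lem:soundcompl}, the hypothesis $\algb\models t_1=t_2$ is exactly $\evaluate[\algb](t_1)\circ\prepare[\algb](t_1)=\evaluate[\algb](t_2)\circ\prepare[\algb](t_2)$, so applying the functor $T$ to both sides yields region~\ccld{b}. For~(2): I would apply Lemma~\ref{lem:soundcompl} to the $\Sigma$-algebra $\widehat{T}\algb$ with carrier $TA$ and variable set $V=\Var(t_1)\cup\Var(t_2)$; its conclusion is precisely that the outer diagram commutes iff $\widehat{T}\algb\models t_1=t_2$.

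For~(3) I would prove, by structural induction on an arbitrary term $t$ with $k=|\Arg(t)|$, the identity $\evaluate[\widehat{T}\algb](t)=T\evaluate[\algb](t)\circ\psi^{k}$; regions~\ccld{a} and~\ccld{c} are then the instances $t=t_1$ and $t=t_2$. The base case (variables, and constants if the signature has them) is immediate from the conventions $\psi^1=\id$ and $\psi^0=\eta_1$. For $t=\sigma(t_1,\dots,t_i)$ with $k=k_1+\dots+k_i$, I would unfold $\evaluate[\widehat{T}\algb](t)=\sigma_{\widehat{T}\algb}\circ(\evaluate[\widehat{T}\algb](t_1)\times\dots\times\evaluate[\widehat{T}\algb](t_i))$, substitute the defining equation $\sigma_{\widehat{T}\algb}=T\sigma_\algb\circ\psi^{i}$ and the induction hypotheses, push each $\psi$ past the product $T\evaluate[\algb](t_1)\times\dots\times T\evaluate[\algb](t_i)$ using naturality of $\psi^{i}$ in all $i$ arguments, and then invoke the coherence $\psi^{k}=\psi^{i}\circ(\psi^{k_1}\times\dots\times\psi^{k_i})$ (at the appropriate objects), which is exactly what ``$\psi^{n}$ is associatively built from the binary $\psi$'' provides; re-collecting the $T$-image gives $T(\sigma_\algb\circ(\evaluate[\algb](t_1)\times\dots\times\evaluate[\algb](t_i)))\circ\psi^{k}=T\evaluate[\algb](t)\circ\psi^{k}$. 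This induction is the most laborious step, but it is pure bookkeeping of arities and of which copy of $\psi$ is natural in which slot, with no genuine obstacle.

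For~(4): assume \ccld{R_1} and \ccld{R_2} commute and $\algb\models t_1=t_2$. By~(1) region~\ccld{b} commutes and by~(3) regions~\ccld{a} and~\ccld{c} commute, so I would chase the outer-left composite $\evaluate[\widehat{T}\algb](t_1)\circ\prepare[\widehat{T}\algb](t_1)$ around the diagram, rewriting successively through \ccld{a}, then \ccld{R_1}, then \ccld{b}, then \ccld{R_2}, then \ccld{c}, to land on the outer-right composite; hence the outer diagram commutes and by~(2) $\widehat{T}\algb\models t_1=t_2$. Since $\algb$ was arbitrary, $T$ preserves $t_1=t_2$. All of the conceptual content sits in this chase, which makes visible that \ccld{R_1} and \ccld{R_2} are the only squares that can fail: they are exactly the compatibility of $\psi$ with the rearrangements $\prepare[\algb](t_1)$ and $\prepare[\algb](t_2)$, and this is where affineness (for drop equations) and relevance (for dup equations) enter in the later sections.
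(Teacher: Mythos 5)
Your proposal is correct and follows the same route as the paper: items (1) and (2) are Lemma~\ref{lem:soundcompl} (plus functoriality of $T$ for the former), and item (4) is the diagram chase through \ccld{a}, \ccld{R_1}, \ccld{b}, \ccld{R_2}, \ccld{c}. The only difference is that for item (3) the paper simply cites~\cite{layers}, whereas you carry out the structural induction establishing $\evaluate[\widehat{T}\algb](t)=T\evaluate[\algb](t)\circ\psi^{k}$ yourself; your induction (naturality of $\psi$ plus the coherence $\psi^{k}=\psi^{i}\circ(\psi^{k_1}\times\dots\times\psi^{k_i})$) is exactly the standard argument and is sound.
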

\begin{proof}
	The first two items hold by Lemma~\ref{lem:soundcompl}. The third is shown in~\cite{layers},
	and the fourth follows from the other items. 
\end{proof}
%\begin{lemma}[\cite{layers}]
%In the above diagram, $\ccld{a}$ and $\ccld{c}$  always commute.
%\label{lemma:aandc}
%\end{lemma}
%\begin{proof}
%This follows from properties of naturality and monoidality as shown in \cite{layers}
%\end{proof}
%Therefore the commutativity of the outer diagram can be reduced to two properties:
%\begin{lemma}[\cite{layers}]
%In the above diagram, if $\ccld{R_1}$ and $\ccld{R_2}$ commute then $T$ preserves $t_1=t_2$.
%\label{lemma:resdiag}
%\end{lemma}
In the above diagram,
$\ccld{R_1}$ and $\ccld{R_2}$  are called \emph{residual diagrams}~\cite{layers}.
They give a sufficient condition for preservation.

\section{Details of Section \ref{sec:affine}}
\label{sec:appendix_affine}

We present now the omitted contents of Section~\ref{sec:affine}. First, we prove two auxiliary lemmas that will be used
in our proofs. Recall the \emph{unitor} isomorphism $\rho_X\colon  X \times 1 \to X$.
We write $\rho_n$ for the composition 
$\rho_n = \rho \circ (\id \times !) \colon  X \times 1^n \to X$. 
\begin{lemma}
For a monoidal monad $T$ on a Cartesian monoidal category~$\cat$, 
the diagram
\begin{equation}
    \vcenter{\xymatrix@R-1pc@C+1pc{
T1\times 1^n  \ar[r]^{\id \times \eta_1^n}  \ar[d]_{\rho_n}
& T1\times (T1)^n  \ar[d]^{\te}\\
 T1 
    &T(1 \times 1^n) \ar[l]^{T \rho_n}}}\label{diag:monofunctorn}
\end{equation}%
commutes for all $n \geq 1$.
\label{lemma:monoidalrhon}%
\end{lemma}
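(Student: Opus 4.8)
\textbf{Proof plan for Lemma~\ref{lemma:monoidalrhon}.}

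The plan is to reduce the $n$-ary statement to the binary unitality axiom \eqref{diag:monoidal:left_unit} of a monoidal functor, using naturality of $\psi$ and the associativity coherence \eqref{diag:monoidal:assoc}, together with an induction on $n$. Concretely, the object $1^n$ is (via the chosen products) canonically isomorphic to $1$, since $1$ is terminal; write $t_n\colon 1^n \to 1$ for the unique map and observe $\rho_n = \rho \circ (\id \times t_n)$. The key point is that $\psi^0 = \eta_1$ for a monoidal monad, so $\eta_1^n \colon 1^n \to (T1)^n$ factors as $(T1)^n$ built from copies of $\psi^0$, and the $n$-ary $\psi^n$ is assembled associatively from the binary $\psi$. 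So morally the diagram just says "gluing together $n$ copies of the unit map and then projecting away is the same as projecting away first", which is exactly iterated \eqref{diag:monoidal:left_unit}.

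First I would handle the base case $n=1$: here diagram \eqref{diag:monofunctorn} is, after substituting $\psi^0 = \eta_1$ and $\rho_1 = \rho$, literally the unitality square \eqref{diag:monoidal:left_unit} instantiated at $X = 1$ (composed with the canonical iso $1^1 \cong 1$), so it commutes. For the inductive step, I would split $1^{n+1} = 1^n \times 1$ and $(T1)^{n+1} = (T1)^n \times T1$, and write $\psi^{n+1}$ in terms of $\psi^n$ and $\psi$ via the associative construction, and similarly decompose $\rho_{n+1}$. Then chase the diagram: apply the induction hypothesis to the "$1^n$ part", apply the binary unitality axiom \eqref{diag:monoidal:left_unit} to the remaining "$\times 1$ part", and use naturality of $\psi$ (with respect to the terminal maps $1^n \to 1$) plus the associator coherence \eqref{diag:monoidal:assoc} to reassociate the pieces. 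All the bracketings are forced by coherence (Mac Lane), so these manipulations are routine once set up.

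The main obstacle I anticipate is purely bookkeeping: being careful about the canonical isomorphisms $1^n \cong 1$, the precise inductive definition of $\psi^n$ (which bracketing is used), and matching the unitors $\rho$, $\rho'$ appearing in \eqref{diag:monoidal:left_unit}–\eqref{diag:monoidal:right_unit} with the $\rho_n$ used here. None of this is conceptually hard, but it is the kind of place where a sloppy induction hides a wrong coherence cell. An alternative, slicker route would be to avoid the explicit induction entirely: note that both composites in \eqref{diag:monofunctorn} are maps $T1 \times 1^n \to T1$ that agree with $\pi_1 = \rho_n$ after one observes that $T(\text{terminal map})$ precomposed with $\psi$ and the units collapses — but making that precise still ultimately appeals to the monoidal functor axioms, so I would present the induction as the safe, fully rigorous version.
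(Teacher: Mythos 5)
Your plan is correct and, like the paper, proceeds by induction on $n$ with base case $n=1$ given exactly by \eqref{diag:monoidal:left_unit}. The inductive steps differ in a small but genuine way. You peel off one factor of $1$ at a time: decompose $1^{n+1}\cong 1^n\times 1$, apply the induction hypothesis to the $1^n$ part, apply \eqref{diag:monoidal:left_unit} once more to the remaining factor, and reassociate via naturality and coherence; this uses only the lax monoidal functor axioms together with the identification $\psi^0=\eta_1$ (which you correctly flag as the essential input). The paper instead collapses all $n$ copies of $\eta_1$ at once: by \eqref{diag:monoidal:unit} (the unit $\eta$ being a monoidal natural transformation) the composite $\te_n\circ(\eta_1)^n$ equals $\eta_{1^n}$, naturality of $\eta$ along $!\colon 1^n\to 1$ then reduces everything to a single instance of \eqref{diag:monoidal:left_unit}. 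Your route is marginally more elementary in that it never invokes \eqref{diag:monoidal:unit}, at the cost of more associator/unitor bookkeeping per step; the paper's route trades that bookkeeping for one appeal to the monoidal-monad (rather than monoidal-functor) structure. Both yield \eqref{diag:monofunctorn}, and the coherence manipulations you defer to Mac Lane are indeed routine, so I see no gap.
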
%

\begin{proof}%
For~$n=1$ the result is
    simply~\eqref{diag:monoidal:left_unit}.
Assuming we've shown it for~$n$,
    consider the following diagram.

\noindent
\xymatrix@R-.8pc{
T1 \times 1^n 
\ar@/_1.5pc/@{{}{ }{}}[rrr]^{\hspace{2cm}\ccld{a}}
\ar[rrr]^{\id \times (\eta_1)^n}
\ar[rrd]^{\id \times \eta_{1^n}}
\ar[rdd]^{\id \times !}
\ar[dddd]_{\rho_n}
\ar@{}[rrdd]|{\ccld{c}}
& & & 
(T1)^{n+1}
\ar[dl]_{\id \times \te_n}
\ar[dddd]^{\te_{n+1}}
\\
& \hspace{1cm} & T1 \times T(1^n)
\ar@/^1.3pc/[rddd]^{\te}
\ar[d]_{\id \times T!}
& \hspace{-1.5cm}\ccld{d}
\\
\hspace{1cm}\ccld{b} & T1 \times 1 
\ar[ddl]^{\rho}
\ar[r]^{\id \times \eta_1}
&  T1 \times T1
\ar@{}[rd]|{\ccld{f}}
\ar[d]^{\te}
\\
& \hspace{1cm} & T(1 \times 1)
\ar[dll]_{T\rho}
&
\\
T1 
\ar@{}[rruu]|{\ccld{e}}
\ar@/^1.3pc/@{{}{ }{}}[rrr]^{\hspace{2cm}\ccld{g}}
& &  &
T(1^{n+1})
\ar[ul]_{T(id \times !)}
\ar[lll]_{T \rho_n}
}

\noindent
\ccld{a} commutes by~\eqref{diag:monoidal:unit}.
\ccld{b} and \ccld{g} 
    follow from the definition of~$\rho_n$.
\ccld{c} and \ccld{f} commute by naturality
of $\te$, \ccld{d} by monoidality, and \ccld{e} by the
\eqref{diag:monoidal:left_unit} again.
%     Note that the property still
% holds if the isomorphic arrows $\rho_n$ and $\rho_m$ are reversed.
\end{proof}

\begin{lemma}\label{lemma:rhofinal}
Let $T$ be a monoidal monad on a Cartesian monoidal category $\cat$.
    Let $m,n \in \mathbb{N}$, let $f$ be a morphism $1^m \to 1^n$. Then the following diagram commutes:
\begin{equation}
    \vcenter{\xymatrix@R-1pc{
T1 \times 1^m 
\ar[rr]^{\id \times f}
\ar[dr]_{\rho_m}
&& 
T1 \times 1^n
\\
& T1
\ar[ur]_{(\rho_n)^{-1}}
}}
\end{equation}
\end{lemma}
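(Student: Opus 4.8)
The plan is to exploit the fact that the category is Cartesian monoidal, so $1$ is terminal. First I would observe that for any $m$, $1^m$ is also terminal: it is a finite product of terminal objects, hence terminal. Consequently, for fixed $m,n$ there is a \emph{unique} morphism $1^m \to 1^n$, namely $!_{1^n}$ precomposed with anything — in particular the given $f$ must be the canonical iso determined by terminality, and likewise $!\colon 1^m \to 1$ and its inverse are forced. This collapses the problem: everything in sight between powers of $1$ is uniquely determined.

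The key steps, in order, would be: (1) note $1^m$ and $1^n$ are terminal, so $f$ is the unique map $1^m \to 1^n$; (2) recall $\rho_m = \rho \circ (\id \times {!})\colon T1 \times 1^m \to T1$ with ${!}\colon 1^m \to 1$ the unique map, and similarly $\rho_n$, whose inverse $(\rho_n)^{-1}\colon T1 \to T1\times 1^n$ is $\langle \id_{T1}, {!}_{1^n}\rangle$ after identifying $1^n$'s unique global element; (3) chase the triangle by postcomposing with the two projections $\pi_1\colon T1\times 1^n \to T1$ and $\pi_2\colon T1 \times 1^n \to 1^n$: on the $\pi_1$ component both paths give $\rho_m$ followed by $\id$, i.e.\ agree by the unitor coherence; on the $\pi_2$ component the codomain is $1^n$, which is terminal, so the two composites are automatically equal. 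Since a map into a product is determined by its components, the triangle commutes.

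Alternatively, and perhaps more cleanly for the write-up, I would phrase it as: both $\rho_m$ and $(\id \times f)$ followed by $(\rho_n)^{-1}\circ(\text{the composite})$ are morphisms whose target decomposes as $T1 \times (\text{terminal})$; the $T1$-component is handled by naturality of $\rho$ (or directly by the definition of $\rho_m,\rho_n$ via $\rho$), and the terminal component is vacuous. The only genuine content is that $\rho_n^{-1}\circ \rho_m$ restricted appropriately equals the identity on $T1$, which is immediate from $\rho$ being a (natural) isomorphism.

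The main obstacle — such as it is — is purely bookkeeping: being careful that the various maps ${!}$, $f$, and the unitors between different powers $1^m, 1^n$ really are the canonical ones, so that one may freely identify them; once terminality of all $1^k$ is invoked this is routine, and no monad structure beyond $T$ being a functor is actually needed (the lemma is a statement about the Cartesian structure of $\cat$ with $T1$ as a fixed object).
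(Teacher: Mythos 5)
Your proposal is correct and follows essentially the same route as the paper: the paper factors both $\rho_m$ and $\rho_n\circ(\id\times f)$ through $T1\times 1$ via $\id\times{!}$, using terminality of $1$ to see that $!\circ f = {!}$, and then applies $(\rho_n)^{-1}$; your componentwise check against $\pi_1$ and $\pi_2$ is the same terminality argument in slightly different clothing. Your observation that no monad structure is needed is also accurate.
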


\begin{proof}
Consider the following commuting diagram.
\begin{equation}
    \vcenter{\xymatrix@R-.8pc{
T1 \times 1^m 
\ar[rr]^{\id \times f}
\ar@/_/[ddr]_{\rho_m}
\ar[dr]^{\id \times !}
&& 
T1 \times 1^n
\ar@/^/[ddl]^{\rho_n}
\ar[dl]_{\id \times !}
\\
& 
T1 \times 1
\ar[d]_{\rho}
\\
& T1
}}
\end{equation}
Thus~$ \rho_m = \rho_n \circ (\id \times f)$,
whence
    $(\rho_n)^{-1} \circ \rho_m = \id \times f$.
\end{proof}

We now introduce the main technical lemma, crucial to our results. It expresses the following idea:
as we study the preservation of a one-drop equation by a monad $T$ on the trivial algebra $1$, recall diagram~\eqref{preservediag}. The subdiagrams labelled \ccld{R_1} and \ccld{R_2} may or may not commute, but if we precompose them with a certain morphism $\alpha$, we ensure that one of them commutes and that the other characterises the affineness of $T$.

%\red{should we say more about this generalising Gautam's idea? Might take a lot of space for just some insight on the proof}

\begin{lemma}\label{lemma:alpha}
Let $t_1=t_2$ be a one-drop equation with $t_1, t_2 \in \Sigma^* V$ and $\Var(t_1) \cup \Var(t_2) = V$. 
Let $T \colon \Set \rightarrow \Set$ be a monoidal monad. 
Then there exists an object $B$ and a morphism $\alpha\colon  B \to (T1)^{|V|}$ such that:
\begin{enumerate}
\item The following diagram commutes:
\begin{equation}
    \vcenter{\xymatrix@R-1pc{
& B \ar[r]^{\alpha}  \ar[d]_{\alpha}  & (T1)^{|V|} \ar[r]^{\prepare[T 1](t_2)} &  (T1)^{k_2} \ar[d]^{\te}\\
&(T1)^{|V|} \ar[r]_{\te}  & T(1^{|V|}) \ar[r]_{T\prepare[1](t_2)}&   T(1^{k_2})
}}
\label{alphacom}
\end{equation}
\item If the following diagram commutes, then $T$ is affine:
\begin{equation}
    \vcenter{\xymatrix@R-1pc{
& B \ar[r]^{\alpha}  \ar[d]_{\alpha}  & (T1)^{|V|} \ar[r]^{\prepare[T 1](t_1)} &  (T1)^{k_1} \ar[d]^{\te}\\
&(T1)^{|V|} \ar[r]_{\te}  & T(1^{|V|}) \ar[r]_{T\prepare[1](t_1)}&   T(1^{k_1})
}}
\label{alphanotcom}
\end{equation}
\end{enumerate}
Or the other way around, by substituting $t_1$ and $t_2$.
\end{lemma}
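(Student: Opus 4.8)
\textbf{Proof proposal for Lemma~\ref{lemma:alpha}.}

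The plan is to build the object $B$ and the map $\alpha$ by hand, exploiting the fact that a one-drop equation has a variable — say $x_j$ — occurring exactly once on one side (say in $t_1$) and not at all on the other side ($t_2$). Write $V = \{x_1,\dots,x_p\}$ with $p = |V|$. Since we are working over the trivial algebra $1$, every map into a power of $1$ is constant, so the only "interesting" coordinate of $(T1)^{|V|}$ is the one that will be fed genuine data. I would take $B \equiv T1 \times 1^{p-1}$ and define $\alpha \colon T1 \times 1^{p-1} \to (T1)^{|V|}$ to be $\eta_1$ in all coordinates \emph{except} the $j$-th, where it is the projection $\pi_1 \colon T1 \times 1^{p-1} \to T1$ (using $!\colon 1^{p-1}\to 1$ and $\eta_1$ to fill the remaining coordinates). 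Informally: $\alpha$ places the "real" element of $T1$ in the slot of the variable $x_j$ that survives only in $t_1$, and places the unit $\eta_1(\ast)$ everywhere else.

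For item~(1), the point is that $t_2$ does not mention $x_j$ at all, so every argument position of $\Arg(t_2)$ is a projection onto one of the other coordinates, where $\alpha$ agrees with $\eta_1 \circ {!}$. Then $\prepare[T1](t_2) \circ \alpha$ and $T\prepare[1](t_2) \circ \te \circ \alpha$ become equal because on those coordinates $\te \circ (\eta_1 \times \cdots \times \eta_1) = \eta_{1^{(-)}}$ by the monoidality of $\eta$, diagram~\eqref{diag:monoidal:unit}, together with naturality of $\te$ to commute $\prepare$ past it; since $\prepare[1](t_2)$ is itself just a tuple of projections between powers of $1$, Lemma~\ref{lemma:rhofinal} (and Lemma~\ref{lemma:monoidalrhon}) let me normalise everything through $T1 \xrightarrow{\cong} T(1^m)$ and conclude the square commutes. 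Concretely I would factor $\alpha$ as $(\id \times {!}) $ followed by the "insert $\eta_1$" map, and push the $\eta_1$'s through using~\eqref{diag:monoidal:unit}, reducing the whole square to an instance of the coherence already packaged in Lemma~\ref{lemma:rhofinal}.

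For item~(2), assume the analogous square for $t_1$ commutes. Now $t_1$ \emph{does} contain $x_j$, exactly once, so exactly one coordinate of $\Arg(t_1)$ is the projection $\pi_j$ picking out the genuine $T1$-component of $B$, while all the others are $\eta_1\circ{!}$ as before. Chasing $\alpha$ through both legs and cancelling the trivial coordinates (again via~\eqref{diag:monoidal:unit} and Lemma~\ref{lemma:rhofinal}), the commuting square collapses to the statement that the composite $T1 \xrightarrow{\langle \id, \eta_1\circ{!},\dots\rangle} T1 \times (T1)^{m-1} \xrightarrow{\te} T(1^m) \xrightarrow{T\rho} T1$ equals the composite that first applies $\te$ to all-but-one coordinate being $\eta_1$ — i.e., using Lemma~\ref{lemma:monoidalrhon}, it collapses to $\eta_1 \circ {!}_{T1} = \id_{T1}$, which is exactly diagram $\spadesuit$ of Definition~\ref{definition:affine}: $T1$ is final. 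Hence $T$ is affine. The final sentence "or the other way around" is immediate by symmetry: if instead the dropped variable sits on the $t_2$ side, swap the roles of $t_1$ and $t_2$ throughout.

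The main obstacle I anticipate is purely bookkeeping: carefully tracking how $\prepare$ and $\evaluate$ interact with $\te$ across the many trivial coordinates, and making sure the "insert $\eta_1$" normalisation genuinely reduces to the $n$-ary unitor coherence of Lemma~\ref{lemma:monoidalrhon} rather than requiring a fresh diagram chase. The conceptual content is light — it is entirely the observation that over $1$ the only non-degenerate coordinate is the one carrying the drop-variable — but writing the chase so that the residual square becomes \emph{literally} diagram $\spadesuit$ will require choosing the factorisation of $\alpha$ and the associated unitors with some care.
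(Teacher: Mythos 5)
Your choice of $B = T1 \times 1^{|V|-1}$ and of $\alpha$ (identity on the coordinate of the distinguished variable, $\eta_1\circ{!}$ on the others) is exactly the paper's, and so is the overall strategy of reducing both squares to unitor coherence over powers of $1$. However, you have attached the two conclusions to the wrong sides of the equation, and this is a genuine error rather than a labelling convention. With your convention ($x_j$ occurs once in $t_1$ and not at all in $t_2$), the square that commutes \emph{unconditionally} is the one for $t_1$, the side that \emph{contains} $x_j$: there $\prepare[T1](t_1)\circ\alpha$ is a tuple with one genuine $T1$-component and $\eta_1$'s elsewhere, and by Lemma~\ref{lemma:monoidalrhon} both legs reduce, through the isomorphisms $T(1^m)\cong T1$, to the identity on $T1$. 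The square for $t_2$, the side that \emph{drops} $x_j$, is the one whose commutativity forces affineness. Your argument for item (1) breaks precisely here: on the top leg $\prepare[T1](t_2)$ discards the genuine coordinate \emph{before} $\te$ is applied, so that leg is $\eta_{1^{k_2}}$ composed with a map into $1^{k_2}$; but on the bottom leg $\te\circ\alpha$ merges the genuine $T1$-coordinate into $T(1^{|V|})\cong T1$ \emph{first}, and $T\prepare[1](t_2)$ is then $T$ applied to a map between powers of $1$, hence essentially $\id_{T1}$ under the isomorphisms --- it cannot erase the genuine element. The two legs therefore differ by exactly $\eta_1\circ{!}$ versus $\id_{T1}$, which is diagram $\spadesuit$ of Definition~\ref{definition:affine}. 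Concretely, for $T=\po$ and the equation $x\cdot y = y$, the top leg of the square for the dropping side sends $\emptyset\in\po 1$ to $\{\ast\}$ while the bottom leg sends it to $\emptyset$, so that square does not commute for a non-affine monad. Symmetrically, your item (2) computation for the containing side collapses to the tautology $\id=\id$, not to $\eta_1\circ{!}=\id$, so no affineness can be extracted from it.

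Once the two squares are swapped back --- containing side commutes always, dropping side detects affineness --- your argument becomes essentially the paper's proof. The paper fixes the opposite convention (the variable occurs once in $t_2$ and not in $t_1$) precisely so that its item (1) is the containing side and its item (2) is the dropping side; the closing clause ``or the other way around'' covers your convention, under which you should have proved item (1) for $t_1$ and item (2) for $t_2$.
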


%We can now give a detailed proof of the main result of section \ref{sec:affine}.
\begin{proof}
Let $t_1=t_2$ be a one-drop equation. Without loss of generality, we can say that there exists a variable $x$ appears once in $t_2$ but not in $t_1$. Again without loss of generality, we can reorder our variables and assume that $x$ is the first variable in $V$. Let $|V|=n+1$. We define $B=T1\times 1^n$ and $\alpha= \id \times (\eta_1)^n\colon  T1\times 1^n \to (T1)^{n+1}$.

Note that because the first variable is dropped in $t_1$, $\prepare(t_1)$ can be decomposed as $\prepare(t_1)= \rho \circ(! \times \beta)$ where $\beta$ is a natural transformation carrying out the rearrangement of the remaining inputs. Similarly, because $x$ appears only once in $t_2$, $\prepare(t_2)$ can be decomposed as $\prepare(t_1)= (\beta_2) \circ(\id \times \beta_1)$. The first input is not modified,  $\beta_1$ is the natural transformation that may drop or duplicate some of the other inputs, and finally $\beta_2$ realises a permutation of all its inputs.

We start with point 1.
Note that $k_2=Arg(t_2)>0$ as $t_2$ features at least one variable by assumption. Let us subdivide the diagram as follows:

%\jurriaan{is $k_2 > 0$?}
\xymatrix@C-.2pc@R-.3pc{
T1 \times 1^n 
\ar@/^0.2pc/@{{}{ }{}}[drr]^{\ccld{c}}
\ar@/^0.2pc/@{{}{ }{}}[dddrr]^{\ccld{b}}
\ar[r]^{\id \times (\eta_1)^n}
\ar[drr]_{\id \times \beta_1}
\ar[dd]_{\id \times (\eta_1)^n}
\ar[ddr]_{\rho_n}
& 
(T1)^{n+1}
\ar[r]^{\id\times \beta_1}
&
(T1)^{k_2}
\ar@/^2.2pc/[ddd]^{\te}
\ar[r]^{\beta_2}
&
(T1)^{k_2}
\ar@{}[ddd]|{\ccld{e}}
\ar@/^1.5pc/[ddd]^{\te}
\\
&&
T1\times 1^{k_2-1}
\ar[u]^{\id \times \eta_1^{k_2-1}}
\\
(T1)^{n+1}
\ar@{}[r]|{\ccld{a}}
\ar[d]_{\te}
&
T1
\ar@/_1.2pc/@{{}{ }{}}[rruu]_{\ccld{d}}
\ar[rd]_{T (\rho_{k_2-1})^{-1}}
\ar[ur]_{(\rho_{k_2-1})^{-1}}
&
%T1\times 1^{k_2-1}
%\ar[uu]^{\id \times \eta_1^{k_2-1}}
\\
T(1^{n+1})
\ar@/^1.3pc/@{{}{ }{}}[rr]^{\ccld{f}}
\ar[ur]_{T \rho_n}
\ar[rr]_{T(\id \times \beta_1)}
&&
T(1^{k_2})
\ar[r]_{T \beta_2}
&
T(1^{k_2})
}

\ccld{a} and \ccld{d} commute by Lemma~\ref{lemma:monoidalrhon}. \ccld{b} and \ccld{f} commute by Lemma~\ref{lemma:rhofinal}. \ccld{c} commutes by naturality of $\beta_1$. Finally, \ccld{e} commutes because $T$ is a monoidal monad: as it is shown in \cite{layers}, since $\beta_2$ encodes a permutation this diagram corresponds to a residual diagram for a linear equation, which is known to commute. We've proven point 1.

Next up is point 2.
We assume commutation of the diagram \eqref{alphanotcom} and show the commutation of the following diagram. Note that the outer diagram amounts to $\eta_1 \circ ! = \id$ and characterises the affineness of $T$.
%
%\xymatrix{
%T1 
%\ar@/_3.4pc/@{{}{ }{}}[dddd]^{\ccld{b}}
%\ar@{}[rrrd]|{\ccld{a}}
%\ar[d]^{{\rho_n}^{-1}}
%\ar[rrrr]^{!}
%\ar@/_4.0pc/[dddd]_{\id}
%&&&& 
%1
%\ar@/_3.2pc/@{{}{ }{}}[dddd]^{\ccld{e}}
%\ar@{}[dddd]^{\ccld{f}}
%\ar@/^2.0pc/[dddd]^{\eta_1}
%\\
%T1\times 1^n
%\ar[r]^{! \times \id}
%\ar[d]_{\id \times (\eta_1)^n}
%&
%1\times 1^n
%\ar@{}[rrd]|{\ccld{d}}
%\ar[d]_{\id \times (\eta_1)^n}
%%\ar[urrr]^{\rho_n}
%\ar[r]_{\rho}
%&
%1^n
%\ar[r]_{\beta}
%&
%1^{k_1}
%\ar[ur]_{!}
%\ar@/^2.0pc/[dd]^{\eta_{1^{k_1}}}
%\ar[d]_{(\eta_1)^{k_1}}
%\\
%T1 \times(T1)^n
%\ar@{}[rrrd]|{\ccld{c}}
%\ar[d]_{\te}
%&
%1 \times(T1)^n
%\ar[r]^{\id \times \beta}
%& 
%1 \times(T1)^{k_1}
%\ar[r]^{\rho}
%& (T1)^{k_1}
%\ar[d]_{\te}
%\\
%T(1^{n+1})
%\ar@{}[rrrd]|{\ccld{g}}
%\ar[r]_{T(!\times \beta)}
%\ar[d]^{T\rho_n}
%&
%T(1\times 1^{k_1})
%\ar[rr]_{T(\rho)}
%&&
%T(1^{k_1})
%\ar[dr]_{T!}
%\\
%T1
%&&&& 
%T1
%\ar[llll]_{\id}
%}

\xymatrix@R-.5pc{
T1 
\ar@/_2.9pc/@{{}{ }{}}[ddddd]^{\ccld{b}}
\ar@{}[rrrd]|{\ccld{a}}
\ar[d]^{{\rho_n}^{-1}}
\ar[rrr]^{!}
\ar@/_3.29pc/[ddddd]_{\id}
&&&
1
\ar@{}[ddddl]^{\ccld{f}}
\ar[ddddd]^{\eta_1}
\\
T1\times 1^n
\ar[r]^{! \times \id}
\ar[d]_{\id \times (\eta_1)^n}
&
1\times 1^n
\ar@{}[rdd]|{\ccld{d}}
\ar[d]_{\id \times (\eta_1)^n}
%\ar[urrr]^{\rho_n}
\ar[r]_{\rho}
&
1^n
\ar[d]_{\beta}
&
%1^{k_1}
%\ar[ur]_{!}
%\ar@/^2.0pc/[ddd]^{\eta_{1^{k_1}}}
%\ar[d]_{(\eta_1)^{k_1}}
\\
T1 \times(T1)^n
\ar@{}[rdd]|{\ccld{c}}
\ar[dd]_{\te}
&
1 \times(T1)^n
%\ar[r]^{\id \times \beta}
\ar[d]^{\id \times \beta}
& 
1^{k_1}
\ar@/^0.8pc/@{{}{ }{}}[dd]^{\ccld{e}}
\ar[uur]_{!}
\ar@/^2.2pc/[dd]^{\eta_{(1^{k_1})}}
\ar[d]_{(\eta_1)^{k_1}}
%\ar[r]^{\rho}
%& (T1)^{k_1}
\\
&
1 \times(T1)^{k_1}
\ar[r]_{\rho}
& (T1)^{k_1}
\ar[d]_{\te}
\\
T(1^{n+1})
\ar@{}[rrrd]|{\ccld{g}}
\ar[r]_{T(!\times \beta)}
\ar[d]^{T\rho_n}
&
T(1\times 1^{k_1})
\ar[r]_{T(\rho)}
&
T(1^{k_1})
\ar[dr]_{T!}
\\
T1
&&&
T1
\ar[lll]_{\id}
}

\ccld{a} and \ccld{g} commute by finality. \ccld{b} commutes by property \eqref{diag:monofunctorn}. \ccld{c} corresponds to our assumption \eqref{alphanotcom}. \ccld{d} commutes by naturality of $\rho$ and \ccld{f} by naturality of $\eta$, whereas \ccld{e} commutes by monoidal property \eqref{diag:monoidal:unit}.
\end{proof}

\begin{proof}[Proof of Lemma~\ref{lemma:affine1}]
$t_1=t_2$ trivially holds on $1$, therefore we have:
\begin{equation}
\evaluate[1](t_1) \circ \prepare[1](t_1) = \evaluate[1](t_2) \circ \prepare[1](t_2)
\label{eq:holdson1}
\end{equation}
Let $\alpha$ be the morphism given by Lemma~\ref{lemma:alpha}. The equation is preserved by $T$, hence it holds on $T1$. Then we have:

\begin{align*}
& T\evaluate[1](t_1) \circ  \te \circ \prepare[T1](t_1) \circ \alpha \\
& \qquad\  =\ \evaluate[T1](t_1) \circ \prepare[T1](t_1) \circ \alpha 
& \text{Lemma }~\ref{lemma:preservediag}\\
&\qquad\  =\ \evaluate[T1](t_2) \circ \prepare[T1](t_2) \circ \alpha
& t_1=t_2\text{ holds on }T1\\
& \qquad\ =\ T\evaluate[1](t_2) \circ \te  \circ \prepare[T1](t_2) \circ \alpha
& \text{Lemma }~\ref{lemma:preservediag}\\
& \qquad\ =\ T\evaluate[1](t_2) \circ T \prepare[1](t_2) \circ \te   \circ \alpha
& \eqref{alphacom}\\
& \qquad\ =\ T\evaluate[1](t_1) \circ T \prepare[1](t_1) \circ \te   \circ \alpha
& \eqref{eq:holdson1}\\
\end{align*}
%\jurriaan{$x * x$??}
The map $\evaluate[1](t_1)$ is an isomorphism, therefore we can precompose the previous equality with $T \evaluate[1](t_1)^{-1}$ and obtain \eqref{alphanotcom}, hence $T$ is affine by Lemma~\ref{lemma:alpha}.
%
%
%\begin{align*}
%\evaluate[T1](t_1) \circ \prepare[T1](t_1) &= \evaluate[T1](t_2) \circ \prepare[T1](t_2)\\
%T\evaluate[1](t_1) \circ  \te \circ \prepare[T1](t_1) &= T\evaluate[1](t_2) \circ  \te  \circ \prepare[T1](t_2)
%&& \text{lemma } \ref{lemma:aandc}
%\\
%T\evaluate[1](t_1) \circ  \te \circ \prepare[T1](t_1) \circ \alpha &= T\evaluate[1](t_2) \circ  \te \circ \prepare[T1](t_2)  \circ \alpha
%&& \text{applying $\alpha$}
%\\
%T\evaluate[1](t_1) \circ  \te \circ \prepare[T1](t_1) \circ \alpha &= T\evaluate[1](t_2) \circ  T\prepare[1](t_2) \circ \te  \circ \alpha
%&& \text{lemma }\ref{lemma:alpha}
%\\
%T\evaluate[1](t_1) \circ  \te  \circ\prepare[T1](t_1) \circ \alpha &= T\evaluate[1](t_1) \circ  T\prepare[1](t_1) \circ \te  \circ \alpha
%&& \eqref{eq:holdson1}
%\\
%\te \circ \prepare[T1](t_1) \circ \alpha &=   T\prepare[1](t_1) \circ \te  \circ \alpha
%&& \evaluate[1](t_1) \text{is an isomorphism}
%\end{align*}
%This last equality, by lemma \ref{lemma:alpha} implies that $T$ is affine.
\end{proof}

\begin{proof}[Proof of Theorem~\ref{thm:affine-undecidable}]
With respect to the proof in the main text, 
what remains to be shown is that $T1$ is isomorphic to $\mathcal{M}$. We write $e$ for the unit of $\mathcal{M}$. Each element of $T1$ can be seen as a $(\Sigma,E)$-term on generator $*$,
through $F \colon T1 \rightarrow \mathcal{M}$ defined by $F(x) = e$ and 
$F(f_{g_1}(f_{g_2}(\dots f_{g_n}(*)\dots )))  = g_1 g_2  \dots  g_n$.
%\begin{align*}
%F: x &\mapsto e\\
%f_{g_1}(f_{g_2}(\dots f_{g_n}(*)\dots )) &\mapsto g_1 g_2  \dots  g_n
%\end{align*}
$F$ is an isomorphism: by construction, $F(u)=F(v) \Leftrightarrow u=v$ for $u,v$ terms of $T1$, and all $m \in \mathcal{M}$ can be written as $F(t)$ for $t \in T1$.
Hence $T1=1$ iff $\mathcal{M}$ is trivial. This reduction gives that affineness is undecidable.
\end{proof}

\section{Details for Section \ref{sec:dup}}
\label{sec:appendix_dup}

The proofs of several results have been omitted in section~\ref{sec:dup}.

\begin{proof}[Proof of Lemma \ref{lemma:fgsleeve}] For $(i)$, remark that $(\id \times !)$ is an epimorphism, which gives immediately $f=g$ . For the case $(ii)$, we remark that $(\id \times T!)$ is an epimorphism and we have:
\begin{align*}
f \circ T(\id \times !) \circ \te_{X,1}  &=g  \circ T(\id \times !) \circ \te_{X,1}\\
f \circ \te_{X,1} \circ (\id \times T!) &=g  \circ \te_{X,1} \circ (\id \times T!) & \text{ naturality of }\te \\
f \circ \te_{X,1}&=g  \circ \te_{X,1}& \text{ epimorphism} \\
f \circ \te_{X,1} \circ (\id \times \eta_1) \circ \rho_X^{-1} \circ T \rho_X
&=
g  \circ \te_{X,1} \circ (\id \times \eta_1) \circ \rho_X^{-1} \circ T \rho_X \\
f&=g & \eqref{diag:monoidal:left_unit}
\end{align*}

%
%
% becomes $f \circ T\rho \circ \te_{X,1} 
% \ =\  g \circ T\rho \circ \te_{X,1}$, hence we have
%\begin{align*}
%f &\ =\ f \circ T\rho \circ \te_{X,1} \circ (\id \times \eta_1) \circ \rho^{-1}  & \eqref{diag:monoidal:left_unit}\\
%&   \ =\ g \circ T\rho \circ \te_{X,1} \circ (\id \times \eta_1) \circ \rho^{-1} \\
%&   \ =\ g.   &  \eqref{diag:monoidal:left_unit}  \quad \qedhere
%\end{align*}

\end{proof}

\begin{proof}[Proof of Lemma \ref{msatisfies}]
We show that the evaluations of $(yx)z$ and $(y(xx))z$ are  equal when variables $x,y,z$ are substituted with elements of $X^5$ (that we will also write $x,y,z$ for simplicity) and the beach binary operation is interpreted with $m$. In this framework, this means that on both sides of the equality \eqref{bigsleeve}, the wires reaching the top of the diagrams are the same.

We trace down each thread to explain the construction:
    the first wire is labelled $L^*$ because it connects the first component of the output to the first component of the left input.    Therefore, no matter how many occurrences of~$m$ are connected in whichever way, this thread
    always leads to the wire on the far left. In both cases $(yx)z$ and $(y(xx))z$, it will connect to the first component of $y$.
    Similarly the~$R^*$ wire goes through every occurrence of $m$ on the far right
        (in our case ending up in~$z$).
The~$LR^*$ wire turns left, then connects to the $R^*$ output of the next box: from that point, just like the previous thread, it always goes right. In the case of the left hand side term of \eqref{yxz}, it connects directly to~$x$ on its second component. In the term $(y(xx))z$, it goes through one more $m$-box and connects to the second component of $x_2$.
The next wire~$RL^*$ is connected to the $L^*$ thread of the right input, which for both sides of the equations connects to the first component of the variable~$z$.
The final wire~$LRL^*$ takes it a step further,
    going left then right and connecting to a~$L^*$ output. On the left hand side of the equation, it connects to the first component of~$x$. On the right hand side, it goes through one more instance of $m$ and connects to the first component of $x_1$.
Therefore the outputs of both sides of the equation match, in other words $m$ satisfies $x  (y  y) = yx$ .
\end{proof}

\begin{proof}[Proof of Proposition~\ref{prop:relevance-weakening}]
We proceed by induction on $n$, the case $n=2$ is trivial. We assume
that $T$ is relevant and $(n-1)$-relevant, we show that it is
$n$-relevant:
\begin{align*}
\te^n \circ \Delta^n 
    &\ =\ \te \circ (\id \times \te^{n-1}) \circ (\id \times \Delta^{n-1}) \circ  \Delta 
\ =\ \te \circ (\id \times T \Delta^{n-1}) \circ  \Delta \\
    & \ =\ (\id \times T \Delta^{n-1}) \circ \te \circ  \Delta 
\ =\ (\id \times T \Delta^{n-1}) \circ T \Delta 
     \ =\ T \Delta^n && \qedhere
\end{align*}
\end{proof}

\begin{proof}[Proof of Proposition~\ref{prop:relevant-and-affine}]
Note that affineness implies $\te \circ \langle \pi_1 , \pi_2 \rangle
= T \langle \pi_1 , \pi_2 \rangle \circ \te^n$ (see for instance
\cite{layers} as this equality corresponds to a residual diagram
for the equation $(x\cdot y)\cdot z=x\cdot y$)
and so
\begin{align*}
\te \circ \Delta 
\  =\  \te \circ \langle \pi_1 , \pi_2 \rangle \circ \Delta^n
\  =\  T \langle \pi_1 , \pi_2 \rangle \circ \te^n \circ \Delta^n 
\  =\  T \langle \pi_1 , \pi_2 \rangle \circ T \Delta^n 
    \  =\  T \Delta,
\end{align*}
where the final equality is due to the~$n$-relevance of~$T$.
\end{proof}

\begin{proof}[Proof of Theorem~\ref{thm:nrelevance}]
$(\Leftarrow)$:
Consider an algebra $\algb$ where  $f^n(x, \dots x)=x$ holds.
We refer to diagram \eqref{preservediag} and show that $\ccld{D_1}$ and $\ccld{D_2}$ commute. $\id \circ \prepare[\widehat{T} \algb](x) = T \prepare[\algb](x) \circ \id$ trivially holds because $\prepare(x)=\prepare[\widehat{T} \algb](x)=\id$. Furthermore, note that $\prepare(f^n(x, \dots x))= \prepare[\widehat{T} \algb](f^n(x, \dots x))= \Delta^n$, therefore by $n$-relevance we have: $\te^n \circ \prepare[\widehat{T} \algb](f^n(x, \dots x)) = T \prepare[\algb] (f^n(x, \dots x)) \circ \id$. By Lemma~\ref{lemma:preservediag}, the equation is preserved.

$(\Rightarrow)$: We assume that $f^n(x, \dots x)=x$ is preserved on every algebra. Then in particular, it is preserved for the $n$-ary operation $f^n= \pi_1 \times \dots \times \pi_n $ defined on a set $A^n$. The equation holds, in other words we have $(\pi_1 \times \dots \times \pi_n)  \circ \Delta^n = \id$. Then by the same reasoning as in Theorem~\ref{relevantiffdup}, we obtain:
\begin{align*}
    \id &\  =\  Tf^n \circ \te^n \circ \Delta^n
 \ =\  Tf^n \circ \te^n \circ \chi^n \circ T\Delta^n
 \ =\ T(\pi_1 \times \dots \times \pi_n) \circ \te^n \circ \chi^n \circ T\Delta^n\\
    &\ =\ \te^n  \circ (T\pi_1 \times \dots \times T\pi_n) \circ \chi^n \circ T\Delta^n
    \ =\ \te^n  \circ \chi^n \circ T(\pi_1 \times \dots \times \pi_n)  \circ T\Delta^n\\
    &\ =\  \te^n \circ \chi^n \circ T((\pi_1 \times \dots \times \pi_n)  \circ \Delta^n)
 \ =\  \te^n \circ \chi^n   \qedhere
\end{align*}
\end{proof}

\begin{proof}[Proof of Proposition~\ref{prop:multiset-preserves}]
Before we prove this, we note that $x  (y  y) = yx$
    implies~$xy=yx$:
\begin{equation*}
    xy \,=\, y  (x  x) 
        \,=\, (x  x)  (y  y)
        \,=\, (x  x)  (y  (y  y))
        \,=\, (x  x) ((yy)  (y  y))
        \,=\, (yy)  (x  x) 
        \,=\, x  (yy)
        \,=\, yx.
\end{equation*}
To show~$\mzt$ preserves $x  (y  y) = yx$,
    suppose~$m \colon X^2 \to X$ is given
        with~$m(x, m(y,y)) = m(y,x)$ for all~$x,y \in X$.
Note~$
    \overline{m}(\xi, \xi')
    =  \sum_{x,x'} \xi(x) \xi'(x') m(x,x')$
    for any~$\xi,\xi' \in \mzt X$,
    thus
    \begin{equation*}
        \overline{m}(\xi,\xi)
           \ =\  \sum_x \xi(x)^2 m(x,x)
           \ =\  \sum_x \xi(x) m(x,x)
    \end{equation*}
            as the off-diagonals cancel each other
                due to~$m(x,y) + m(y,x) = 2 m(x,y) = 0$.
                Hence
\begin{equation*}
\overline{m} (\xi, \overline{m} (\xi', \xi'))
      \ =\  \sum_{x,y \in X} \xi(x) \xi'(y) m(x, m(y,y)) 
      \ =\  \sum_{x,y \in X} \xi(x) \xi'(y) m(y, x) 
      \ =\  \overline{m}(\xi', \xi).
\end{equation*}
Thus we have shown that~$\mzt$
    preserves the 2-dup non-drop equation~$x(yy) = yx$.
\end{proof}

\begin{proof}[Proof of Proposition~\ref{propalgrel}]
Assume~\eqref{algebraicrelevance} holds.
    Let~$X$ be given.
    Any element of~$T (X\times X)$
    is of the form~$M[ \vec{x} \otimes \vec{y} ]$,
        where~$M$ is a~$\Th$-term
        and~$\vec{x}\otimes \vec{y} \equiv ( (x_1,y_1), \ldots,
                (x_n, y_n) )$ for~$x_i,y_i \in X$.
    Note that~$\chi(M[\vec{x} \otimes \vec{y}])
        = (M[\vec{x}], M[\vec{y}])$.
We will prove by induction over~$M$
    that
\begin{equation}\label{algebraicrelevanceih}
    \psi(M[\vec{x}], M[\vec{y}]) \ =\  M[\vec{x} \otimes \vec{y}],
\end{equation}
which shows that~$T$ is relevant.
So assume~$f$ is any~$n$-ary operation of~$\Th$
    and~$M_i$ are $\Th$-terms for which~\eqref{algebraicrelevanceih} holds.
We compute
\begin{equation*}
    \begin{split}
    \psi \bigl(\,
    \vec{f} (( M_i[\vec{x}] )_i )\,,\,
    \vec{f} (( M_i[\vec{y}] )_i ) \,\bigr) 
        & \ = \ 
    \mtr{f} \bigl(\, (\psi(M_i[\vec{x}], M_j[\vec{y}]) )_{ij} \,\bigr) \\
        &
     \ \overset{\mathclap{\eqref{algebraicrelevance}}}{=} \ 
        \vec{f} \bigl(\,(\psi(M_i[\vec{x}], M_i[\vec{y}]))_i\,\bigr) 
     \ \overset{\mathclap{\eqref{algebraicrelevanceih}}}{=} \ 
        \vec{f} \bigl(\,(\psi(M_i[\vec{x} \otimes \vec{y}]))_i\,\bigr)
\end{split}
\end{equation*}
    and so indeed~\eqref{algebraicrelevanceih} holds for all~$M$
        by induction and so~$T$ is relevant.
    The proof of the converse is straightforward.
\end{proof}

\begin{proof}[Proof of Theorem~\ref{thmdiscerning}]
We will prove the result for a finitary monad --- that is, a monad that
is presentable by an algebraic theory.  The argument for arbitrary monads,
which are presented by infinitary algebraic theories, is similar but heavier
on paper.

So assume~$T$ is a finitary monad presented by an algebraic theory~$\Th$,
    such that~$T$ preserves a~$2$-discerning
    equation~$t_1 =t_2$.
Suppose~$f$ is any~$n$-ary operation of~$\Th$.
Write~$m$ for the number of variables in~$t_1$
    aside from the duplicated one.
Let~$X$ be any set with~$n\times n$ matrix~$(x_{ij})_{ij}$
    over it.  We have to show that~\eqref{algebraicrelevance} holds.
    We will take a slight detour:
let~$Y$ denote the free model of~$t_1=t_2$ over~$\{
    y_1, \ldots, y_n, r_1, \ldots, r_m\}$.
Using preservation of~$t_1=t_2$, we see that
\begin{equation*}
    \begin{split}
    \mtr{f} (\, (s_2 [y_i, y_j, \vec{r} ])_{ij} \, ) 
        &\ = \ \overline{s_2} [ \vec{f}(\vec{y}), \vec{f}(\vec{y}), \vec{r} ] 
     \ = \ \overline{t_2} [ \vec{f}(\vec{y}), \vec{r} ] 
     \ = \ \overline{t_1} [ \vec{f}(\vec{y}), \vec{r} ]  \\
        & \ = \ \vec{f}( \, (t_1[y_i, \vec{r}])_i \,) 
     \ = \ \vec{f}( \, (t_2[y_i, \vec{r}])_i \,) 
     \ = \ \vec{f}( \, (s_2[y_i, y_i, \vec{r}])_i \,).
    \end{split}
\end{equation*}
As~$t_1=t_2$ is~2-discerning
    the terms~$s_2[y_i, y_j, \vec{r}]$ are distinct
        and so there exists a map~$h\colon Y \to X$
        such that~$h(s_2[y_i,y_j, \vec{r}]) = x_{ij}$.
    Hence
\begin{equation*}
\begin{split}
 \mtr{f} ((x_{ij})_{ij})
      &\ =\  \mtr{f} \bigl(\,(h( s_2[y_i,y_j, \vec{r}] ))_{ij}\,\bigr) 
      \ =\  (Th)\bigl(\mtr{f} ( \, (s_2[y_i,y_j, \vec{r}])_{ij}\, )\bigr)  \\
      &\ =\  (Th)\bigl(\vec{f} ( \, (s_2[y_i,y_i, \vec{r}])_i \, )\bigr) 
      \ =\  \vec{f} ( \, (h(s_2[y_i,y_i, \vec{r}]))_i \, ) 
      \ =\  \vec{f} (\, (x_{ii})_i)\, ),
\end{split}
\end{equation*}
and so~$T$ is relevant by Proposition~\ref{propalgrel}.
\end{proof}

%
%
%\section{stuff removed from main text}
%
%\subsection{section 3}
%
%\section{Details of definitions}\label{ap:extra}
%
%
%\section{Omitted proofs}
%
%
%

\end{document}